\documentclass[10pt,superscriptaddress,nofootinbib,nobibnotes,tightenlines,twocolumn,
amsmath,amssymb,aps,prl]{revtex4-2}

\usepackage[T1]{fontenc}
\usepackage{amsmath,amssymb,amsthm,mathtools}
\usepackage{microtype}
\usepackage{graphicx}
\usepackage{xcolor}
\usepackage[
  colorlinks=true,
  linkcolor=blue!45!black,
  citecolor=blue!45!black,
  urlcolor=blue!45!black
]{hyperref}

\newtheorem{theorem}{Theorem}
\newtheorem{proposition}[theorem]{Proposition}
\newtheorem{lemma}[theorem]{Lemma}
\newtheorem{corollary}[theorem]{Corollary}
\theoremstyle{remark}
\newtheorem{remark}[theorem]{Remark}

\newcommand{\Tr}{\operatorname{Tr}}
\newcommand{\Herm}{\operatorname{Herm}}
\newcommand{\ran}{\operatorname{ran}}
\newcommand{\SEP}{\operatorname{SEP}}
\newcommand{\FS}{\operatorname{FS}}
\newcommand{\Rg}{R_{\mathrm g}}
\newcommand{\Rs}{R_{\mathrm s}}
\newcommand{\Rob}{R}
\newcommand{\Neg}{\mathcal N}
\newcommand{\id}{I}
\newcommand{\ket}[1]{\lvert #1\rangle}
\newcommand{\bra}[1]{\langle #1\rvert}
\newcommand{\braket}[2]{\langle #1\vert #2\rangle}
\newcommand{\proj}[1]{\ket{#1}\!\bra{#1}}

\begin{document}

\title{The Minimal Dimension of Entangled-Noise Advantage}

\author{Xiao-Ke Wang}
\affiliation{National Laboratory of Solid State Microstructures and School of
Physics, Collaborative Innovation Center of Advanced Microstructures,
Nanjing University, Nanjing 210093, China}
\author{Zi-Yuan Liu}
\affiliation{National Laboratory of Solid State Microstructures and School of
Physics, Collaborative Innovation Center of Advanced Microstructures,
Nanjing University, Nanjing 210093, China}
\author{Ming-Yang Li}
\affiliation{National Laboratory of Solid State Microstructures and School of
Physics, Collaborative Innovation Center of Advanced Microstructures,
Nanjing University, Nanjing 210093, China}
\author{Shengjun Wu}
\email{sjwu@nju.edu.cn}
\affiliation{National Laboratory of Solid State Microstructures and School of
Physics, Collaborative Innovation Center of Advanced Microstructures,
Nanjing University, Nanjing 210093, China}
\author{Zeng-Bing Chen}
\email{zbchen@nju.edu.cn}
\affiliation{National Laboratory of Solid State Microstructures and School of
Physics, Collaborative Innovation Center of Advanced Microstructures,
Nanjing University, Nanjing 210093, China}

\date{September 15, 2026}

\begin{abstract}
Can entangled noise erase entanglement more efficiently than separable noise? Standard robustness restricts the added noise to separable states; generalized robustness allows any state. We prove that the two costs coincide for every two-qubit state and give an explicit full-rank qubit--qutrit state with a strict gap. Positivity under partial transpose (PPT) characterizes separability in both dimensions, so the boundary is not caused by a failure of the PPT criterion. Instead, product-vector geometry permits a rank-one bridge between the two-qubit dual optimizations and supplies a two-dimensional completely entangled subspace for the qubit--qutrit separation. Local isometric embeddings complete the finite-dimensional bipartite classification, also for any fixed multipartite cut. Combined with known three-qubit separation, the result classifies universal equality relative to full separability in all finite multipartite systems.
\end{abstract}

\maketitle

\textit{Introduction.---} Entanglement is a resource for quantum communication and information processing \cite{HorodeckiReview2009,ChitambarGour2019}. Among its measures, robustness asks for the least noise-to-signal ratio needed to make a state separable. Standard robustness restricts the added noise to separable states \cite{VidalTarrach1999}, whereas generalized robustness allows any state \cite{Steiner2003,HarrowNielsen2003}. Their optimizations have witness-dual formulations \cite{Brandao2005}. Separable noise can be prepared locally with shared randomness; entangled noise cannot be prepared with those resources alone. Comparing the two robustnesses asks whether entanglement in the added noise lowers the cost of erasing the input's entanglement. The measures also quantify advantages in distinct discrimination tasks, with different task definitions and normalizations \cite{TakagiRegula2019}. Thus their relation ties the cost of erasure to the resources available for preparing noise.

The two robustnesses agree for bipartite pure states \cite{VidalTarrach1999,Steiner2003}, but mixed states can exhibit a strict separation. Regula \emph{et al.} constructed an infinite-dimensional state with finite generalized and infinite standard robustness \cite{RegulaLamiFerrariTakagi2021,LamiRegulaTakagiFerrari2021}. Lami and Regula subsequently gave an explicit two-qutrit state with \(\Rs=3/4\) and \(\Rg=1/2\), in their analysis of entanglement irreversibility \cite{LamiRegula2023}. These examples show that entangled noise can help, but leave open the smallest bipartite dimension where it does and the dimensions in which equality holds for \emph{every} input. Locating that boundary determines when locally preparable noise always attains the unrestricted optimum and when entangled noise can reduce the cost.

We prove equality for all two-qubit states and give an explicit full-rank \(2\otimes3\) counterexample with an analytic separation certificate. Together these results complete the all-state finite-dimensional classification. Positivity under partial transpose (PPT) still characterizes separability on both sides of this boundary \cite{Peres1996,Horodecki1996}. The low-dimensional exception instead follows from product-vector geometry: every two-dimensional subspace contains a product vector in \(2\otimes2\), but not in \(2\otimes3\) \cite{NiuGriffiths1999,Parthasarathy2004}. Thus the result distinguishes an input-independent absence of noise-resource advantage from its mere absence in special states.

Verstraete and Verschelde obtained its semidefinite program (SDP) and rank-one/filter formulation through optimal teleportation \cite{VerstraeteVerschelde2003}. Their SDP permits arbitrary noise. Within witness duality \cite{Brandao2005}, we adapt their product-vector rank reduction to the \emph{standard} dual and establish the reverse comparison. The appended Supplemental Material (SM), Sec.~\ref{sm:sec:prior-formula-comparison}, gives a counterexample and feasible descent for the earlier generic Wootters-decomposition formula \cite{JafarizadehMirzaeeRezaee2005}, while distinguishing its valid Bell-decomposable special case \cite{AkhtarshenasJafarizadeh2003}.

The preparation question also depends on the party partition. Across a fixed cut, each grouped side may prepare an internally entangled state. With a fully separable target, we instead ask whether local preparation and shared randomness among all parties can attain the unrestricted-noise cost.

\textit{Two noise sets and their duals.---} Let \(\SEP_+\) be the cone of unnormalized separable positive operators. For a bipartite density operator \(\rho\), write the added noise as \(Y=s\omega\), where \(\omega\) is a state and \(s=\Tr Y\). The robustnesses are
\begin{align}
 \Rg(\rho)&=\min\{\Tr Y:Y\succeq0,\ \rho+Y\in\SEP_+\},
 \label{eq:def-rg}\\
 \Rs(\rho)&=\min\{\Tr Y:Y\in\SEP_+,\ \rho+Y\in\SEP_+\}.
 \label{eq:def-rs}
\end{align}
The normalized mixture is \((\rho+Y)/(1+s)\); hence \(s\) is the noise-to-signal ratio. Inclusion of the noise sets gives \(\Rg\le\Rs\).

Physically, unflagged random replacement of a specified input produces \((1-p)\rho+p\omega\), with the replacement event unrecorded. For noise class \(\nu\in\{\mathrm s,\mathrm g\}\), the least probability making the output separable is \(p_\nu^*=R_\nu(\rho)/(1+R_\nu(\rho))\). Standard noise can be prepared locally with shared randomness; generalized noise may require entanglement. Thus equality asks whether this preparation restriction leaves the erasure threshold unchanged. The optimization concerns admixture to a specified state, not arbitrary noise dynamics.

In the two-qubit duals and equality proof below, both local spaces are \(\mathbb C^2\). Let \(\Gamma\) transpose the second subsystem in a fixed product basis and put \(A=\rho^\Gamma\). Since positivity under partial transpose (PPT) is equivalent to separability for two qubits \cite{Peres1996,Horodecki1996}, the generalized constraints are \(Y\succeq0\) and \(A+Y^\Gamma\succeq0\); the standard problem additionally requires \(Y^\Gamma\succeq0\). Their duals are
\begin{align}
 \Rg(\rho)&=\max_{\substack{W\succeq0\\I-W^\Gamma\succeq0}}
                  -\Tr(WA),
 \tag{$D_{\mathrm g}$}\label{eq:Dg}\\
 \Rs(\rho)&=\max_{\substack{W,P,Q\succeq0\\I-W=P+Q^\Gamma}}
                  -\Tr(WA).
 \tag{$D_{\mathrm s}$}\label{eq:Ds}
\end{align}
Strictly feasible primal points \(Y=tI\) exist for sufficiently large \(t\), and trace sublevel sets are compact. Thus both primal and dual optima are attained, with no duality gap \cite{RamanaEtAl1997}; see SM Sec.~\ref{sm:sec:conic}.

\textit{Equality for two qubits.---} An operator \(B\) is block-positive when its expectation is nonnegative on every product vector. Both \(A=\rho^\Gamma\) and the standard-dual slack \(I-W=P+Q^\Gamma\) have this property. In particular,
\begin{equation}
 \bra{e\otimes f}A\ket{e\otimes f}
 =\bra{e\otimes\overline f}\rho
       \ket{e\otimes\overline f}\ge0,
 \label{eq:A-product}
\end{equation}
where the bar denotes complex conjugation in the transpose basis.

\begin{theorem}
\label{thm:main}
For every two-qubit state \(\rho\), \(\Rs(\rho)=\Rg(\rho)\), and the generalized problem admits a separable optimal noise.
\end{theorem}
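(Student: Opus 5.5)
The plan is to compare the two dual programs \eqref{eq:Dg} and \eqref{eq:Ds} directly. The second claim then costs nothing: once \(\Rs(\rho)=\Rg(\rho)\) is known, an optimal standard noise \(Y\in\SEP_+\) (attained, by the conic duality remark) is feasible for \eqref{eq:def-rg} with value \(\Rg(\rho)\), so it is a separable optimal generalized noise. Since \(\Rg\le\Rs\) is already known, everything reduces to proving \(\Rs(\rho)\le\Rg(\rho)\). In dual terms this means: every \(W\) feasible for \eqref{eq:Ds} must be dominated by some \(W'\succeq0\) with \(W'^\Gamma\preceq I\) and \(-\Tr(W'A)\ge-\Tr(WA)\).

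First I note that the feasible set of \eqref{eq:Dg} sits inside that of \eqref{eq:Ds}. Indeed, if \(I-W^\Gamma\succeq0\), then \(I-W=(I-W^\Gamma)^\Gamma\), which is \(P+Q^\Gamma\) with \(P=0\). So \eqref{eq:Ds} is the larger program, and the only difference is that it requires \(I-W\) to be decomposable rather than the partial transpose of a positive operator. For two qubits, decomposability is equivalent to block-positivity (Størmer–Woronowicz). So the \eqref{eq:Ds} constraint is simply
\[
\textstyle\sum_i|\langle w_i|e\otimes f\rangle|^2\le1
\]
for all unit product vectors, where \(W=\sum_i\proj{w_i}\).

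The key step is a rank-one bridge. For \(W=\proj{\psi}\) with Schmidt coefficients \(a\ge b\ge 0\), the condition \(W^\Gamma\preceq I\) reads \(\max(a^2,b^2,ab)\le1\), i.e. \(a^2\le1\), because \(W^\Gamma\) has eigenvalues \(a^2,b^2,\pm ab\). Block-positivity of \(I-\proj\psi\) reads \(\max_{e,f}|\langle\psi|e\otimes f\rangle|^2=a^2\le1\). So for rank-one \(W\) the two constraints coincide. It therefore suffices to show that \eqref{eq:Ds} has a rank-one optimizer.

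For the rank reduction I would take an optimal \(W\) of minimal rank and suppose \(\operatorname{rank}W\ge2\). Any two-dimensional subspace of \(\mathbb C^2\otimes\mathbb C^2\) contains a product vector (the Niu–Griffiths/Parthasarathy fact quoted in the introduction). I would apply this to a two-dimensional subspace of \(\ran W\), chosen in relation to the block-positive slack \(I-W\) and to the objective \(-\Tr(WA)\).

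From the product vector \(e\otimes f\in\ran W\), I plan to split off a component of \(W\) along \(\proj{e\otimes f}\). By \eqref{eq:A-product}, \(\bra{e\otimes f}A\ket{e\otimes f}\ge0\), so removing a product component never decreases \(-\Tr(WA)\). Removing it also only relaxes the block-positivity constraint. I would then rescale or rotate the remaining part within its range to restore optimality. Iterating this should push \(W\) down to rank one.

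This rank-reduction step is where I expect the main obstacle. Naive removal lowers the rank but can leave slack that must be re-used. The delicate point is to show that this slack can be absorbed by a rank-one \(W'\) whose largest Schmidt coefficient satisfies \(a^2\le1\), without loss in the objective. As a fallback I would use complementary slackness with the optimal generalized primal \(Y\): \(W(A+Y^\Gamma)=0\), together with the fact that for entangled two-qubit \(\rho\) the operator \(A\) has a single negative eigenvalue with an entangled eigenvector. This pins the optimal \(W\) to the kernel of a PSD operator and lets the product-vector argument be applied inside that kernel.
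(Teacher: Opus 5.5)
\paragraph{The rank-reduction step is left open.}
Your architecture matches the paper's. You reduce to $\Rs\le\Rg$, you note that a rank-one standard-dual feasible $W$ is automatically generalized-dual feasible (your Schmidt-coefficient computation is exactly the paper's bridge), and you seek a rank-one standard-dual optimum by peeling product vectors off $\ran W$. The gap is that you leave this last step unresolved, and the obstacle you anticipate does not exist. What is missing is the precise subtraction.

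Take any normalized product vector $\ket p\in\ran W$; any two-dimensional subspace of $\ran W$ supplies one, so no special choice is needed. Set $t_*=(\bra pW^+\ket p)^{-1}$, with $W^+$ the Moore--Penrose inverse.
\begin{itemize}
\item Cauchy--Schwarz on the support of $W$ gives $W'=W-t_*\proj p\succeq0$.
\item $W^+\ket p$ is a nonzero vector in the former support of $W$ that $W'$ annihilates, so the rank drops.
\item Put the subtracted term into the slack, $P'=P+t_*\proj p$. This keeps $\id-W'=P'+Q^\Gamma$ explicitly, so St{\o}rmer--Woronowicz is never needed.
\item The objective changes by $t_*\bra pA\ket p\ge0$.
\end{itemize}
Because $W$ was already optimal, the feasible $W'$ has value equal to the optimum and is itself optimal. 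There is nothing to ``rescale or rotate'' and no slack to re-use.

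In your minimal-rank setup this is an immediate contradiction, so an optimum of rank at most one exists. Rank zero is excluded for entangled $\rho$ because then $\Rs(\rho)>0$: an attained zero value would make $\rho$ itself separable. For separable $\rho$ both values vanish.

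\paragraph{The other worries are already handled.}
The ``delicate point'' of fitting the leftover into a rank-one $W'$ with $a^2\le1$ is settled by your own bridge. Block positivity of $\id-W'$ for any rank-one standard-feasible $W'$ gives $a^2\le1$. Saturation of this inequality is not needed for generalized feasibility; the paper proves saturation only later, for the single-vector formula.

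The complementary-slackness fallback is unnecessary. It pairs the generalized primal with the standard dual, and as written it does not clearly yield a usable product vector.

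With the subtraction lemma inserted, your argument coincides with the paper's proof. This includes the final step, which produces a separable generalized optimum from an attained standard-primal optimum.
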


\begin{proof}
For entangled \(\rho\), choose a standard-dual optimum \((W,P,Q)\). If \(\operatorname{rank}W\ge2\), its range contains a normalized product vector \(\ket p\). With \(W^+\) the Moore--Penrose inverse, set
\begin{equation}
 \begin{gathered}
 t=(\bra pW^+\ket p)^{-1},\\
 W'=W-t\proj p,\qquad P'=P+t\proj p.
 \end{gathered}
 \label{eq:rank-update-main}
\end{equation}
Then \(W',P',Q\succeq0\), \(I-W'=P'+Q^\Gamma\), and the rank decreases. The objective changes by \(t\bra pA\ket p\ge0\), so optimality forces it to remain constant. Iteration yields a rank-one optimum; it cannot be zero because the robustness is positive.

Write it as \(W=w\proj b\), where \(\ket b\) is normalized with Schmidt coefficients \(\alpha_1\ge\alpha_2\ge0\). The largest squared product overlap is \(\alpha_1^2\). Block positivity of \(I-W\) therefore implies \(w\alpha_1^2\le1\). But
\begin{equation}
 \operatorname{spec}((\proj b)^\Gamma)
 =\{\alpha_1^2,\alpha_2^2,\alpha_1\alpha_2,-\alpha_1\alpha_2\},
 \label{eq:rank-one-pt-spectrum}
\end{equation}
so \(I-W^\Gamma\succeq0\). The same optimum is generalized-dual feasible, proving \(\Rs\le\Rg\). Combined with noise-set inclusion, this gives equality. Both values vanish for separable \(\rho\). An attained standard-primal optimum supplies a separable generalized optimum.
\end{proof}

The product-vector lemma and the subtraction step are proved in SM Sec.~\ref{sm:sec:equality-proof}. The proof covers all ranks and requires no nondegenerate spectral decomposition of \(\rho\). It establishes the existence of a common separable optimum, not that every generalized optimizer is separable.

\textit{Optimal-noise preparation.---} For entangled two-qubit states, complementary slackness gives a unique pure-product optimal noise whenever the vector generating a rank-one generalized-dual optimum has unequal Schmidt coefficients. For entangled X states \cite{YuEberly2007,Rau2009}, we determine exactly when one pure-product term suffices and when two are necessary (for fixed-state decomposition lengths, see Ref.~\cite{Wu2004}). The full optimizer certificates, closed-form values and weights, and comparison with the best separable approximation (BSA) \cite{LewensteinSanpera1998,KarnasLewenstein2001} are given in SM Secs.~\ref{sm:sec:variational-certificates}, \ref{sm:sec:x-values}, and~\ref{sm:sec:x-two-product-bsa}.

\textit{Operational corollaries.---} Channel-discrimination theorems identify the optimal success-probability ratio for general ensembles as \(\mathcal A_{\rm gen}=1+\Rg\), and the optimized gain-over-random-guessing ratio for equiprobable binary channels as \(\mathcal A_{\rm bin}=1+2\Rs\) \cite{TakagiRegula2019}. Theorem~\ref{thm:main} therefore gives \(\mathcal A_{\rm bin}=2\mathcal A_{\rm gen}-1\) for two-qubit inputs, although the tasks and their denominators differ. For optimal deterministic two-qubit teleportation, let \(F_{\rm TP\mbox{-}LOCC}^*(\rho)\) be the optimized singlet fraction under trace-preserving local operations and classical communication, and \(f_{\rm tel}^*(\rho)\) the corresponding mean fidelity \cite{VerstraeteVerschelde2003,HorodeckiTeleportation1999}. Their known relation to generalized robustness and Theorem~\ref{thm:main} give, with \(R(\rho)=\Rs(\rho)=\Rg(\rho)\),
\begin{equation}
 \begin{aligned}
 R(\rho)&=2F_{\rm TP\mbox{-}LOCC}^*(\rho)-1\\
        &=3f_{\rm tel}^*(\rho)-2.
 \end{aligned}
 \label{eq:teleportation-common-R-main}
\end{equation}
Our equality identifies this value with the separable-noise cost; task definitions and normalizations are detailed in SM Sec.~\ref{sm:sec:operations}.

\textit{Separation in the minimal higher dimension.---} For \(2\otimes3\), PPT still characterizes separability, so the same SDPs remain exact with \(6\times6\) variables. However, the product-vector step can fail: the span of \(\ket{00}+\ket{11}\) and \(\ket{01}+\ket{12}\) contains no nonzero product vector. Indeed, the \(2\times2\) minors of its coefficient matrix \(\left(\begin{smallmatrix}x&y&0\\0&x&y\end{smallmatrix}\right)\) include \(x^2\) and \(y^2\).

The equality itself fails. In the product basis ordered lexicographically as \(00,01,02,10,11,12\), define
\begin{equation}
 \rho_*=\frac1{40}
 \begin{pmatrix}
 2&0&0&0&0&0\\
 0&1&0&-4&0&0\\
 0&0&13&0&-5&0\\
 0&-4&0&17&0&0\\
 0&0&-5&0&2&0\\
 0&0&0&0&0&5
 \end{pmatrix}.
 \label{eq:2x3-state-main}
\end{equation}
Its trace is one. Up to a basis permutation, \(40\rho_*\) has scalar blocks \(2,5\) and two \(2\times2\) blocks with positive diagonal entries and unit determinants, so \(\rho_*\succ0\). With \(\Gamma\) now transposing the qutrit, set
\begin{equation}
 \ket{u}=2\ket{01}+\ket{10},\qquad
 Y_g=\frac{\proj{u}}{40},\qquad S=\rho_*+Y_g.
 \label{eq:2x3-noise-main}
\end{equation}
The eigenvalues of \(40S^\Gamma\) are \(0,0,4,10,13,18\); hence \(S\) is separable and \(Y_g\) is generalized-primal feasible, with cost \(1/8\). For the standard lower bound, use the subspace above:
\begin{equation}
 \begin{gathered}
 \ket{v_1}=\ket{00}+\ket{11},\qquad
 \ket{v_2}=\ket{01}+\ket{12},\\
 W=\tfrac23(\proj{v_1}+\proj{v_2}),\qquad Z=W^\Gamma .
 \end{gathered}
 \label{eq:2x3-witness-main}
\end{equation}
For every normalized separable state \(\sigma\), \(0\le\Tr(Z\sigma)\le1\); the SM proves this by a \(2\times3\) matrix norm bound, including arbitrary complex product vectors. Thus \(\Rs(\rho_*)\ge-\Tr(Z\rho_*)\), giving
\begin{align}
 \Rg(\rho_*)&\le\frac{1}{8}<\frac{2}{15}\le\Rs(\rho_*),\label{eq:2x3-bounds-main}\\
 \Rs(\rho_*)-\Rg(\rho_*)&\ge\frac{1}{120}>0.
 \label{eq:2x3-gap-main}
\end{align}
These are certified bounds, not asserted exact optimal values. The identities \(WS^\Gamma=0\) and \(\bra{u}Z\ket{u}/\braket{u}{u}=16/15>1\) connect the subspace to the gap: \(\Tr[(Z-I_6)Y_g]=1/120\). The feasible noise \(Y_g/\Tr Y_g=\proj{u}/5\) is entangled; no separable noise can attain this cost.

The separation also persists under white noise. For \(\rho_{1/200}=(199/200)\rho_*+I_6/1200\), the same witness and the scaled noise \((199/200)Y_g\) give
\begin{equation}
 \Rs(\rho_{1/200})-\Rg(\rho_{1/200})
 \ge\frac{437}{72000}>0.00606.
 \label{eq:fullrank-gap-main}
\end{equation}
\textit{Universal boundary and multipartite extensions.---} Both robustnesses are invariant under local isometric embeddings: a local trace-preserving retraction recovers the input and preserves either admissible noise set (SM Sec.~\ref{sm:sec:dimension}). Every bipartite space with \(m,n\ge2\), other than \(2\otimes2\), contains \(2\otimes3\) up to exchanging the subsystems. We therefore obtain the complete all-state classification
\begin{equation}
 \begin{gathered}
 \Rs(\rho)=\Rg(\rho)\ \text{for every }\rho
 \text{ on }\mathbb C^m\otimes\mathbb C^n\\
 \Longleftrightarrow\quad(m,n)=(2,2),\qquad m,n\ge2.
 \end{gathered}
 \label{eq:dimension-classification}
\end{equation}
Here \(m,n\) are finite integers; if \(\min\{m,n\}=1\), both values vanish for all states. Higher-dimensional spaces contain separating states, not exclusively such states; bipartite pure-state equality remains valid. Total dimension six is the smallest possible separation, already before the PPT criterion ceases to characterize separability.

For a finite multipartite system \(\bigotimes_{j=1}^N\mathbb C^{d_j}\), the same classification applies to any fixed nontrivial bipartition \(S|\bar S\), with effective dimensions \(d_S=\prod_{i\in S}d_i\) and \(d_{\bar S}=\prod_{i\notin S}d_i\). Separability across this cut allows entanglement within each side. Full separability instead uses the free set \cite{WuChenZhang2000}
\begin{equation}
 \FS_N=\operatorname{conv}\left\{
 \bigotimes_{j=1}^N\proj{x_j}:\ket{x_j}\in\mathbb C^{d_j},
 \ \braket{x_j}{x_j}=1\right\}.
 \label{eq:main-fs-definition}
\end{equation}
Write \(R_\nu^{\FS_N}\), \(\nu\in\{\mathrm s,\mathrm g\}\), for the corresponding robustnesses. Delete one-dimensional parties and relabel the remaining dimensions \(d_1,\ldots,d_k\). The bipartite classification, known three-qubit separation \cite{Contreras2019}, and invariance under local embeddings and separable spectators imply (SM Sec.~\ref{sm:sec:fs-multipartite})
\begin{equation}
 \begin{aligned}
 R_{\mathrm s}^{\FS_N}(\rho)&=R_{\mathrm g}^{\FS_N}(\rho)\quad\text{for every }\rho\\[-2pt]
 &\Longleftrightarrow\quad k\le1\\[-2pt]
 &\qquad\text{or }\bigl(k=2,\ d_1=d_2=2\bigr).
 \end{aligned}
 \label{eq:main-fs-classification}
\end{equation}
For a quantitative comparison, define the Greenberger--Horne--Zeilinger (GHZ) states \(G_n=\proj{g_n}\), where \(\ket{g_n}=(\ket{0}^{\otimes n}+\ket{1}^{\otimes n})/\sqrt2\), for \(n\ge2\) \cite{WuZhang2000}. SM Sec.~\ref{sm:sec:fs-multipartite} gives matching witnesses and fully separable decompositions for
\begin{equation}
 R_{\mathrm g}^{\FS_n}(G_n)=1,
 \qquad R_{\mathrm s}^{\FS_n}(G_n)=2^{n-2}.
 \label{eq:main-ghz-values}
\end{equation}
The generalized value follows from known stabilizer-state results \cite{Hayashi2008}, and the \(n=3\) separation was established in Ref.~\cite{Contreras2019}. Across any fixed bipartition, the same GHZ state has both bipartite robustnesses equal to one. Thus the party partition and free set determine the advantage: the exponential factor concerns the optimized noise-to-signal ratio, not a noise probability.

Verification scripts for the rational certificates and GHZ identities are available in Ref.~\cite{VerificationCode}.

\textit{Conclusion.---} Entangled noise has no cost advantage for any two-qubit input, yet an explicit full-rank qubit--qutrit state already exhibits a strict advantage. PPT remains exact in both settings; the change is exposed by product-vector geometry, not by the onset of PPT entanglement. Together with local embeddings and known three-qubit separation, these results locate the universal-equality boundary for bipartite systems and for multipartite systems whose free states are fully separable. This classification treats exact, single-copy robustness and does not address many-copy or asymptotic tasks.

\begin{acknowledgments}
This work is supported by the National Natural Science Foundation of China (Grants 12475020 and 92565111), Quantum Science and Technology-National Science and Technology Major Project (2021ZD0301701), and the National Key Research and Development Program of China (2023YFC2205802).
\end{acknowledgments}

\makeatletter
\let\auto@bib@innerbib\@empty
\makeatother

\clearpage
\onecolumngrid
\rightskip=0pt plus 0pt
\setcounter{section}{0}
\setcounter{subsection}{0}
\setcounter{subsubsection}{0}
\setcounter{equation}{0}
\setcounter{figure}{0}
\setcounter{table}{0}
\setcounter{theorem}{0}
\setcounter{secnumdepth}{2}
\renewcommand{\thefigure}{S\arabic{figure}}
\renewcommand{\thetable}{S\arabic{table}}
\renewcommand{\theequation}{S\arabic{equation}}
\renewcommand{\thesection}{\Roman{section}}
\renewcommand{\thesubsection}{\Alph{subsection}}
\renewcommand{\thetheorem}{\thesection.\arabic{theorem}}
\renewcommand{\theHsection}{sm.\arabic{section}}
\renewcommand{\theHsubsection}{sm.\arabic{section}.\arabic{subsection}}
\renewcommand{\theHequation}{sm.\arabic{equation}}
\renewcommand{\theHfigure}{sm.\arabic{figure}}
\renewcommand{\theHtable}{sm.\arabic{table}}
\renewcommand{\theHtheorem}{sm.\arabic{theorem}}

\begin{center}
{\Large\bfseries Supplemental Material for\\[0.5ex]
The Minimal Dimension of Entangled-Noise Advantage\par}
\vspace{1.2ex}
{\large Xiao-Ke Wang, Zi-Yuan Liu, Ming-Yang Li, Shengjun Wu, and Zeng-Bing Chen\par}
\vspace{0.6ex}
{\small National Laboratory of Solid State Microstructures and School of Physics,\\
Collaborative Innovation Center of Advanced Microstructures,\\
Nanjing University, Nanjing 210093, China\par}
\end{center}
\vspace{1.5ex}

\section*{Guide to results and proofs}
The material is organized by the role of the arguments in the main results.
\begin{itemize}
 \item \emph{Core equality:} strict feasibility and attainment in Sec.~\ref{sm:sec:conic}; product-subspace and rank-subtraction lemmas, standard-dual rank reduction (Proposition~\ref{sm:prop:rank-one-standard}), normalization bridge (Proposition~\ref{sm:prop:rank-one-bridge}), and all-state equality (Theorem~\ref{sm:thm:equality}) in Sec.~\ref{sm:sec:equality-proof}.
 \item \emph{Optimizer structure:} the single-vector formula (equivalent to the known generalized filter formula), complementary slackness, conditional unique product noise, the Bell-state alternatives, the nonlinear kernel certificate, and inverse partial-transpose reduction are in Sec.~\ref{sm:sec:variational-certificates}.
 \item \emph{X states:} Theorem~\ref{sm:thm:x-state} gives the values for all ranks and boundaries in Sec.~\ref{sm:sec:x-values}. Section~\ref{sm:sec:x-two-product-bsa} gives the two-product compression, minimum preparation size, and BSA comparison.
 \item \emph{Nearest prior work:} Sec.~\ref{sm:sec:prior-formula-comparison} contains the Wootters specialization, full-rank counterexample and feasible descent, the Verstraete--Verschelde comparison, and the 2021/2023 dimension context.
 \item \emph{Bounds and applications:} Secs.~\ref{sm:sec:bounds} and \ref{sm:sec:operations} retain negativity bounds with exact upper saturation and lower strictness, and the distinct operational normalizations as corollaries of established task theorems.
 \item \emph{Dimension boundary:} Sec.~\ref{sm:sec:dimension} supplies all rational matrices and positivity certificates, full-rank stability, local isometry invariance, the complete bipartite classification, and its fixed-cut extension (Corollary~\ref{sm:cor:fixed-bipartition}).
 \item \emph{Full separability:} Sec.~\ref{sm:sec:fs-multipartite} gives the finite multipartite FS definition, spectator and embedding invariance, the all-\(n\) GHZ witnesses and values, the full-rank white-noise family, and the universal-equality classification.
\end{itemize}

\section{Conic and semidefinite formulations}\label{sm:sec:conic}
\subsection{States, separability, and partial transpose}
Unless stated otherwise, \(\mathcal H_A=\mathcal H_B=\mathbb C^2\). Write \(X\succeq0\) for positive semidefiniteness and \(X\succ0\) for positive definiteness; \(\Herm(d)\) is the real space of Hermitian \(d\times d\) matrices. The cone of unnormalized separable operators and its trace-one slice are
\begin{align}
 \SEP_+&=\left\{\sum_{r=1}^N A_r\otimes B_r:
             A_r,B_r\succeq0\right\},\label{sm:eq:sepcone}\\
 \SEP&=\{\sigma\in\SEP_+:\Tr\sigma=1\}.
 \label{sm:eq:separable-state-set}
\end{align}
We use \(\mathcal D(\mathcal H)=\{X\succeq0:\Tr X=1\}\) for the set of
density operators on a finite-dimensional Hilbert space.
In a fixed product basis, partial transpose on \(B\) is defined by
\begin{equation}
 (X^\Gamma)_{ij,k\ell}=X_{i\ell,kj}.
 \label{sm:eq:partial-transpose}
\end{equation}
It is a trace-preserving, self-adjoint involution:
\begin{equation}
 (X^\Gamma)^\Gamma=X,\quad \Tr X^\Gamma=\Tr X,\quad
 \Tr(XY^\Gamma)=\Tr(X^\Gamma Y).
 \label{sm:eq:pt-identities}
\end{equation}
The PPT criterion in \(2\otimes2\) is
\begin{equation}
 X\in\SEP_+\quad\Longleftrightarrow\quad
 X\succeq0,\ X^\Gamma\succeq0
 \label{sm:eq:ppt-sep}
\end{equation}
\cite{Peres1996,Horodecki1996,WuAnandan2002}. Both positivity conditions are retained throughout.

\subsection{Block positivity and decomposable operators}
An operator \(D\in\Herm(4)\) is block-positive if
\begin{equation}
 \bra{a\otimes b}D\ket{a\otimes b}\ge0
 \quad\text{for all }\ket a,\ket b.
 \label{sm:eq:block-positive-definition}
\end{equation}
This is equivalent to membership in the dual cone
\(\SEP_+^*=\{D:\Tr(DX)\ge0\ \forall X\in\SEP_+\}\).
Every decomposable operator
\begin{equation}
 D=P+Q^\Gamma,\qquad P,Q\succeq0
 \label{sm:eq:decomposable}
\end{equation}
is block-positive \cite{LewensteinEtAl2000,ChruscinskiSarbicki2014}. The variable \(W\succeq0\) below is a dual operator; when \(-\Tr(W\rho^\Gamma)>0\), its partial transpose \(W^\Gamma\) is an entanglement witness.

\subsection{The two robustnesses}
For a density operator \(\rho\), define \cite{VidalTarrach1999,Steiner2003,HarrowNielsen2003}
\begin{align}
 \Rs(\rho)&=\inf\{s\ge0:\omega\in\SEP,\
            (\rho+s\omega)/(1+s)\in\SEP\},\label{sm:eq:def-rs}\\
 \Rg(\rho)&=\inf\{s\ge0:\omega\succeq0,\ \Tr\omega=1,\
            (\rho+s\omega)/(1+s)\in\SEP\}.\label{sm:eq:def-rg}
\end{align}
Here \(s\) is a noise-to-signal ratio; the probability of adding noise in the normalized mixture is \(s/(1+s)\). Inclusion of the admissible noise sets gives
\begin{equation}
 \Rg(\rho)\le\Rs(\rho).
 \label{sm:eq:easy-direction}
\end{equation}

\subsection{Primal and dual semidefinite programs}
Set
\begin{equation}
 A:=\rho^\Gamma.
 \label{sm:eq:A-definition}
\end{equation}
For generalized robustness, write the unnormalized noise as \(Y=s\omega\).  Then \(Y\succeq0\) and \(\Tr Y=s\).  The operator \(\rho+Y\) is already positive semidefinite; by \eqref{sm:eq:ppt-sep}, it is separable precisely when \(A+Y^\Gamma\succeq0\).  Hence
\begin{equation*}
 \Rg(\rho)=
 \min_{Y\in\Herm(4)}
 \left\{\Tr Y:Y\succeq0,\ A+Y^\Gamma\succeq0\right\}.
 \tag{$P_{\mathrm g}$}\label{sm:eq:Pg}
\end{equation*}
This is the generalized-robustness SDP used, with an equivalent normalization, in Ref.~\cite{VerstraeteVerschelde2003}. Conversely, a feasible \(Y\ne0\) determines the noise state \(\omega=Y/\Tr Y\); the feasible point \(Y=0\) corresponds to \(s=0\).

For standard robustness the unnormalized noise \(Y\) must be separable. Put \(K=Y^\Gamma\).  The two-qubit PPT criterion, the involution property, and trace preservation give
\begin{equation*}
 \Rs(\rho)=
 \min_{K\in\Herm(4)}
 \left\{\begin{aligned}
 &\Tr K:\ K\succeq0,\ K^\Gamma\succeq0,\\[-2pt]
 &A+K\succeq0
 \end{aligned}\right\}.
 \tag{$P_{\mathrm s}$}\label{sm:eq:Ps}
\end{equation*}
Here \(K^\Gamma=Y\succeq0\) and \(K=Y^\Gamma\succeq0\) are both retained; no implication from PPT to physical positivity has been assumed. The zero update is feasible exactly for separable states.

Both displayed minima are attained.  Indeed, a positive semidefinite variable has trace norm equal to its trace.  Every finite objective sublevel set is therefore bounded, and its intersection with the closed feasible set is compact.

We derive the duals using the Hilbert--Schmidt inner product \(\langle X,Z\rangle=\Tr(XZ)\) on \(\Herm(4)\).  For \((P_{\mathrm g})\), introduce \(W\succeq0\) for the constraint \(A+Y^\Gamma\succeq0\), while keeping \(Y\succeq0\) as the variable domain. By \eqref{sm:eq:pt-identities},
\begin{align}
 L_{\mathrm g}(Y,W)
 &=\Tr Y-\Tr\!\left[W(A+Y^\Gamma)\right]\\
 &=-\Tr(WA)+\Tr\!\left[(\id-W^\Gamma)Y\right].
 \label{sm:eq:Lg}
\end{align}
The infimum over \(Y\succeq0\) is finite exactly when \(\id-W^\Gamma\succeq0\).  The generalized dual is therefore
\begin{equation*}
 \max_{W\in\Herm(4)}
 \left\{-\Tr(WA):W\succeq0,\ \id-W^\Gamma\succeq0\right\}.
 \tag{$D_{\mathrm g}$}\label{sm:eq:Dg}
\end{equation*}

For \((P_{\mathrm s})\), keep \(K\succeq0\) as the variable domain and use positive semidefinite multipliers \(Q\) and \(W\) for \(K^\Gamma\succeq0\) and \(A+K\succeq0\), respectively.  Then
\begin{align}
 L_{\mathrm s}(K,Q,W)
 &=\Tr K-\Tr(QK^\Gamma)-\Tr[W(A+K)]\\
 &=-\Tr(WA)+\Tr[(\id-W-Q^\Gamma)K].
 \label{sm:eq:Ls}
\end{align}
The infimum over \(K\succeq0\) is finite exactly when \(\id-W-Q^\Gamma\succeq0\).  Naming this positive semidefinite slack \(P\) gives
\begin{equation*}
 \max\left\{-\Tr(WA):
 W,P,Q\succeq0,\ \id-W=P+Q^\Gamma\right\}.
 \tag{$D_{\mathrm s}$}\label{sm:eq:Ds}
\end{equation*}
Thus \(\id-W\) belongs to the decomposable cone and is consequently block-positive.  It need not be an entanglement witness, because it may itself be positive semidefinite.

Both primal programs satisfy Slater's condition.  Namely, for
\begin{equation}
 t>\max\{0,-\lambda_{\min}(A)\},
\end{equation}
the choice \(Y=t\id\) is strictly feasible for \((P_{\mathrm g})\), while \(K=t\id\) is strictly feasible for \((P_{\mathrm s})\).  The objectives are bounded below by zero and have finite feasible values.  Finite-dimensional semidefinite strong duality therefore gives the following identities, where \(\operatorname{val}(P)\) denotes the optimal objective value of an optimization problem \(P\):
\begin{equation}
 \operatorname{val}(P_{\mathrm g})=\operatorname{val}(D_{\mathrm g}),
 \qquad
 \operatorname{val}(P_{\mathrm s})=\operatorname{val}(D_{\mathrm s}),
 \label{sm:eq:strong-duality}
\end{equation}
and both dual optima are attained \cite{RamanaEtAl1997}.  The two dual constraints are not identical for a general high-rank \(W\).  The bridge between them will be made only after rank reduction.

\section{Auxiliary rank-reduction lemmas}\label{sm:sec:equality-proof}
\subsection{Basic two-qubit facts}
The following four elementary facts fix all normalizations and conjugations; their proofs are included for completeness.

\begin{lemma}[Block positivity of a partial transpose]
\label{sm:lem:A-block-positive}
If \(\rho\succeq0\) and \(A=\rho^\Gamma\), then \(A\) is block-positive. More explicitly, for \(\ket p=\ket e\otimes\ket f\),
\begin{equation}
 \bra pA\ket p
 =\bra{e\otimes\overline f}\rho\ket{e\otimes\overline f}\ge0,
 \label{sm:eq:A-product}
\end{equation}
where \(\overline f\) is entrywise complex conjugation in the basis used in \eqref{sm:eq:partial-transpose}.
\end{lemma}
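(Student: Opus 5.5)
The plan is a direct computation of the matrix element of \(A=\rho^\Gamma\) between product vectors, using the index definition \eqref{sm:eq:partial-transpose}. Write \(\ket e=\sum_i e_i\ket i\) and \(\ket f=\sum_j f_j\ket j\) in the fixed product basis, so that \(\ket p=\sum_{i,j}e_if_j\ket{ij}\). Then
\begin{equation*}
 \bra pA\ket p=\sum_{i,j,k,\ell}\overline{e_i}\,\overline{f_j}\,e_kf_\ell\,A_{ij,k\ell}
 =\sum_{i,j,k,\ell}\overline{e_i}\,\overline{f_j}\,e_kf_\ell\,\rho_{i\ell,kj}.
\end{equation*}
Next I will relabel the second-system indices by swapping the names \(j\leftrightarrow\ell\), which gives \(\sum\overline{e_i}f_j\,e_k\overline{f_\ell}\,\rho_{ij,k\ell}\).

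The key observation is that \(f_j=\overline{\overline{f}_j}\) and \(\overline{f_\ell}=\overline f_\ell\). With this, the last sum is exactly \(\bra{e\otimes\overline f}\rho\ket{e\otimes\overline f}\), where \(\ket{\overline f}=\sum_j\overline{f_j}\ket j\) is the entrywise conjugate in the same basis. Positivity of \(\rho\) then gives the inequality \(\ge0\).

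Block positivity in the sense of \eqref{sm:eq:block-positive-definition} follows immediately, since \(\ket e,\ket f\) are arbitrary. An equivalent abstract route is to note that \(\Tr(A\,X)=\Tr(\rho X^\Gamma)\) by \eqref{sm:eq:pt-identities}, and that \((\proj e\otimes\proj f)^\Gamma=\proj e\otimes\proj{\overline f}\) because the transpose of \(\proj f\) is \(\proj{\overline f}\). This yields the same identity and shows \(A\in\SEP_+^*\).

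There is no real obstacle. The only point needing care is bookkeeping: checking which index pair the convention \eqref{sm:eq:partial-transpose} swaps, and making sure the conjugation lands on \(f\) rather than \(e\). Both are settled by the explicit relabeling above.
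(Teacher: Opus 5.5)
Your proof is correct and essentially matches the paper's. The paper argues exactly by your ``abstract route,'' writing \(\bra p\rho^\Gamma\ket p=\Tr(\proj p\,\rho^\Gamma)=\Tr((\proj p)^\Gamma\rho)\) with \((\proj p)^\Gamma=\proj{e\otimes\overline f}\). Your index relabeling under \((X^\Gamma)_{ij,k\ell}=X_{i\ell,kj}\) is the coordinate check of that same identity, and it is carried out correctly.
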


\begin{proof}
The partial transpose of \(\proj p\) is \(\proj{e\otimes\overline f}\).  Therefore
\begin{align*}
 \bra p\rho^\Gamma\ket p
 &=\Tr(\proj p\,\rho^\Gamma)
 =\Tr((\proj p)^\Gamma\rho)\\
 &=\bra{e\otimes\overline f}\rho
   \ket{e\otimes\overline f}\ge0.
\end{align*}
\end{proof}

\begin{lemma}[Two-qubit product-vector lemma]
\label{sm:lem:product-subspace}
Every subspace \(V\subset\mathbb C^2\otimes\mathbb C^2\) with \(\dim V\ge2\) contains a nonzero product vector.
\end{lemma}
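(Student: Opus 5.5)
It suffices to treat a two-dimensional subspace \(V=\operatorname{span}\{\ket{x},\ket{y}\}\) with \(\ket x,\ket y\) linearly independent, since any larger subspace contains one. The plan is to identify vectors in \(\mathbb C^2\otimes\mathbb C^2\) with \(2\times2\) coefficient matrices via \(\ket{\psi}=\sum_{ij}M_{ij}\ket{i}\ket{j}\mapsto M\). A nonzero vector is a product vector exactly when \(M\) has rank one, and for a nonzero \(2\times2\) matrix this is equivalent to \(\det M=0\).

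For a general element \(\lambda\ket x+\mu\ket y\) of \(V\), with coefficient matrices \(X,Y\), consider
\[
 f(\lambda,\mu)=\det(\lambda X+\mu Y).
\]
This is a homogeneous quadratic polynomial in \((\lambda,\mu)\), of the form \(\lambda^2\det X+\lambda\mu\,c+\mu^2\det Y\) for some \(c\in\mathbb C\). The argument then splits into cases.
\begin{itemize}
 \item If \(\det Y=0\), then \((\lambda,\mu)=(0,1)\) is a root.
 \item Otherwise, set \(\lambda=1\). Then \(f(1,\mu)\) is a genuine quadratic in \(\mu\), which has a root in \(\mathbb C\) by the fundamental theorem of algebra.
\end{itemize}
In either case we obtain \((\lambda,\mu)\ne(0,0)\) with \(f(\lambda,\mu)=0\). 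Linear independence of \(\ket x,\ket y\) guarantees that \(\lambda X+\mu Y\neq0\), so this matrix has rank exactly one. It therefore factors as \(\ket e\bra{\overline f}\)-type data, namely \(M=e f^{T}\), which gives a nonzero product vector \(\ket e\otimes\ket f\in V\).

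No step is a real obstacle. The only points needing care are to use complex scalars, since over the reals the quadratic may lack roots, and to check that the root does not produce the zero vector, which the independence argument already handles. The same determinant argument also shows why the claim fails in \(2\otimes3\): there, rank one requires the simultaneous vanishing of three \(2\times2\) minors, which is consistent with the example in the main text.
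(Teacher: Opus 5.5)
Your proof is correct and is essentially the paper's argument: identify vectors with \(2\times2\) coefficient matrices and use the fundamental theorem of algebra to find a nontrivial zero of the homogeneous quadratic \(\det(\lambda X+\mu Y)\). Your case split on \(\det Y\) replaces the paper's separate handling of an identically vanishing polynomial, and it covers that case as well.
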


\begin{proof}
It is enough to consider a two-dimensional subspace \(V_2=\operatorname{span}\{\ket{v_1},\ket{v_2}\}\).  Identify a vector \(\sum_{i,j}v_{ij}\ket i\ket j\) with the matrix \(\widehat v=(v_{ij})\in\mathbb C^{2\times2}\).  The vector is a product vector exactly when \(\widehat v\) has rank at most one, or equivalently \(\det\widehat v=0\).  Consider the homogeneous quadratic polynomial
\[
 q(\alpha,\beta)=
 \det(\alpha\widehat v_1+\beta\widehat v_2).
\]
If \(q\) is identically zero, every nonzero vector in \(V_2\) is a product vector.  Otherwise, the fundamental theorem of algebra gives a zero \([\alpha:\beta]\in\mathbb{CP}^1\), where \(\mathbb{CP}^1\) is the complex projective line and \([\alpha:\beta]\) denotes a nonzero pair up to a common nonzero scalar multiple.  Then \(\alpha\ket{v_1}+\beta\ket{v_2}\) is the required nonzero product vector. This lemma appeared explicitly in two-qubit form in Ref.~\cite{NiuGriffiths1999}; it also follows from the fact that a completely entangled subspace of \(\mathbb C^2\otimes\mathbb C^2\) has dimension at most one \cite{Parthasarathy2004}.
\end{proof}

\begin{lemma}[Rank-one subtraction]
\label{sm:lem:rank-subtraction}
Let \(W\succeq0\), and let the normalized vector \(\ket p\) belong to \(\ran W\).  Write the spectral decomposition on the support of \(W\) as
\begin{equation}
 W=\sum_{i=1}^{k}\lambda_i\proj{w_i},
 \qquad \lambda_i>0.
\end{equation}
The Moore--Penrose pseudoinverse of \(W\) is
\begin{equation}
 W^+:=\sum_{i=1}^{k}\lambda_i^{-1}\proj{w_i};
 \label{sm:eq:pseudoinverse-definition}
\end{equation}
it acts as \(W^{-1}\) on \(\ran W\) and as zero on \(\ker W\).  Then
\begin{equation}
 t_*:=\left(\bra pW^+\ket p\right)^{-1}>0
 \label{sm:eq:tstar}
\end{equation}
satisfies
\begin{equation}
 W-t_*\proj p\succeq0,
 \qquad
 \operatorname{rank}(W-t_*\proj p)<\operatorname{rank}W.
 \label{sm:eq:rank-subtraction}
\end{equation}
\end{lemma}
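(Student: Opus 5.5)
The plan is to reduce everything to the support of \(W\), where \(W\) is invertible, and use the standard Schur-complement characterization of when a rank-one subtraction stays positive. First, \(t_*\) is well defined and positive. Since \(\ket p\in\ran W\) and \(\ket p\neq0\), expand \(\ket p=\sum_i c_i\ket{w_i}\) with not all \(c_i\) zero. Then
\[
 \bra pW^+\ket p=\sum_i\lambda_i^{-1}\lvert c_i\rvert^2>0 ,
\]
so \(t_*>0\). Everything below takes place on \(\ran W\). On \(\ker W\), both \(W\) and \(t_*\proj p\) vanish, so the difference is zero there.

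For positivity, let \(W_r\) denote the restriction of \(W\) to \(\ran W\), which is positive definite. Use the congruence \(W_r^{1/2}(\id-\cdot)W_r^{1/2}\). Write \(\ket q=W_r^{-1/2}\ket p=(W^+)^{1/2}\ket p\). Then
\[
 W-t\proj p=W^{1/2}\bigl(\Pi-t\proj q\bigr)W^{1/2},
\]
where \(\Pi\) is the projector onto \(\ran W\) and \(\ket q\in\ran W\). The operator \(\Pi-t\proj q\) has eigenvalue \(1-t\braket qq=1-t\bra pW^+\ket p\) on \(\ket q\), and eigenvalue \(1\) on the orthogonal complement of \(\ket q\) inside \(\ran W\). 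At \(t=t_*\) the first eigenvalue is exactly \(0\). Hence \(\Pi-t_*\proj q\succeq0\), and by congruence \(W-t_*\proj p\succeq0\).

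For the rank decrease, the same factorization shows that \(\ker(W-t_*\proj p)\) contains \(\ker W\). It also contains the nonzero vector \(W_r^{-1/2}\ket q=W^+\ket p\in\ran W\). Applying the operator to this vector gives
\[
 (W-t_*\proj p)W^+\ket p=\Pi\ket p-t_*\ket p\bra pW^+\ket p=\ket p-\ket p=0 .
\]
This can be checked directly and I would include it as the clean certificate. The vector \(W^+\ket p\) is nonzero, because \(\bra pW^+\ket p>0\), and it lies in \(\ran W\), so it is independent of \(\ker W\). The kernel therefore has dimension at least \(\dim\ker W+1\), and the rank strictly drops.

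The only delicate point is handling the pseudoinverse correctly off the support. In particular, \(W W^+=\Pi\) and \(\Pi\ket p=\ket p\) both rely on \(\ket p\in\ran W\); these identities follow immediately from the spectral formula \eqref{sm:eq:pseudoinverse-definition}. I do not expect a genuine obstacle here; the argument is a rank-one Schur-complement computation.
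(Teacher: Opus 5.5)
Your proof is correct and takes essentially the same approach as the paper. The paper gets positivity from the Cauchy--Schwarz bound $|\braket{p}{x}|^2\le\bra pW^+\ket p\,\bra xW\ket x$, which is the same whitening by $(W^+)^{1/2}$ that you express as the congruence $W^{1/2}(\Pi-t\proj q)W^{1/2}$, and it proves the rank drop with the same kernel vector $W^+\ket p\in\ran W$ (showing it is nonzero via $WW^+\ket p=\ket p\neq0$ rather than via $\bra pW^+\ket p>0$).
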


\begin{proof}
Because \(p\in\ran W\), the scalar \(c=\bra pW^+\ket p\) is strictly positive.  For any \(\ket x\), Cauchy--Schwarz on the support of \(W\) gives
\[
 |\braket px|^2
 =|\langle (W^+)^{1/2}p,W^{1/2}x\rangle|^2
 \le c\,\bra xW\ket x.
\]
Thus \(W-c^{-1}\proj p\succeq0\).  If \(\ket z=W^+\ket p\), then
\[
 (W-c^{-1}\proj p)\ket z
 =\ket p-c^{-1}\ket p\bra pW^+\ket p=0.
\]
Because \(p\in\ran W\), the defining property of the pseudoinverse gives \(WW^+\ket p=\ket p\).  Hence \(W\ket z=\ket p\ne0\), which proves both that \(z\ne0\) and that \(z\in\ran W\), the support of \(W\).  The updated operator therefore has a new nonzero kernel vector inside the former support of \(W\), so its rank is strictly smaller than that of \(W\).
\end{proof}

\begin{lemma}[Largest product overlap and partial-transpose spectrum]
\label{sm:lem:rank-one-spectrum}
Let \(\ket b\) be a normalized two-qubit vector with Schmidt decomposition
\begin{equation}
 \begin{aligned}
 \ket b&=\alpha_1\ket{u_1}\ket{f_1}
       +\alpha_2\ket{u_2}\ket{f_2},\\
 \alpha_1&\ge\alpha_2\ge0,
 \qquad \alpha_1^2+\alpha_2^2=1.
 \end{aligned}
 \label{sm:eq:schmidt}
\end{equation}
Then
\begin{align}
 \max_{\substack{\ket p\text{ normalized}\\
                   \ket p\text{ product}}}
 |\braket pb|^2&=\alpha_1^2,
 \label{sm:eq:max-product-overlap}\\
 \operatorname{spec}((\proj b)^\Gamma)
 &=\{\alpha_1^2,\alpha_2^2,
     +\alpha_1\alpha_2,-\alpha_1\alpha_2\}.
 \label{sm:eq:rank-one-pt-spectrum}
\end{align}
In particular, \(\lambda_{\max}((\proj b)^\Gamma)=\alpha_1^2\).
\end{lemma}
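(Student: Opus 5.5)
The plan is to compute everything directly in the local Schmidt bases, using the fact that local unitaries act on both quantities in a controlled way. For the overlap formula \eqref{sm:eq:max-product-overlap}, I will identify $\ket b$ with its $2\times2$ coefficient matrix $\widehat b$, whose singular values are $\alpha_1\ge\alpha_2$. For normalized $\ket p=\ket e\otimes\ket f$ the overlap is $\braket pb=e^\dagger\widehat b\,\overline f$, with $\overline f$ again normalized. Its squared modulus is therefore at most $\|\widehat b\|_{\mathrm{op}}^2=\alpha_1^2$. Equality holds at $\ket p=\ket{u_1}\ket{f_1}$, where the overlap is exactly $\alpha_1$ by orthonormality of the Schmidt vectors. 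This step is routine.

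For the spectrum \eqref{sm:eq:rank-one-pt-spectrum}, I first reduce to the computational Schmidt form. Write $\ket{u_i}=U\ket i$ and $\ket{f_i}=V\ket i$, so that $\proj b=(U\otimes V)\proj{b_0}(U\otimes V)^\dagger$ with $\ket{b_0}=\alpha_1\ket{00}+\alpha_2\ket{11}$. Partial transposition on $B$ then gives
\[
 (\proj b)^\Gamma=(U\otimes\overline V)(\proj{b_0})^\Gamma(U\otimes\overline V)^\dagger .
\]
This identity is checked on product operators $X\otimes Y$, where $(VYV^\dagger)^T=\overline V\,Y^T\,\overline V^\dagger$ (here $\overline V$ denotes the entrywise conjugate, so $V^T=(\overline V)^\dagger$); it then extends by linearity. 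Hence the spectrum is unitarily invariant and it suffices to treat $\ket{b_0}$.

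I now expand $\proj{b_0}$ as
\[
 \alpha_1^2\ket{00}\!\bra{00}+\alpha_2^2\ket{11}\!\bra{11}+\alpha_1\alpha_2\bigl(\ket{00}\!\bra{11}+\ket{11}\!\bra{00}\bigr).
\]
Applying \eqref{sm:eq:partial-transpose}, the cross terms become $\alpha_1\alpha_2(\ket{01}\!\bra{10}+\ket{10}\!\bra{01})$. The resulting matrix is block diagonal: the diagonal entries $\alpha_1^2,\alpha_2^2$ sit on $\ket{00},\ket{11}$, and there is a $2\times2$ block $\alpha_1\alpha_2\sigma_x$ on $\operatorname{span}\{\ket{01},\ket{10}\}$ with eigenvalues $\pm\alpha_1\alpha_2$. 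This gives the stated multiset.

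For the final claim, compare the eigenvalues. Since $\alpha_1\ge\alpha_2\ge0$, we have $\alpha_1^2\ge\alpha_1\alpha_2\ge\alpha_2^2$ and $\alpha_1^2\ge-\alpha_1\alpha_2$, so $\lambda_{\max}=\alpha_1^2$. The only point needing care is the conjugation bookkeeping in the unitary reduction, and the derivation above handles it.
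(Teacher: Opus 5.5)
Your proposal is correct and follows essentially the same route as the paper. Both arguments bound the product overlap by the largest singular value of the coefficient matrix, reduce to $\alpha_1\ket{00}+\alpha_2\ket{11}$ via the fact that partial transposition conjugates $U\otimes V$ into $U\otimes\overline V$, and read off the spectrum from the diagonal entries together with the $\alpha_1\alpha_2\sigma_x$ block on $\operatorname{span}\{\ket{01},\ket{10}\}$; you simply spell out the conjugation identity the paper only states.
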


\begin{proof}
Under local unitary changes \(U\otimes V\), the partial transpose is conjugated by \(U\otimes\overline V\).  We may therefore work in the Schmidt basis, where \(\ket b=\alpha_1\ket{00}+\alpha_2\ket{11}\).  The largest overlap with a product vector is the largest singular value of the coefficient matrix \(\operatorname{diag}(\alpha_1,\alpha_2)\), proving \eqref{sm:eq:max-product-overlap}.  This maximum is also called the product numerical radius of the rank-one projector \(\proj b\) \cite{PuchalaEtAl2011}.  Directly,
\begin{align*}
 (\proj b)^\Gamma
 &=\alpha_1^2\proj{00}+\alpha_2^2\proj{11}\\
 &\quad+\alpha_1\alpha_2
   (\ket{01}\!\bra{10}+\ket{10}\!\bra{01}).
\end{align*}
The eigenvectors \(\ket{00}\), \(\ket{11}\), and \((\ket{01}\pm\ket{10})/\sqrt2\) have the four eigenvalues in \eqref{sm:eq:rank-one-pt-spectrum}; see also the standard pure-state partial-transpose calculation in Ref.~\cite{VidalWerner2002}.
\end{proof}

\subsection{Rank-one reduction of the standard dual}
\begin{proposition}
\label{sm:prop:rank-one-standard}
If \(\rho\) is an entangled two-qubit state, the standard dual \((D_{\mathrm s})\) has an optimal triple \((W,P,Q)\) with \(\operatorname{rank}W=1\).
\end{proposition}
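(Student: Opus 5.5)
We start from an attained optimum of the standard dual \((D_{\mathrm s})\); attainment is guaranteed by the strong-duality identities \eqref{sm:eq:strong-duality}. Among all optimal triples \((W,P,Q)\), pick one whose \(\operatorname{rank}W\) is minimal. This is possible because the rank takes values in \(\{0,\dots,4\}\). The goal is to show that this minimal rank equals one.

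\emph{Excluding rank zero.} Since \(\rho\) is entangled, the zero update \(K=0\) is infeasible in \((P_{\mathrm s})\). Hence \(\Rs(\rho)>0\), and by strong duality the optimal dual value \(-\Tr(WA)\) is strictly positive. This is impossible for \(W=0\), so \(\operatorname{rank}W\ge1\).

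\emph{Reducing any rank at least two.} Suppose \(\operatorname{rank}W\ge2\). Lemma~\ref{sm:lem:product-subspace}, applied to \(V=\ran W\), gives a normalized product vector \(\ket p\in\ran W\). With \(t_*\) as in \eqref{sm:eq:tstar}, set \(W'=W-t_*\proj p\), \(P'=P+t_*\proj p\) and \(Q'=Q\).

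\begin{enumerate}
\item Lemma~\ref{sm:lem:rank-subtraction} gives \(W'\succeq0\) and \(\operatorname{rank}W'<\operatorname{rank}W\).
\item We have \(P'\succeq0\), because we add a positive rank-one term with \(t_*>0\).
\item The equality constraint is preserved, since \(I-W'=I-W+t_*\proj p=P'+Q^\Gamma\).
\end{enumerate}

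So \((W',P',Q)\) is feasible for \((D_{\mathrm s})\). Its objective is
\[
-\Tr(W'A)=-\Tr(WA)+t_*\bra pA\ket p .
\]
Lemma~\ref{sm:lem:A-block-positive} gives \(\bra pA\ket p\ge0\), so the objective does not decrease. Because the original triple was already optimal, the new value cannot exceed the optimum, so the two values are equal. The new triple is therefore optimal with strictly smaller rank, contradicting minimality. Hence the minimal rank is exactly one.

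The argument has no real obstacle. The only delicate points are the following.
\begin{itemize}
\item The product vector must lie in \(\ran W\), not merely in the ambient space. This is why we apply Lemma~\ref{sm:lem:product-subspace} to \(\ran W\) itself, which has dimension at least two.
\item The monotonicity step uses block positivity of \(A=\rho^\Gamma\) on product vectors. This is exactly what Lemma~\ref{sm:lem:A-block-positive} supplies.
\item We need attainment of the dual optimum and positivity of \(\Rs\) for entangled states; both come from Sec.~\ref{sm:sec:conic}.
\end{itemize}
Equivalently, instead of the minimality argument, one could iterate the subtraction, which terminates after at most three steps.
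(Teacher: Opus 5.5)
Your proposal is correct and follows the paper's proof essentially step for step: a product vector in \(\ran W\) from Lemma~\ref{sm:lem:product-subspace}, the rank-one subtraction transferred into \(P\), non-decrease of the objective by block positivity of \(\rho^\Gamma\), and exclusion of rank zero via a strictly positive optimal value. Choosing a minimal-rank optimum instead of iterating is only a cosmetic difference. Note that deducing \(\Rs(\rho)>0\) from the infeasibility of \(K=0\) relies on the primal minimum being attained, which Sec.~\ref{sm:sec:conic} supplies.
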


\begin{proof}
By strong duality and attainment, choose an optimal triple \((W,P,Q)\) in \eqref{sm:eq:Ds}.  Suppose first that \(\operatorname{rank}W\ge2\).  By Lemma~\ref{sm:lem:product-subspace}, the range of \(W\) contains a normalized product vector \(\ket p\).  Apply Lemma~\ref{sm:lem:rank-subtraction} and set
\begin{equation}
 W'=W-t_*\proj p,\qquad
 P'=P+t_*\proj p.
 \label{sm:eq:rank-update}
\end{equation}
Then \(W',P',Q\succeq0\), the rank of \(W'\) is smaller, and
\begin{equation}
 \id-W'=P'+Q^\Gamma,
\end{equation}
so the new triple remains standard-dual feasible.  Moreover,
\begin{align}
 -\Tr(W'A)
 &=-\Tr(WA)+t_*\bra pA\ket p\\
 &\ge-\Tr(WA),
 \label{sm:eq:value-nondecrease}
\end{align}
where Lemma~\ref{sm:lem:A-block-positive} gives the inequality.  Since the original triple was optimal and the new triple is feasible, strict improvement is impossible; hence equality holds in \eqref{sm:eq:value-nondecrease}, and the new triple is also optimal.

Each repetition reduces the rank by at least one.  After finitely many steps, there is an optimum with rank at most one.  It remains to exclude rank zero.  An attained zero standard-primal value would require \(K=0\), hence \(A\succeq0\), contradicting entanglement.  Strong duality therefore gives a strictly positive dual optimum, so the final \(W\) cannot be zero and must have rank one.
\end{proof}

\begin{proposition}
\label{sm:prop:rank-one-bridge}
If a standard-dual feasible triple \((W,P,Q)\) has \(\operatorname{rank}W=1\), then \(W\) is feasible for the generalized dual \((D_{\mathrm g})\).
\end{proposition}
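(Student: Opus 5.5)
Write the rank-one operator as \(W=w\proj b\) with \(w>0\) and \(\ket b\) normalized, and fix a Schmidt decomposition of \(\ket b\) as in Lemma~\ref{sm:lem:rank-one-spectrum}, with coefficients \(\alpha_1\ge\alpha_2\ge0\). Positivity \(W\succeq0\) is immediate. It remains to show \(\id-W^\Gamma\succeq0\), which is equivalent to \(w\,\lambda_{\max}((\proj b)^\Gamma)\le1\). By Lemma~\ref{sm:lem:rank-one-spectrum}, this maximal eigenvalue equals \(\alpha_1^2\). The whole proof therefore reduces to establishing the scalar bound \(w\alpha_1^2\le1\).

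To obtain this bound, I will use only the standard-dual constraint \(\id-W=P+Q^\Gamma\) with \(P,Q\succeq0\). As recalled after \eqref{sm:eq:decomposable}, every decomposable operator is block-positive. Explicitly, for a normalized product vector \(\ket{x}=\ket e\otimes\ket f\) we have
\[
\bra{x}Q^\Gamma\ket{x}=\bra{e\otimes\overline f}Q\ket{e\otimes\overline f}\ge0.
\]
This is the same computation as in Lemma~\ref{sm:lem:A-block-positive}, applied to \(Q\) in place of \(\rho\). Hence \(\bra{x}(\id-W)\ket{x}\ge0\), that is, \(w|\braket{x}{b}|^2\le1\) for every normalized product \(\ket x\). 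Choosing \(\ket x=\ket{u_1}\ket{f_1}\), the leading Schmidt product vector, gives overlap \(\alpha_1^2\) by \eqref{sm:eq:max-product-overlap}. This yields \(w\alpha_1^2\le1\).

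Finally, I combine the two facts. The spectrum \eqref{sm:eq:rank-one-pt-spectrum} shows that every eigenvalue of \(W^\Gamma\) is at most \(w\alpha_1^2\le1\), because \(\alpha_2^2\le\alpha_1^2\) and \(\alpha_1\alpha_2\le\alpha_1^2\). Therefore \(\id-W^\Gamma\succeq0\), and \(W\) is feasible for \((D_{\mathrm g})\), with the same objective value \(-\Tr(WA)\).

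I do not expect a genuine obstacle. The only delicate point is the coincidence that, for a rank-one two-qubit \(W\), the largest eigenvalue of \(W^\Gamma\) equals the product numerical radius of \(W\). This coincidence is exactly the content of Lemma~\ref{sm:lem:rank-one-spectrum}, and it is what links the block-positivity condition to the operator inequality required by the generalized dual.
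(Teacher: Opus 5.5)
Your proposal is correct and follows essentially the same route as the paper: block positivity of \(\id-W=P+Q^\Gamma\) on product vectors gives \(w\alpha_1^2\le1\), and Lemma~\ref{sm:lem:rank-one-spectrum} identifies \(w\alpha_1^2\) as \(\lambda_{\max}(W^\Gamma)\), giving \(\id-W^\Gamma\succeq0\). The paper maximizes over all product vectors, whereas you evaluate directly at the leading Schmidt product \(\ket{u_1}\ket{f_1}\); you also verify the block positivity of \(Q^\Gamma\) explicitly. Neither difference is substantive.
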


\begin{proof}
Write \(W=w\proj b\), where \(w>0\) and \(\ket b\) is normalized.  Standard dual feasibility gives \(\id-W=P+Q^\Gamma\) with \(P,Q\succeq0\), so \(\id-W\) is block-positive.  For every normalized product vector \(\ket p\),
\begin{equation}
 0\le\bra p(\id-W)\ket p
 =1-w|\braket pb|^2.
\end{equation}
Maximizing over product vectors and using Lemma~\ref{sm:lem:rank-one-spectrum} gives
\begin{equation}
 w\alpha_1^2\le1.
 \label{sm:eq:w-alpha}
\end{equation}
The largest eigenvalue of \(W^\Gamma=w(\proj b)^\Gamma\) is exactly \(w\alpha_1^2\).  Thus \eqref{sm:eq:w-alpha} is equivalent to \(\id-W^\Gamma\succeq0\).  Together with \(W\succeq0\), these are precisely the generalized-dual constraints in \eqref{sm:eq:Dg}.
\end{proof}

\subsection{Equality and a common separable optimum}
\begin{theorem}
\label{sm:thm:equality}
For every two-qubit density operator \(\rho\),
\begin{equation}
 \Rs(\rho)=\Rg(\rho).
 \label{sm:eq:main}
\end{equation}
\end{theorem}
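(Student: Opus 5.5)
The plan is to assemble the equality from the results already established in this section, splitting into the separable and entangled cases.

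\emph{Separable case.} If \(\rho\) is separable, then \(Y=0\) is feasible for both \((P_{\mathrm g})\) and \((P_{\mathrm s})\). Since both objectives are nonnegative, \(\Rg(\rho)=\Rs(\rho)=0\).

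\emph{Entangled case.} Here I would prove the nontrivial inequality \(\Rs(\rho)\le\Rg(\rho)\) and combine it with the easy direction \eqref{sm:eq:easy-direction}.

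\emph{Step 1 (strong duality).} By \eqref{sm:eq:strong-duality}, \(\Rs(\rho)=\operatorname{val}(D_{\mathrm s})\) and \(\Rg(\rho)=\operatorname{val}(D_{\mathrm g})\), and both dual optima are attained.

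\emph{Step 2 (rank reduction).} Proposition~\ref{sm:prop:rank-one-standard} supplies an optimal standard-dual triple \((W,P,Q)\) with \(\operatorname{rank}W=1\). Its objective value is \(-\Tr(WA)=\Rs(\rho)\).

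\emph{Step 3 (bridge).} Proposition~\ref{sm:prop:rank-one-bridge} shows that this same \(W\) satisfies \(W\succeq0\) and \(\id-W^\Gamma\succeq0\), so it is feasible for \((D_{\mathrm g})\). The objective \(-\Tr(WA)\) is identical in the two dual programs. Hence
\[
\Rg(\rho)=\operatorname{val}(D_{\mathrm g})\ge -\Tr(WA)=\Rs(\rho).
\]
Together with \eqref{sm:eq:easy-direction}, this gives \eqref{sm:eq:main}.

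\emph{Common separable optimum.} To recover the theorem's second claim (from the main text), take an attained optimum \(K\) of \((P_{\mathrm s})\) and set \(Y=K^\Gamma\). This \(Y\) is separable, with \(Y\succeq0\) and \(A+Y^\Gamma=A+K\succeq0\). So \(Y\) is feasible for \((P_{\mathrm g})\), and its cost is \(\Tr Y=\Tr K=\Rs(\rho)=\Rg(\rho)\). It is therefore a separable generalized optimum.

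Almost all of the work sits in the earlier propositions, so this proof only has to chain them together. The one point to check is that the objectives and normalizations of \((D_{\mathrm s})\) and \((D_{\mathrm g})\) are literally the same functional \(-\Tr(W\rho^\Gamma)\). Both are written with the same \(A\), so a feasible point transferred from one dual to the other carries its value with it. The genuinely hard ingredient, already handled in Proposition~\ref{sm:prop:rank-one-standard}, was the existence of product vectors in two-dimensional ranges (Lemma~\ref{sm:lem:product-subspace}). It is needed to reach rank one without losing optimality.
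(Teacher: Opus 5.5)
Your proof is correct and follows the paper's argument essentially verbatim: the separable case via the zero update, then Proposition~\ref{sm:prop:rank-one-standard} for a rank-one standard-dual optimum, Proposition~\ref{sm:prop:rank-one-bridge} to transfer it to \((D_{\mathrm g})\), and the noise-set inclusion for the reverse inequality. Your added separable-optimum argument matches the paper's corollary, which likewise takes the partial transpose \(K_*^\Gamma\) of an attained optimum \(K_*\) of \((P_{\mathrm s})\).
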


\begin{proof}
If \(\rho\) is separable, \(s=0\) is feasible in both definitions, so both robustnesses vanish.  Let \(\rho\) now be entangled, and set \(A=\rho^\Gamma\).  Proposition~\ref{sm:prop:rank-one-standard} supplies a rank-one standard-dual optimum \(W_{\mathrm s}\).  By standard strong duality,
\begin{equation}
 \Rs(\rho)=-\Tr(W_{\mathrm s}A).
\end{equation}
Proposition~\ref{sm:prop:rank-one-bridge} makes the same operator feasible for the generalized dual.  Generalized weak duality, or equivalently the maximization in \eqref{sm:eq:Dg}, then yields
\begin{equation}
 \Rs(\rho)=-\Tr(W_{\mathrm s}A)\le\Rg(\rho).
 \label{sm:eq:hard-direction}
\end{equation}
Combining \eqref{sm:eq:hard-direction} with the direct noise-set inclusion \eqref{sm:eq:easy-direction} proves \eqref{sm:eq:main}.
\end{proof}

\begin{corollary}
For every two-qubit state, the generalized-robustness problem has at least one optimal noise state that is separable.
\end{corollary}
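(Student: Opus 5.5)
The plan is to take an optimizer of the \emph{standard} primal problem and show that it is also optimal for the generalized problem, using Theorem~\ref{sm:thm:equality}. If \(\rho\) is separable, the zero noise is optimal in both problems. Any separable \(\omega\) can be used to write it as a state, for instance \(\omega=\proj{00}\) with \(s=0\). The corollary is really about the optimal value \(s=\Rg(\rho)\), so the trivial case is immediate.

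For entangled \(\rho\), I use the attainment established in Sec.~\ref{sm:sec:conic}. The program \((P_{\mathrm s})\) has a minimizer \(K_*\) with \(\Tr K_*=\Rs(\rho)\). Set \(Y_*=K_*^\Gamma\). The constraints \(K_*\succeq0\) and \(K_*^\Gamma\succeq0\) say that \(Y_*\) is PPT and positive semidefinite. By the two-qubit criterion \eqref{sm:eq:ppt-sep}, this means \(Y_*\in\SEP_+\). The constraint \(A+K_*\succeq0\) gives \((\rho+Y_*)^\Gamma\succeq0\), and \(\rho+Y_*\succeq0\) holds trivially. Hence \(\rho+Y_*\in\SEP_+\), again by \eqref{sm:eq:ppt-sep}.

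Since \(Y_*\succeq0\), the operator \(Y_*\) is feasible for \((P_{\mathrm g})\). Its cost is \(\Tr Y_*=\Tr K_*=\Rs(\rho)\), using trace preservation of \(\Gamma\). This cost equals \(\Rg(\rho)\) by Theorem~\ref{sm:thm:equality}, so \(Y_*\) attains the generalized minimum. The robustness is positive for entangled \(\rho\), as shown in the proof of Proposition~\ref{sm:prop:rank-one-standard}, so \(\Tr Y_*>0\). Normalizing, \(\omega_*=Y_*/\Tr Y_*\) is a separable state that is optimal in \eqref{sm:eq:def-rg}.

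No step is a genuine obstacle. The only points that need care are the following two.
\begin{itemize}
\item Attainment of the standard primal, which is supplied by the compactness argument in Sec.~\ref{sm:sec:conic}.
\item The translation between the \(K\)-variable of \((P_{\mathrm s})\) and the separable noise \(Y\), which relies on both positivity conditions being kept in \((P_{\mathrm s})\).
\end{itemize}
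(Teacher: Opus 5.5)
Your proposal is correct and follows essentially the same route as the paper's proof. Both take an attained standard-primal optimum \(K_*\), use \(K_*\succeq0\), \(K_*^\Gamma\succeq0\) and the two-qubit PPT criterion to see that \(K_*^\Gamma\) is separable noise with \((\rho+K_*^\Gamma)^\Gamma=\rho^\Gamma+K_*\succeq0\), and conclude generalized optimality from \(\Tr K_*=\Rs(\rho)=\Rg(\rho)\), with the separable case handled trivially at \(s=0\).
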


\begin{proof}
If \(\rho\) is separable, choose any separable state \(\omega\) and set \(s=0\).  If \(\rho\) is entangled, let \(K_*\) be an attained optimum of the standard primal SDP \((P_{\mathrm s})\), and put
\begin{equation}
 s_*:=\Tr K_*=\Rs(\rho)>0,
 \qquad
 \omega_*:=\frac{K_*^\Gamma}{\Tr K_*}.
 \label{sm:eq:common-separable-noise-from-K}
\end{equation}
The constraints \(K_*\succeq0\) and \(K_*^\Gamma\succeq0\), together with the two-qubit PPT criterion, show that \(\omega_*\) is a normalized separable state.  Moreover, \((\rho+s_*\omega_*)^\Gamma=\rho^\Gamma+K_*\succeq0\), so the normalized mixture is separable.  Thus \(\omega_*\) is generalized-feasible with objective value \(s_*=\Rs(\rho)=\Rg(\rho)\), and is consequently generalized-optimal.
\end{proof}

\begin{remark}[Values versus optimizers]
The theorem does not state that the optimal noise is unique, that the two programs must return the same optimizer, or that every generalized-optimal noise is separable.  It states equality of the optimal values and, through the corollary, existence of at least one common separable optimum.
\end{remark}

\section{Variational formula and optimizer certificates}\label{sm:sec:variational-certificates}
\subsection{Exact single-vector formula}
The rank-one optimum also gives a variational formula.  For a normalized vector \(\ket b\), let \(\alpha_1(b)\) denote its largest Schmidt coefficient, with \(1/\sqrt2\le\alpha_1(b)\le1\), and put \([x]_+=\max\{x,0\}\).

\begin{theorem}[Exact single-vector formula]
\label{sm:thm:single-vector-formula}
For every two-qubit state \(\rho\),
\begin{equation}
 \Rs(\rho)=\Rg(\rho)=
 \max_{\substack{\ket b\in\mathbb C^2\otimes\mathbb C^2\\
                  \braket{b}{b}=1}}
 \frac{[-\bra b\rho^\Gamma\ket b]_+}{\alpha_1^2(b)}.
 \label{sm:eq:single-vector-formula}
\end{equation}
The maximum is attained.  If \(\rho\) is entangled and \(\ket{b_*}\) is obtained from a rank-one standard-dual optimum, then
\begin{equation}
 W_*=\frac{\proj{b_*}}{\alpha_1^2(b_*)}
 \label{sm:eq:saturated-rank-one-W}
\end{equation}
is optimal in both dual problems.
\end{theorem}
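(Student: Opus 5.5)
The proof shows the right-hand side of \eqref{sm:eq:single-vector-formula} is both an upper and a lower bound for the common value \(R(\rho):=\Rs(\rho)=\Rg(\rho)\) of Theorem~\ref{sm:thm:equality}. For the upper bound, fix any normalized \(\ket b\) with \(-\bra b A\ket b>0\) and set \(W_b=\proj b/\alpha_1^2(b)\). By Lemma~\ref{sm:lem:rank-one-spectrum}, \(\lambda_{\max}(W_b^\Gamma)=\alpha_1^2(b)/\alpha_1^2(b)=1\). Hence \(W_b\succeq0\) and \(I-W_b^\Gamma\succeq0\), so \(W_b\) is feasible for \((D_{\mathrm g})\). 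Weak duality then gives \(-\bra bA\ket b/\alpha_1^2(b)\le\Rg(\rho)\). If instead \(-\bra bA\ket b\le0\), the quantity \([\cdot]_+=0\le\Rg(\rho)\) trivially. So the supremum over \(\ket b\) is at most \(R(\rho)\).

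For the lower bound, the separable case is immediate. Both sides vanish, since every term of the maximum is nonnegative, and \(\rho\) separable means \(A\succeq0\), so each term is \(0\). For entangled \(\rho\), Proposition~\ref{sm:prop:rank-one-standard} gives an optimal standard-dual triple with \(W_{\mathrm s}=w\proj{b_*}\), and Proposition~\ref{sm:prop:rank-one-bridge} gives \(w\alpha_1^2(b_*)\le1\). Strong duality yields
\[
 R(\rho)=-w\bra{b_*}A\ket{b_*}>0 ,
\]
so \(-\bra{b_*}A\ket{b_*}>0\). Since \(w\le1/\alpha_1^2(b_*)\), we get \(R(\rho)\le -\bra{b_*}A\ket{b_*}/\alpha_1^2(b_*)\). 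This is exactly the value of the formula at \(\ket{b_*}\). Combined with the upper bound, equality holds, the maximum is attained at \(\ket{b_*}\), and in fact \(w=1/\alpha_1^2(b_*)\). The last point holds because strict inequality \(w<1/\alpha_1^2(b_*)\) would make \(W_*\) generalized-feasible with value exceeding \(\Rg\).

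Optimality of \(W_*\) in both duals then follows. For \((D_{\mathrm g})\), \(W_*\) is feasible by the first paragraph and attains \(R(\rho)\). For \((D_{\mathrm s})\), \(W_*=W_{\mathrm s}\) once \(w=1/\alpha_1^2(b_*)\) is established, so it is standard-dual optimal together with the original \(P,Q\).

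Attainment of the maximum for separable \(\rho\) is trivial, since any \(\ket b\) gives \(0\). In general it can also be seen from compactness of the unit sphere and continuity of \(\alpha_1(b)\), which is the largest singular value of the coefficient matrix and is bounded below by \(1/\sqrt2\), so no division issue arises.

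The main obstacle is the saturation step \(w=1/\alpha_1^2(b_*)\), i.e.\ identifying the rescaled vector with the standard-dual optimum itself rather than merely bounding it. The sandwich argument handles it: the rescaled operator is generalized-feasible and cannot beat \(\Rg=\Rs\), forcing equality because \(-\bra{b_*}A\ket{b_*}>0\).
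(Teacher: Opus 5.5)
Your proposal is correct and follows essentially the same route as the paper. You use generalized-dual feasibility of \(\proj b/\alpha_1^2(b)\), obtained from the partial-transpose spectrum, for one inequality. For the reverse inequality you use the rank-one standard-dual optimum, the bridge bound \(w\alpha_1^2(b_*)\le1\), and the same saturation argument. You also explicitly check standard-dual optimality of \(W_*\) via \(W_*=W_{\mathrm s}\), which the paper leaves implicit.
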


\begin{proof}
For every normalized \(\ket b\), Lemma~\ref{sm:lem:rank-one-spectrum} gives
\begin{equation}
 \lambda_{\max}\!\left[
 \left(\frac{\proj b}{\alpha_1^2(b)}\right)^\Gamma
 \right]=1.
\end{equation}
Consequently, \(W_b=\proj b/\alpha_1^2(b)\) satisfies \(W_b\succeq0\) and \(\id-W_b^\Gamma\succeq0\), so it is feasible for \((D_{\mathrm g})\).  Generalized weak duality therefore gives
\begin{equation}
 \Rg(\rho)\ge
 \frac{-\bra b\rho^\Gamma\ket b}{\alpha_1^2(b)}
 \label{sm:eq:single-vector-weak}
\end{equation}
for every \(b\); including the feasible dual point \(W=0\) replaces the right-hand side by its positive part.

If \(\rho\) is separable, \(\rho^\Gamma\succeq0\), so the maximum in \eqref{sm:eq:single-vector-formula} is zero.  Suppose that \(\rho\) is entangled.  Proposition~\ref{sm:prop:rank-one-standard} and Proposition~\ref{sm:prop:rank-one-bridge} give a generalized-dual optimum \(W=w\proj{b_*}\) with positive objective value and \(w\alpha_1^2(b_*)\le1\).  This inequality must be saturated.  Otherwise one could increase \(w\) until equality while retaining \(\id-W^\Gamma\succeq0\); because \(\bra{b_*}\rho^\Gamma\ket{b_*}<0\), the dual objective would strictly increase, contradicting optimality.  Thus \(w=1/\alpha_1^2(b_*)\), and this optimizer attains the right-hand side of \eqref{sm:eq:single-vector-formula}.  Continuity on the compact unit sphere also proves attainment independently.
\end{proof}

Equation~\eqref{sm:eq:single-vector-formula} is equivalent, after a change of normalization, to the rank-one/filter variational form used in the two-qubit teleportation SDP of Ref.~\cite{VerstraeteVerschelde2003}.  The present derivation shows why the same normalized vector also optimizes the standard robustness.

\subsection{Complementary slackness and optimal-noise structure}
The optimizer \(W_*\) controls more than the optimal value.  Let
\begin{equation}
 S:=\rho^\Gamma+Y^\Gamma,
 \qquad H:=\id-W_*^\Gamma
 \label{sm:eq:SH-slacks}
\end{equation}
denote the two generalized primal--dual slacks.

\begin{lemma}[Complementary slackness]
\label{sm:lem:complementary-slackness}
If \(Y\) is any generalized-primal optimum and \(W_*\) any generalized-dual optimum, then
\begin{equation}
 HY=0,
 \qquad W_*S=0.
 \label{sm:eq:matrix-complementarity}
\end{equation}
\end{lemma}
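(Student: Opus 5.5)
The plan is the standard SDP complementary-slackness argument: zero duality gap makes a sum of two nonnegative trace inner products vanish, and each vanishing trace of a product of positive semidefinite matrices gives the matrix identity.

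\emph{Step 1: the gap identity.} Let \(Y\) be feasible for \((P_{\mathrm g})\), so \(Y\succeq0\) and \(S=A+Y^\Gamma\succeq0\). Let \(W_*\) be feasible for \((D_{\mathrm g})\), so \(W_*\succeq0\) and \(H=\id-W_*^\Gamma\succeq0\). Using the self-adjointness \(\Tr(XY^\Gamma)=\Tr(X^\Gamma Y)\) from \eqref{sm:eq:pt-identities}, we get
\begin{equation*}
 \Tr Y-\bigl(-\Tr(W_*A)\bigr)
 =\Tr\bigl[(\id-W_*^\Gamma)Y\bigr]+\Tr\bigl[W_*(A+Y^\Gamma)\bigr]
 =\Tr(HY)+\Tr(W_*S).
\end{equation*}
This is the Lagrangian \eqref{sm:eq:Lg} rewritten, so it needs no new computation.

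\emph{Step 2: both terms vanish.} If \(Y\) and \(W_*\) are optimal, strong duality \eqref{sm:eq:strong-duality} says the left side is \(\Rg(\rho)-\Rg(\rho)=0\). Each trace on the right is the trace of a product of two positive semidefinite operators, hence nonnegative. Both must therefore be zero.

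\emph{Step 3: from zero trace to zero product.} For \(X,Z\succeq0\), write
\begin{equation*}
 \Tr(XZ)=\Tr\bigl(Z^{1/2}XZ^{1/2}\bigr)=\bigl\|X^{1/2}Z^{1/2}\bigr\|_2^2 .
\end{equation*}
So \(\Tr(XZ)=0\) forces \(X^{1/2}Z^{1/2}=0\), and multiplying on the left by \(X^{1/2}\) and on the right by \(Z^{1/2}\) gives \(XZ=0\). Applying this to \((X,Z)=(H,Y)\) and to \((W_*,S)\) yields \(HY=0\) and \(W_*S=0\).

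There is no real obstacle here. The only points needing care are the sign and partial-transpose bookkeeping in Step 1, which rests on \(\Gamma\) being self-adjoint, and using attainment of both optima without a duality gap. Both were established in Sec.~\ref{sm:sec:conic} through Slater's condition.
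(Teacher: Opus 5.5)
Your proof is correct and follows the paper's argument: the same gap decomposition $\Tr Y+\Tr(W_*A)=\Tr(HY)+\Tr(W_*S)$, strong duality to make it vanish, and nonnegativity of each trace. Your Step 3 uses $\Tr(XZ)=\|X^{1/2}Z^{1/2}\|_2^2$ where the paper notes that $M^{1/2}NM^{1/2}\succeq0$ has zero trace and hence vanishes; the two are essentially the same observation.
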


\begin{proof}
The primal--dual gap decomposes as
\begin{align}
 0&=\Tr Y+\Tr(W_*\rho^\Gamma)\notag\\
  &=\Tr[(\id-W_*^\Gamma)Y]
    +\Tr[W_*(\rho^\Gamma+Y^\Gamma)].
 \label{sm:eq:gap-decomposition}
\end{align}
Each trace on the last line is nonnegative because both factors are positive semidefinite.  Hence both vanish.  For positive semidefinite matrices \(M,N\), \(\Tr(MN)=0\) implies \(MN=0\): indeed, \(M^{1/2}NM^{1/2}\succeq0\) has zero trace and is therefore zero, which forces the supports of \(M\) and \(N\) to be orthogonal.  Applying this fact to the two terms gives \eqref{sm:eq:matrix-complementarity}.
\end{proof}

\begin{theorem}[Optimal-noise structure]
\label{sm:thm:optimal-noise-structure}
Let \(\rho\) be an entangled two-qubit state, write the common robustness as \(\Rob(\rho):=\Rs(\rho)=\Rg(\rho)>0\), and let \(\ket{b_*}\) be a maximizer in \eqref{sm:eq:single-vector-formula}. Choose its Schmidt decomposition as
\begin{equation}
 \ket{b_*}=\alpha_1\ket{u_1}\ket{f_1}
            +\alpha_2\ket{u_2}\ket{f_2},
 \qquad \alpha_1\ge\alpha_2\ge0.
 \label{sm:eq:optimal-b-schmidt}
\end{equation}
Here \(\overline{f_j}\) denotes entrywise complex conjugation in the basis used to define \(\Gamma\).  Since the optimal value is positive and \(\rho^\Gamma\) is block-positive, \(\ket{b_*}\) is entangled and hence \(\alpha_2>0\). Define \(W_*:=\proj{b_*}/\alpha_1^2\).  It is feasible for \((D_{\mathrm g})\), and maximality in \eqref{sm:eq:single-vector-formula} makes it generalized-dual optimal.

If \(\alpha_1>\alpha_2\), the generalized-primal optimum is unique and is
\begin{equation}
 Y_*=\Rob(\rho)\,
 \proj{u_1\otimes\overline{f_1}}.
 \label{sm:eq:unique-product-noise}
\end{equation}
Thus the unique normalized generalized-optimal noise is the pure product state \(\proj{u_1\otimes\overline{f_1}}\), and it is also admissible in the standard problem.

If \(\alpha_1=\alpha_2=1/\sqrt2\), complementary slackness gives only
\begin{equation}
 \ran Y\subseteq\mathcal K_{b_*}:=
 \ker(\id-W_*^\Gamma),
 \qquad \dim\mathcal K_{b_*}=3.
 \label{sm:eq:degenerate-noise-support}
\end{equation}
It does not, by itself, imply that the optimum is unique, rank one, or separable.
\end{theorem}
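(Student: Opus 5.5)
We work in the Schmidt frame of \(\ket{b_*}\) and combine complementary slackness (Lemma~\ref{sm:lem:complementary-slackness}) with the explicit spectrum of Lemma~\ref{sm:lem:rank-one-spectrum}.

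\emph{Preliminary claims.} Entanglement of \(\ket{b_*}\) is immediate. If \(\alpha_2=0\), then \(\ket{b_*}\) is a product vector, and Lemma~\ref{sm:lem:A-block-positive} gives \(\bra{b_*}\rho^\Gamma\ket{b_*}\ge0\). The formula \eqref{sm:eq:single-vector-formula} would then return \(0\), contradicting \(\Rob(\rho)>0\).

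Feasibility of \(W_*\) was shown in the proof of Theorem~\ref{sm:thm:single-vector-formula}. Since its objective equals the maximum in \eqref{sm:eq:single-vector-formula}, which equals \(\Rg(\rho)\), strong duality makes \(W_*\) generalized-dual optimal.

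\emph{Nondegenerate case \(\alpha_1>\alpha_2\).} Apply local unitaries so that \(\ket{b_*}=\alpha_1\ket{00}+\alpha_2\ket{11}\). Under \(U\otimes V\), partial transpose is conjugated by \(U\otimes\overline V\). Consequently, the state \(\ket{00}\) in the transposed frame corresponds to \(\ket{u_1\otimes\overline{f_1}}\) in the original frame.

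By Lemma~\ref{sm:lem:rank-one-spectrum}, \(W_*^\Gamma\) has eigenvalues \(1,\ \alpha_2^2/\alpha_1^2,\ \pm\alpha_2/\alpha_1\). When \(\alpha_1>\alpha_2\), all eigenvalues except the one on \(\ket{00}\) are strictly below \(1\). Hence \(H=\id-W_*^\Gamma\) has a one-dimensional kernel, spanned by that vector.

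Complementary slackness \(HY=0\) forces \(\ran Y\subseteq\ker H\). Therefore \(Y=c\,\proj{u_1\otimes\overline{f_1}}\) with \(c=\Tr Y\). Optimality of \(Y\) fixes \(c=\Rob(\rho)\).

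Every generalized-primal optimum therefore equals \(Y_*\), which gives uniqueness. Existence of some optimum is already known from attainment. As a consistency check, one can also verify directly that \(Y_*\) is feasible, but attainment suffices. The noise \(Y_*\) is a pure product state, hence separable, so it is admissible in the standard problem.

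\emph{Degenerate case \(\alpha_1=\alpha_2=1/\sqrt2\).} Here the eigenvalues of \(W_*^\Gamma\) become \(1,1,1,-1\). The kernel of \(H\) is then the three-dimensional span of \(\ket{00}\), \(\ket{11}\) and \((\ket{01}+\ket{10})/\sqrt2\), pulled back to the original frame. Complementary slackness gives only \eqref{sm:eq:degenerate-noise-support}.

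The final sentence of the statement is a disclaimer rather than a claim. At most, I would remark that the kernel contains entangled vectors, so support considerations cannot force rank one or separability.

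\emph{Main obstacle.} The only delicate point is bookkeeping of the complex conjugation. One must check that the eigenvector \(\ket{00}\) of \((\proj{b})^\Gamma\) in the Schmidt frame maps back to \(u_1\otimes\overline{f_1}\), and not to \(u_1\otimes f_1\).
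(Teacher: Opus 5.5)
Your proposal is correct and follows essentially the paper's own argument. Both use the spectrum of $W_*^\Gamma$ from Lemma~\ref{sm:lem:rank-one-spectrum}, note that for $\alpha_1>\alpha_2$ the eigenvalue one is simple with eigenvector $\ket{u_1}\ket{\overline{f_1}}$, apply $HY=0$ with $\Tr Y=\Rob(\rho)$ to pin down every optimum, and in the degenerate case obtain only the three-dimensional support inclusion. Your explicit $U\otimes\overline V$ conjugation bookkeeping and your appeal to attainment for existence spell out what the paper leaves implicit.
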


\begin{proof}
By the definition of \(W_*\) above and Lemma~\ref{sm:lem:rank-one-spectrum}, the eigenvalues of \(W_*^\Gamma\) are
\begin{equation}
 1,\quad \frac{\alpha_2^2}{\alpha_1^2},\quad
 \frac{\alpha_2}{\alpha_1},\quad
 -\frac{\alpha_2}{\alpha_1}.
 \label{sm:eq:optimal-W-spectrum}
\end{equation}
When \(\alpha_1>\alpha_2\), the eigenvalue one is simple and its normalized eigenvector is \(\ket{u_1}\ket{\overline{f_1}}\).  The first relation in \eqref{sm:eq:matrix-complementarity} therefore forces every optimal \(Y\) to be a nonnegative multiple of the projector onto this vector.  Since \(\Tr Y=\Rob(\rho)\), the multiple is fixed and \eqref{sm:eq:unique-product-noise} follows.  Hence the optimum is unique.

When \(\alpha_1=\alpha_2\), the spectrum in \eqref{sm:eq:optimal-W-spectrum} becomes \(\{1,1,1,-1\}\).  In Schmidt coordinates, the eigenvalue-one space is spanned by
\begin{equation}
 \ket{u_1\otimes\overline{f_1}},\quad
 \ket{u_2\otimes\overline{f_2}},\quad
 \frac{\ket{u_1\otimes\overline{f_2}}
       +\ket{u_2\otimes\overline{f_1}}}{\sqrt2}.
 \label{sm:eq:degenerate-kernel-basis}
\end{equation}
The same complementary-slackness relation proves only the support inclusion \eqref{sm:eq:degenerate-noise-support}.  A positive operator supported on this three-dimensional space can contain off-diagonal terms and need not be separable, so no stronger conclusion follows without additional primal information.
\end{proof}

\begin{figure}[t]
 \centering
 \includegraphics[width=0.92\linewidth]{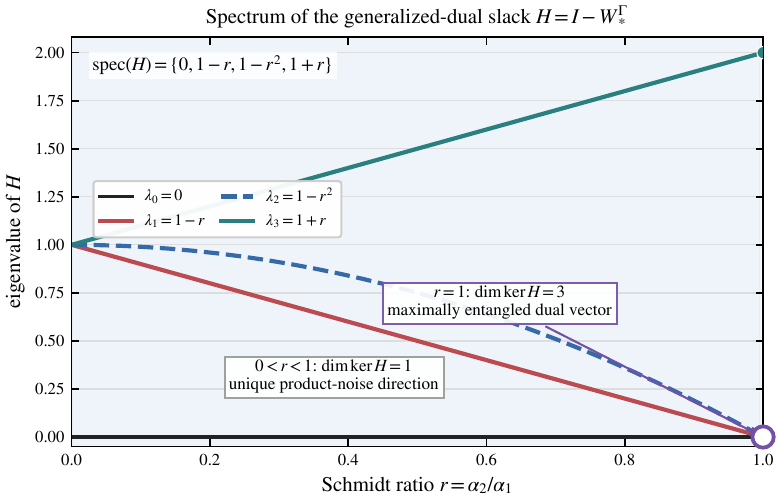}
 \caption{Spectrum of the generalized-dual slack \(H=I-W_*^\Gamma\). With Schmidt ratio \(r=\alpha_2/\alpha_1\), one has \(\operatorname{spec}(H)=\{0,1-r,1-r^2,1+r\}\). For \(0<r<1\), complementary slackness selects a unique product direction. At \(r=1\), the kernel is three-dimensional and supplies only a support restriction.}
 \label{sm:fig:dual-slack-spectrum}
\end{figure}

Figure~\ref{sm:fig:dual-slack-spectrum} displays the change of kernel dimension.

For a concrete degenerate example, take \(\rho=\proj{\Phi^+}\), where
\(\ket{\Phi^+}=(\ket{00}+\ket{11})/\sqrt2\). Its robustness is one.
Both \(Y_{\rm sep}=(\proj{01}+\proj{10})/2\) and
\(Y_{\rm ent}=\proj{\Phi^-}\), with
\(\ket{\Phi^-}=(\ket{00}-\ket{11})/\sqrt2\), have trace one and make
\(\rho+Y\) PPT. They are therefore generalized-optimal, although the
second noise is entangled. This separates the existence statement of
Theorem~\ref{sm:thm:equality} from a claim about every optimizer.

For maximally entangled maximizing vectors, the support constraint alone does not determine whether a single-product optimum exists. Proposition~\ref{sm:prop:x-minimum-products} resolves this distinction for X states; no complete preparation-size classification for arbitrary two-qubit states is claimed.

\subsection{Nonlinear kernel and inverse partial-transpose certificates}
\begin{corollary}[Nonlinear kernel certificate]
\label{sm:cor:nonlinear-kernel}
In the nondegenerate case of Theorem~\ref{sm:thm:optimal-noise-structure}, the optimal vector and common value obey
\begin{align}
 \rho^\Gamma\ket{b_*}
 &=-\Rob(\rho)\alpha_1\ket{u_1\otimes f_1},
 \label{sm:eq:nonlinear-kernel-equation}\\
 \rho^\Gamma+\Rob(\rho)\proj{u_1\otimes f_1}&\succeq0.
 \label{sm:eq:nonlinear-primal-certificate}
\end{align}
For the converse, suppose that \(\rho\) is a two-qubit density operator and that \(\ket b\) is a normalized vector with Schmidt coefficients \(\alpha_1>\alpha_2>0\).  Denote its leading Schmidt product vector by \(\ket{u_1\otimes f_1}\), and choose a real number \(r>0\).  If
\begin{align}
 \rho^\Gamma\ket b&=-r\alpha_1\ket{u_1\otimes f_1},
 \label{sm:eq:nonlinear-kernel-converse}\\
 \rho^\Gamma+r\proj{u_1\otimes f_1}&\succeq0,
 \label{sm:eq:nonlinear-primal-converse}
\end{align}
then \(r=\Rs(\rho)=\Rg(\rho)\).
\end{corollary}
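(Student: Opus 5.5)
The plan is to use complementary slackness (Lemma~\ref{sm:lem:complementary-slackness}) for the forward direction and a matching primal--dual pair for the converse. The one identity used throughout is that the partial transpose of a product projector conjugates the second factor: \((\proj{u_1\otimes\overline{f_1}})^\Gamma=\proj{u_1\otimes f_1}\). This is the computation already made in the proof of Lemma~\ref{sm:lem:A-block-positive}. I also use the Schmidt form \eqref{sm:eq:optimal-b-schmidt}, in which \(\braket{u_1\otimes f_1}{b}=\alpha_1\) is real and positive.

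\emph{Forward direction.} In the nondegenerate case, Theorem~\ref{sm:thm:optimal-noise-structure} gives the unique generalized-primal optimum \(Y_*=\Rob(\rho)\proj{u_1\otimes\overline{f_1}}\). It also gives the generalized-dual optimum \(W_*=\proj{b_*}/\alpha_1^2\). By the identity above, the primal slack is
\[
S=\rho^\Gamma+Y_*^\Gamma=\rho^\Gamma+\Rob(\rho)\proj{u_1\otimes f_1}.
\]
Primal feasibility says \(S\succeq0\), which is \eqref{sm:eq:nonlinear-primal-certificate}. Complementary slackness gives \(W_*S=0\). Since \(W_*\) is a positive multiple of \(\proj{b_*}\) and \(S\) is Hermitian, this is equivalent to \(S\ket{b_*}=0\). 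Expanding,
\[
\rho^\Gamma\ket{b_*}=-\Rob(\rho)\braket{u_1\otimes f_1}{b_*}\,\ket{u_1\otimes f_1}=-\Rob(\rho)\alpha_1\ket{u_1\otimes f_1},
\]
which is \eqref{sm:eq:nonlinear-kernel-equation}.

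\emph{Converse.} Set \(Y=r\proj{u_1\otimes\overline{f_1}}\). Then \(Y\succeq0\) and \(\Tr Y=r\). Its partial transpose is \(r\proj{u_1\otimes f_1}\), so \(A+Y^\Gamma\succeq0\) is exactly hypothesis \eqref{sm:eq:nonlinear-primal-converse}. Hence \(Y\) is feasible for \((P_{\mathrm g})\), and \(\Rg(\rho)\le r\).

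For the lower bound, take \(W=\proj b/\alpha_1^2\). By Lemma~\ref{sm:lem:rank-one-spectrum} it is feasible for \((D_{\mathrm g})\), exactly as in the proof of Theorem~\ref{sm:thm:single-vector-formula}. Using \eqref{sm:eq:nonlinear-kernel-converse}, its value is
\[
-\Tr(WA)=-\frac{\bra b\rho^\Gamma\ket b}{\alpha_1^2}=\frac{r\alpha_1\braket{b}{u_1\otimes f_1}}{\alpha_1^2}=r.
\]
Weak duality then gives \(\Rg(\rho)\ge r\), so \(\Rg(\rho)=r\). Theorem~\ref{sm:thm:equality} upgrades this to \(\Rs(\rho)=r\).

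I expect no serious obstacle; the argument is bookkeeping. The delicate points are two. First, the conjugation convention must be applied consistently: the optimal noise involves \(\overline{f_1}\), while the slack and kernel equation involve \(f_1\). Second, the phases of the Schmidt vectors must be fixed so that the overlap \(\braket{u_1\otimes f_1}{b}\) equals \(\alpha_1\) rather than a phase multiple of it. The converse does not actually need \(\alpha_1>\alpha_2\); that hypothesis only matters for invoking uniqueness in the forward direction.
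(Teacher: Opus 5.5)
Your proposal is correct and follows the paper's proof. The forward direction takes the unique optimum \(Y_*\) from Theorem~\ref{sm:thm:optimal-noise-structure} and applies the complementary-slackness relation \(W_*S=0\); the converse pairs the primal point \(r\proj{u_1\otimes\overline{f_1}}\) with the dual point \(\proj b/\alpha_1^2\) and then invokes Theorem~\ref{sm:thm:equality}, exactly as the paper does.
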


\begin{proof}
Taking the partial transpose of \eqref{sm:eq:unique-product-noise} gives \(Y_*^\Gamma=\Rob(\rho)\proj{u_1\otimes f_1}\).  The second relation in \eqref{sm:eq:matrix-complementarity} implies \((\rho^\Gamma+Y_*^\Gamma)\ket{b_*}=0\), which is \eqref{sm:eq:nonlinear-kernel-equation}; primal feasibility gives \eqref{sm:eq:nonlinear-primal-certificate}.  Conversely, \eqref{sm:eq:nonlinear-primal-converse} makes \(Y=r\proj{u_1\otimes\overline{f_1}}\) generalized-primal feasible, whereas \eqref{sm:eq:nonlinear-kernel-converse} implies \(-\bra b\rho^\Gamma\ket b=r\alpha_1^2\), so \(W=\proj b/\alpha_1^2\) is generalized-dual feasible with objective \(r\).  The primal feasible point gives \(\Rg(\rho)\le r\), and the dual feasible point gives \(\Rg(\rho)\ge r\).  Thus \(\Rg(\rho)=r\), and Theorem~\ref{sm:thm:equality} gives the same value for \(\Rs(\rho)\).
\end{proof}

The converse gives a direct certificate for structured families: once \(\ket b\) and \(r\) satisfy the two displayed conditions, the leading Schmidt product vector fixes the certified optimal noise.

Every entangled two-qubit state satisfies \(\det(\rho^\Gamma)<0\), so its partial transpose is automatically nonsingular, even if \(\rho\) itself is rank deficient \cite{AugusiakDemianowiczHorodecki2008}. This gives a reduction to the product numerical range of the inverse partial transpose; the condition for a finite single-product update remains essential.

\begin{proposition}[Inverse partial-transpose reduction]
\label{sm:prop:inverse-pt-reduction}
Let \(\rho\) be an entangled two-qubit state and put \(A=\rho^\Gamma\).  Let \(\mathcal P\) be the set of normalized product vectors, and define
\begin{equation}
 q_{\min}(A^{-1}):=
 \min_{\ket p\in\mathcal P}
 \bra pA^{-1}\ket p.
 \label{sm:eq:qmin-definition}
\end{equation}
Also define the least single-product update weight by
\begin{equation}
 R_{\mathrm{prod}}(\rho):=
 \inf_{\substack{t\ge0,\ \ket p\in\mathcal P\\
                  A+t\proj p\succeq0}}t,
 \label{sm:eq:Rprod-definition}
\end{equation}
with the convention that the infimum of the empty set is \(+\infty\). For a product vector \(\ket p=\ket e\otimes\ket f\), the corresponding physical noise is \((\proj p)^\Gamma=\proj{e\otimes\overline f}\), which is again a normalized pure product state. If \(q_{\min}(A^{-1})<0\), the least feasible weight when the added noise is restricted to one pure product state is
\begin{equation}
 R_{\mathrm{prod}}(\rho)
 =-\frac{1}{q_{\min}(A^{-1})}.
 \label{sm:eq:pure-product-inverse-formula}
\end{equation}
If \(q_{\min}(A^{-1})\ge0\), no finite single-product update is feasible. If \eqref{sm:eq:single-vector-formula} has a maximizing vector with unequal Schmidt coefficients, then \(q_{\min}(A^{-1})<0\) and
\begin{equation}
 \Rob(\rho)=R_{\mathrm{prod}}(\rho)
 =-\frac{1}{q_{\min}((\rho^\Gamma)^{-1})}.
 \label{sm:eq:common-inverse-formula}
\end{equation}
\end{proposition}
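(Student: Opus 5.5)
The plan is to reduce the single-product problem to a scalar condition on one vector, using the fact that \(A\) is invertible with exactly one negative eigenvalue. Since \(\rho\) is entangled, \(\det A<0\). Block positivity of \(A\) (Lemma~\ref{sm:lem:A-block-positive}) together with Lemma~\ref{sm:lem:product-subspace} rules out two negative eigenvalues. Indeed, two negative eigenvectors would span a two-dimensional subspace on which \(A\prec0\), and that subspace would contain a product vector. Since \(\det A<0\), an even number of negative eigenvalues is excluded, and four is excluded by the same argument. So \(A\) has inertia \((3,1)\).

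For fixed \(p\) and \(t>0\), I would use the rank-one determinant identity
\[
\det(A+t\proj p)=\det A\,(1+t\bra pA^{-1}\ket p).
\]
Interlacing shows that \(A+t\proj p\) has at most one negative eigenvalue. It is therefore positive semidefinite exactly when \(\det(A+t\proj p)\ge0\). Because \(\det A<0\), this condition reads \(1+t\bra pA^{-1}\ket p\le0\).

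If \(\bra pA^{-1}\ket p\ge0\), the condition fails for every \(t\ge0\). This is immediate for \(t>0\), and \(t=0\) fails because \(A\not\succeq0\). If \(\bra pA^{-1}\ket p<0\), the feasible set is \(t\ge-1/\bra pA^{-1}\ket p\). Taking the infimum over \(p\) gives
\[
R_{\mathrm{prod}}=\inf_p\left(-\frac{1}{\bra pA^{-1}\ket p}\right)=-\frac{1}{q_{\min}(A^{-1})}
\]
when \(q_{\min}<0\), and an empty feasible set otherwise. The minimum in \(q_{\min}\) is attained by compactness of \(\mathcal P\), so the infimum is attained as well. The physical interpretation of the noise follows from \((\proj{e\otimes f})^\Gamma=\proj{e\otimes\overline f}\).

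For the final claim, I would invoke Theorem~\ref{sm:thm:optimal-noise-structure}. When the maximizer \(\ket{b_*}\) has \(\alpha_1>\alpha_2\), the optimum is \(Y_*=\Rob\,\proj{u_1\otimes\overline{f_1}}\). Hence \(A+\Rob\proj{u_1\otimes f_1}\succeq0\), which is a feasible single-product update of weight \(\Rob\). This gives \(R_{\mathrm{prod}}\le\Rob\), and the first part then forces \(q_{\min}<0\). Conversely, any feasible single-product update \(t\proj p\) yields generalized-primal noise \(t\proj{e\otimes\overline f}\). So \(\Rg\le t\), i.e. \(\Rob\le R_{\mathrm{prod}}\). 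Combining the two bounds with Theorem~\ref{sm:thm:equality} gives \eqref{sm:eq:common-inverse-formula}.

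The main obstacle is the inertia step. It must rule out every configuration with more than one negative eigenvalue of \(A\), including zero eigenvalues. Zero eigenvalues are excluded here only because \(\det A<0\) is assumed from the cited result. The remaining work is to check the interlacing equivalence between the determinant sign and positive semidefiniteness carefully.
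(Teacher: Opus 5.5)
Your proof is correct and has the same skeleton as the paper's. Both use the rank-one determinant identity for necessity, then monotonicity of $-1/q$ on $q<0$ with compactness of $\mathcal P$ for the minimization. Both finish \eqref{sm:eq:common-inverse-formula} with Theorem~\ref{sm:thm:optimal-noise-structure} and the inclusion of product noises in the generalized noise set. The differences are local, in two places.

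\emph{Inertia.} The paper cites the inertia $(3,1,0)$ of $\rho^\Gamma$. You derive it from the single cited fact $\det A<0$ together with Lemmas~\ref{sm:lem:A-block-positive} and~\ref{sm:lem:product-subspace}. This makes the ``at most one negative eigenvalue'' part follow from the paper's own tools. There is a small slip in your counting: once $\det A<0$ excludes zero eigenvalues and even counts, the case your two-dimensional-subspace argument must remove is three negative eigenvalues, not four. Your argument does cover it, since it excludes any count $\ge2$.

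\emph{Sufficiency of $q<0$.} The paper works at the threshold $t=-1/q$. It splits $x=y+cz$ with $z=A^{-1}\ket p$ and uses Sylvester's law to get $\bra x(A-q^{-1}\proj p)\ket x=\bra yA\ket y\ge0$, then notes that larger weights stay feasible. You instead use Weyl monotonicity: $A+t\proj p$ keeps three positive eigenvalues, so it is positive semidefinite exactly when $\det(A+t\proj p)\ge0$, i.e.\ when $1+tq\le0$.

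Your route gives the whole feasible set $\{t\ge-1/q\}$, and both directions of \eqref{sm:eq:fixed-product-equivalence}, in one step. The paper's route instead gives an explicit quadratic-form certificate at the threshold and makes the null direction $z$ of the updated operator visible.
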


\begin{proof}
An entangled two-qubit partial transpose \(A\) has inertia \((3,1,0)\) \cite{SanperaTarrachVidal1998,Rana2013,AugusiakDemianowiczHorodecki2008}. Fix a normalized product vector \(p\) and write \(q=\bra pA^{-1}\ket p\).
If \(A+t\proj p\succeq0\), the determinant identity
\begin{equation}
 \det(A+t\proj p)=\det A\,(1+tq)
 \label{sm:eq:determinant-rank-one-update}
\end{equation}
and \(\det A<0\) imply \(q<0\) and \(t\ge-1/q\).

Conversely, suppose \(q<0\), set \(z=A^{-1}p\), and use the Hermitian form
\([x,y]_A=\bra xA\ket y\). Since \([z,z]_A=q<0\), the \(A\)-orthogonal
complement of \(z\) is positive definite by Sylvester's law of inertia.
Every vector has a unique decomposition
\(x=y+cz\), where \([z,y]_A=0\) and \(c=\braket p x/q\). Therefore
\begin{equation}
 \bra x(A-q^{-1}\proj p)\ket x
 =\bra yA\ket y+|c|^2q-|cq|^2/q
 =\bra yA\ket y\ge0.
 \label{sm:eq:rank-one-inertia-update}
\end{equation}
Hence the least feasible weight for this \(p\) is \(-1/q\); every larger
weight is also feasible. This proves
\begin{equation}
 \exists\,t\ge0:\ A+t\proj p\succeq0
 \quad\Longleftrightarrow\quad q(p)<0.
 \label{sm:eq:fixed-product-equivalence}
\end{equation}
The product-vector set is compact, and \(-1/q\) is increasing on \(q<0\).
Minimization thus gives Eq.~\eqref{sm:eq:pure-product-inverse-formula};
if \(q_{\min}\ge0\), no update exists.

Finally, a maximizing vector with unequal Schmidt coefficients supplies
the optimal product noise in Theorem~\ref{sm:thm:optimal-noise-structure}.
Its partial transpose is a feasible update of weight \(R\), so
\(q_{\min}<0\) and \(R_{\rm prod}\le R\). The reverse inequality follows
because product noises form a subset of the generalized noise set.
\end{proof}

The quantity \(q_{\min}\) is the lower endpoint of the product numerical range of \((\rho^\Gamma)^{-1}\) \cite{PuchalaEtAl2011}.  Thus, in the stated nondegenerate case, evaluating the robustness reduces to a product-vector minimization for the inverse partial transpose.  When only maximally entangled maximizing vectors are available, equality \(R=R_{\rm prod}\) does not follow; the X-state classification in Proposition~\ref{sm:prop:x-minimum-products} gives explicit cases with and without a single-product optimum.

\section{Exact X-state formula and optimal noises}\label{sm:sec:x-values}

Local phase unitaries put every two-qubit X state into the real canonical form
\begin{equation}
 \rho_X=
 \begin{pmatrix}
 a&0&0&w\\
 0&b&z&0\\
 0&z&c&0\\
 w&0&0&d
 \end{pmatrix},
 \quad
 \begin{gathered}
 a,b,c,d,w,z\ge0,\\
 a+b+c+d=1,
 \end{gathered}
 \label{sm:eq:x-state}
\end{equation}
with physicality conditions
\begin{equation}
 w^2\le ad,
 \qquad z^2\le bc.
 \label{sm:eq:x-physicality}
\end{equation}
After a basis permutation, its partial transpose is
\begin{equation}
 \rho_X^\Gamma\cong
 \begin{pmatrix}a&z\\z&d\end{pmatrix}
 \oplus
 \begin{pmatrix}b&w\\w&c\end{pmatrix}.
 \label{sm:eq:x-pt-blocks}
\end{equation}
Consequently, the state is entangled precisely when either \(z^2>ad\) or \(w^2>bc\); the two violations cannot occur simultaneously because of Eq.~\eqref{sm:eq:x-physicality}.

\begin{theorem}[Two-qubit X-state robustness]
\label{sm:thm:x-state}
If \(z^2>ad\), then
\begin{equation}
 \Rg(\rho_X)=\Rs(\rho_X)=
 \begin{cases}
 \dfrac{z^2-ad}{\max\{a,d\}},
 &\sqrt{ad}<z\le\max\{a,d\},\\[1.2ex]
 2z-a-d,
 &z>\max\{a,d\}.
 \end{cases}
 \label{sm:eq:x-formula-z}
\end{equation}
If \(w^2>bc\), the formula follows from the simultaneous replacement
\begin{equation}
 (a,d,z)\longmapsto(b,c,w).
 \label{sm:eq:x-branch-swap}
\end{equation}
If neither strict inequality holds, both robustnesses vanish.
\end{theorem}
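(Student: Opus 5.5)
The plan is to certify both branches of the formula by matching primal and dual feasible points. Theorem~\ref{sm:thm:equality} already gives \(\Rs=\Rg\), so it suffices to compute \(\Rg\). Even better, we can exhibit a primal noise that is separable, which bounds \(\Rs\) from above directly. By Eq.~\eqref{sm:eq:x-pt-blocks}, \(\rho_X^\Gamma\) splits into the block \(\left(\begin{smallmatrix}a&z\\z&d\end{smallmatrix}\right)\) on \(\operatorname{span}\{\ket{00},\ket{11}\}\) and the block \(\left(\begin{smallmatrix}b&w\\w&c\end{smallmatrix}\right)\) on \(\operatorname{span}\{\ket{01},\ket{10}\}\). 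If neither \(z^2>ad\) nor \(w^2>bc\) holds, then \(\rho_X\) is PPT, hence separable, and both robustnesses vanish.

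The \(w\)-branch reduces to the \(z\)-branch. Conjugating by the local unitary \(\id\otimes\sigma_x\) permutes the product basis (\(\ket{00}\leftrightarrow\ket{01}\), \(\ket{10}\leftrightarrow\ket{11}\)). It maps an X state to an X state with \((a,d,z)\) and \((b,c,w)\) exchanged, up to local phases. Both robustnesses are invariant under local unitaries, so this gives the replacement \eqref{sm:eq:x-branch-swap}. So assume \(z^2>ad\), and without loss of generality \(a\ge d\); the case \(d\ge a\) follows by the symmetric argument or by a local swap.

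\emph{Upper bounds.} We use diagonal product updates \(Y^\Gamma=s\proj{00}+t\proj{11}\) with \(s,t\ge0\). Such a \(Y\) is a separable, diagonal product-state mixture, so it is admissible for both problems. Since the second block is already PSD, feasibility reduces to \((a+s)(d+t)\ge z^2\).
\begin{itemize}
\item If \(z\le a\), take \(s=0\) and \(t=(z^2-ad)/a\), at cost \((z^2-ad)/a\).
\item If \(z>a\ge d\), take \(a+s=d+t=z\), at cost \(2z-a-d\).
\end{itemize}
A short Lagrange or AM--GM check shows these choices minimize \(s+t\) on the feasible region, which explains the two regimes.

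\emph{Lower bounds.} We use the rank-one dual points \(W_b=\proj b/\alpha_1^2(b)\) and the weak-duality bound \eqref{sm:eq:single-vector-weak} from Theorem~\ref{sm:thm:single-vector-formula}.
\begin{itemize}
\item For \(z\le a\), take \(\ket b\propto \alpha_1\ket{11}-\alpha_2\ket{00}\) with \(\alpha_2/\alpha_1=z/a\le1\). A direct computation gives \(\rho_X^\Gamma\ket b=-r\alpha_1\ket{11}\) with \(r=(z^2-ad)/a\). Hence \(-\bra b\rho_X^\Gamma\ket b/\alpha_1^2=r\), matching the upper bound. When \(z<a\), this is exactly the converse certificate in Corollary~\ref{sm:cor:nonlinear-kernel}. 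When \(z=a\), the vector is maximally entangled, and the weak-duality bound still applies directly.
\item For \(z>a\), take the maximally entangled \(\ket b=(\ket{00}-\ket{11})/\sqrt2\), so that \(W=2\proj b\). Then \(-\Tr(W\rho_X^\Gamma)=-(a+d-2z)=2z-a-d\), again matching.
\end{itemize}

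The main obstacle is guessing the correct dual vector in the first regime and seeing why the regimes switch at \(z=\max\{a,d\}\). The switch happens because the Schmidt ratio \(z/a\) must not exceed one: at \(z=a\) the optimal \(\ket b\) becomes maximally entangled, and beyond it the single-product noise stops being optimal and two product terms are needed. Everything else is routine \(2\times2\) algebra. Finally, I would check that the two expressions agree at \(z=a\) (both equal \(a-d\)), which confirms continuity of the formula.
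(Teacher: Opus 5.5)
Your proposal is correct and follows essentially the same route as the paper. It uses separable diagonal product noise on the $\ket{00},\ket{11}$ sector for the upper bound (one term when $z\le\max\{a,d\}$, two equalizing terms beyond), the rank-one dual point $\proj{b}/\alpha_1^2(b)$ with $\ket b$ in that sector for the matching lower bound, and $\Rg\le\Rs$ to close; the differences are cosmetic: you choose the optimal test vector directly (checked via the kernel equation) rather than maximizing $L(\mu)$ over $\ket{\beta(\mu)}$, normalize with $a\ge d$ instead of $d\ge a$, and make the parity-sector exchange explicit through $\id\otimes\sigma_x$.
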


\begin{proof}
It is enough to prove the \(z^2>ad\) case. The local Pauli operation \(\sigma_x\otimes\sigma_x\), where \(\sigma_x=\ket0\!\bra1+\ket1\!\bra0\), interchanges \(a\) and \(d\) without changing the robustness. We may assume \(d\ge a\). In particular, \(z>\sqrt{ad}\ge a\), and the two branches are \(a<z\le d\) and \(z>d\).

For \(0\le\mu\le1\), define the normalized test vector
\begin{equation}
 \ket{\beta(\mu)}
 =\frac{\ket{00}-\mu\ket{11}}{\sqrt{1+\mu^2}}.
 \label{sm:eq:x-beta-mu}
\end{equation}
Its largest Schmidt coefficient squared is \(1/(1+\mu^2)\). From the even block in Eq.~\eqref{sm:eq:x-pt-blocks},
\begin{equation}
 \bra{\beta(\mu)}\rho_X^\Gamma\ket{\beta(\mu)}
 =\frac{a-2z\mu+d\mu^2}{1+\mu^2}.
\end{equation}
Equation~\eqref{sm:eq:single-vector-formula} therefore yields
\begin{equation}
 \Rg(\rho_X)\ge L(\mu):=2z\mu-a-d\mu^2.
 \label{sm:eq:x-L-mu}
\end{equation}
If \(d=0\), the convention \(d\ge a\ge0\) gives \(a=0\), while \(z^2>ad\) gives \(z>0\). This case belongs to the branch \(z>d\), and \(L(\mu)=2z\mu\) is maximized at \(\mu_*=1\). If \(d>0\), the concave quadratic \(L\) is maximized on \([0,1]\) at \(\mu_* =\min\{z/d,1\}\). Hence
\begin{equation}
 \Rg(\rho_X)\ge
 \begin{cases}
 (z^2-ad)/d,&z\le d,\\
 2z-a-d,&z>d.
 \end{cases}
 \label{sm:eq:x-lower-branches}
\end{equation}

Suppose first that \(z\le d\). This branch has \(d>0\). Set
\begin{equation}
 s_1=\frac{z^2-ad}{d}>0,
 \qquad \omega_1=\proj{00}.
 \label{sm:eq:x-noise-first}
\end{equation}
The noise \(\omega_1\) is a pure product state, and \(\rho_X+s_1\omega_1\succeq0\). The odd block of its partial transpose is unchanged and positive because the two PPT violations cannot occur simultaneously. The even block is
\begin{equation}
 \begin{pmatrix}a+s_1&z\\z&d\end{pmatrix},
 \qquad d(a+s_1)-z^2=0,
 \label{sm:eq:x-first-pt-block}
\end{equation}
and is positive semidefinite. The normalized mixture is therefore a two-qubit PPT state and hence separable. Thus \(\Rs(\rho_X)\le s_1\). Together with \(\Rg\le\Rs\) and the first line of Eq.~\eqref{sm:eq:x-lower-branches}, this proves the first branch.

Suppose next that \(z>d\). Define
\begin{equation}
 \begin{aligned}
 s_2&=2z-a-d,
 &t&=\frac{z-a}{s_2},\\
 \omega_2&=t\proj{00}+(1-t)\proj{11}.&&
 \end{aligned}
 \label{sm:eq:x-noise-second}
\end{equation}
Because \(z>d\ge a\), one has \(s_2>0\), \(0<t<1\), and
\begin{equation}
 1-t=\frac{z-d}{s_2}.
\end{equation}
Thus \(\omega_2\) is a convex combination of pure product states. Moreover,
\begin{equation}
 a+s_2t=d+s_2(1-t)=z,
\end{equation}
so the even block of \((\rho_X+s_2\omega_2)^\Gamma\) is
\begin{equation}
 \begin{pmatrix}z&z\\z&z\end{pmatrix}\succeq0.
\end{equation}
Its odd block is again unchanged and positive, and the untransposed sum is positive. Hence the normalized mixture is separable and \(\Rs(\rho_X)\le s_2\). The second line of Eq.~\eqref{sm:eq:x-lower-branches} and \(\Rg\le\Rs\) prove the second branch.

Restoring the symmetry between \(a\) and \(d\) replaces \(d\) by \(\max\{a,d\}\). Exchanging the even and odd parity sectors proves Eq.~\eqref{sm:eq:x-branch-swap}. If neither determinant is negative, Eq.~\eqref{sm:eq:x-pt-blocks} is positive semidefinite, so \(\rho_X\) is PPT and separable.

The two branches also illustrate Theorem~\ref{sm:thm:optimal-noise-structure}. Under the convention \(d\ge a\), the first branch has optimal test vector \(\ket{\beta(z/d)}\). For \(z<d\), its Schmidt coefficients are nondegenerate, so \(\omega_1=\proj{00}\) is the unique generalized-optimal noise state. In the second branch, \(\ket{\beta(1)}=(\ket{00}-\ket{11})/\sqrt2\) is maximally entangled and the dual slack has a three-dimensional kernel. The explicit two-product mixture \(\omega_2\) proves optimality, but the dual support condition alone does not prove uniqueness. The boundary case \(z=d\) also lies in this degenerate dual regime, although the one-product construction remains valid.
\end{proof}

\section{Comparison with previous two-qubit robustness formulas}
\label{sm:sec:prior-formula-comparison}

This section compares the Letter with earlier robustness and teleportation calculations. We use the noise-to-signal convention of Eqs.~\eqref{sm:eq:def-rs} and \eqref{sm:eq:def-rg}: the robustness is the coefficient \(s\), not \(1+s\). Wootters' concurrence is \(C=\max\{0,\lambda_1-\lambda_2-\lambda_3-\lambda_4\}\), where the \(\lambda_i\) are the decreasing square roots of the eigenvalues of \(\rho\widetilde\rho\), and \(\widetilde\rho=(\sigma_y\otimes\sigma_y)\overline\rho(\sigma_y\otimes\sigma_y)\) \cite{Wootters1998}. The finite-versus-infinite statements below are unaffected by the shifted convention used in Refs.~\cite{RegulaLamiFerrariTakagi2021,LamiRegulaTakagiFerrari2021}.

\subsection{The Wootters-decomposition formula and its domain}

Section~4.1 of Ref.~\cite{JafarizadehMirzaeeRezaee2005}, pp.~515--521, starts from a Wootters decomposition
\begin{equation}
 \rho=\sum_{i=1}^4\proj{x_i},
 \qquad
 \braket{x_i}{\widetilde{x_j}}=\lambda_i\delta_{ij},
 \qquad
 \lambda_1\geq\lambda_2\geq\lambda_3\geq\lambda_4,
 \label{sm:eq:jmr-wootters-decomposition}
\end{equation}
and defines
\begin{equation}
 \ket{x_i'}=\frac{\ket{x_i}}{\sqrt{\lambda_i}},
 \qquad
 K_i=\braket{x_i'}{x_i'},
 \qquad
 P_i=\lambda_iK_i>0.
 \label{sm:eq:jmr-normalized-vectors}
\end{equation}
These are the definitions in Eqs.~(7)--(13) of that reference. For the entangled sector \(C=\lambda_1-\lambda_2-\lambda_3-\lambda_4>0\), Eq.~(39), p.~520, reports
\begin{equation}
 R_{\rm JMR}(\rho)
 =\frac{C}{2}
 \min\{K_2+K_3,K_2+K_4,K_3+K_4\}.
 \label{sm:eq:jmr-reported-formula}
\end{equation}
Equation~(6) defines absolute robustness by minimizing over all separable noise states. The calculation first restricts noise and target to the cone generated by the four fixed Wootters projectors. After Eq.~(40), the authors explicitly acknowledge this restriction and attempt to extend the minimum to off-diagonal separable states through Eqs.~(41)--(48), p.~521. Thus the issue is the claimed full-space optimality, not the existence of a restricted-family construction. All external equation and section numbers in this subsection refer to Ref.~\cite{JafarizadehMirzaeeRezaee2005}.

Equations~\eqref{sm:eq:jmr-normalized-vectors} require nonzero \(\lambda_i\), while a degenerate Wootters spectrum can make the frame nonunique. Neither issue is needed for the following test: our counterexample is full rank and has four positive, pairwise distinct Wootters numbers. It therefore tests the generic domain directly.

For a strictly positive real X state in Eq.~\eqref{sm:eq:x-state}, with strict physicality inequalities, the even- and odd-parity Wootters pairs can be chosen so that
\begin{align}
 \lambda_{e,\pm}&=\sqrt{ad}\pm w,
 &K_{e,\pm}&=\kappa_e:=\frac{a+d}{2\sqrt{ad}},\label{sm:eq:x-wootters-even}\\
 \lambda_{o,\pm}&=\sqrt{bc}\pm z,
 &K_{o,\pm}&=\kappa_o:=\frac{b+c}{2\sqrt{bc}}.\label{sm:eq:x-wootters-odd}
\end{align}
For example, with the phase chosen to make both spin-flip overlaps nonnegative, the even Wootters vectors may be written
\begin{align}
 \ket{x_{e,+}}
 &=i\sqrt{\frac{\sqrt{ad}+w}{2\sqrt{ad}}}
 \left(\sqrt a\ket{00}+\sqrt d\ket{11}\right),\\
 \ket{x_{e,-}}
 &=\sqrt{\frac{\sqrt{ad}-w}{2\sqrt{ad}}}
 \left(\sqrt a\ket{00}-\sqrt d\ket{11}\right).
\end{align}
Division by \(\sqrt{\lambda_{e,\pm}}\) gives squared norm \(\kappa_e\), and the odd pair is analogous. If \(w>\sqrt{bc}\), then \(C=2(w-\sqrt{bc})\), the dominant vector is the \((e,+)\) vector, and direct substitution into Eq.~\eqref{sm:eq:jmr-reported-formula} gives
\begin{equation}
 R_{\rm JMR}^{(w)}
 =C\min\left\{\kappa_o,\frac{\kappa_e+\kappa_o}{2}\right\}.
 \label{sm:eq:jmr-x-specialization-w}
\end{equation}
If \(z>\sqrt{ad}\), then \(C=2(z-\sqrt{ad})\) and instead
\begin{equation}
 R_{\rm JMR}^{(z)}
 =C\min\left\{\kappa_e,\frac{\kappa_e+\kappa_o}{2}\right\}.
 \label{sm:eq:jmr-x-specialization-z}
\end{equation}
These expressions are not the piecewise values in Theorem~\ref{sm:thm:x-state}; in particular, Eq.~\eqref{sm:eq:jmr-x-specialization-w} can depend on the spectator block \((a,d)\), whereas the exact \(w\)-branch value depends only on \((b,c,w)\).
The strict assumptions above are needed only to test the domain of Eq.~\eqref{sm:eq:jmr-reported-formula}. Theorem~\ref{sm:thm:x-state} independently covers ranks 1--4, zero diagonal entries, the PPT boundaries \(w=\sqrt{bc}\) and \(z=\sqrt{ad}\), and the internal branch boundaries \(w=\max\{b,c\}\) and \(z=\max\{a,d\}\), where its two displayed expressions agree.

An exact example removes any possible ambiguity due to zero Wootters numbers or degeneracy. Let
\begin{equation}
 \rho_{\rm ex}=
 \begin{pmatrix}
  3/8&0&0&1/4\\
  0&1/5&1/100&0\\
  0&1/100&1/20&0\\
  1/4&0&0&3/8
 \end{pmatrix}.
 \label{sm:eq:jmr-counterexample-state}
\end{equation}
It is full rank, with
\begin{equation}
 \operatorname{spec}(\rho_{\rm ex})
 =\left\{\frac58,\frac18,
 \frac18+\frac{\sqrt{229}}{200},
 \frac18-\frac{\sqrt{229}}{200}\right\}.
\end{equation}
Its four strictly positive, pairwise distinct Wootters numbers and the corresponding norms are
\begin{equation}
 (\lambda_1,\lambda_2,\lambda_3,\lambda_4)
 =\left(\frac58,\frac18,\frac{11}{100},\frac9{100}\right),
 \qquad
 (K_1,K_2,K_3,K_4)=\left(1,1,\frac54,\frac54\right),
 \label{sm:eq:jmr-counterexample-data}
\end{equation}
so \(C=3/10\) and Eq.~\eqref{sm:eq:jmr-reported-formula} predicts
\begin{equation}
 R_{\rm JMR}(\rho_{\rm ex})
 =\frac{3}{20}\frac94=\frac{27}{80}.
 \label{sm:eq:jmr-counterexample-prediction}
\end{equation}
The exact value is instead \(1/4\), as witnessed independently from both sides. The separable unnormalized noise
\begin{equation}
 Y_{\rm ex}=\frac1{20}\proj{01}+\frac15\proj{10},
 \qquad \Tr Y_{\rm ex}=\frac14,
 \label{sm:eq:jmr-counterexample-primal}
\end{equation}
gives
\begin{equation}
 (\rho_{\rm ex}+Y_{\rm ex})^\Gamma\cong
 \begin{pmatrix}3/8&1/100\\1/100&3/8\end{pmatrix}
 \oplus
 \begin{pmatrix}1/4&1/4\\1/4&1/4\end{pmatrix}
 \succeq0.
 \label{sm:eq:jmr-counterexample-ppt}
\end{equation}
Thus it is standard-primal feasible. Conversely, with
\begin{equation}
 \ket{\beta}=\frac{\ket{01}-\ket{10}}{\sqrt2},
 \qquad W_{\rm ex}=2\proj{\beta},
\end{equation}
one has \(W_{\rm ex}\succeq0\),
\begin{equation}
 \operatorname{spec}(\id-W_{\rm ex}^\Gamma)=\{2,0,0,0\},
 \qquad
 -\Tr(W_{\rm ex}\rho_{\rm ex}^\Gamma)=\frac14.
 \label{sm:eq:jmr-counterexample-dual}
\end{equation}
Hence \(W_{\rm ex}\) is generalized-dual feasible and
\begin{equation}
 \frac14\leq\Rg(\rho_{\rm ex})\leq\Rs(\rho_{\rm ex})\leq\frac14.
\end{equation}
This proves the exact value without invoking Theorem~\ref{sm:thm:equality} and disproves Eq.~\eqref{sm:eq:jmr-reported-formula} as a universal two-qubit optimum. Equation~(40) selects a minimizing-pair noise; below we exhibit one such feasible candidate and a strictly improving path. Equation~\eqref{sm:eq:jmr-counterexample-primal} instead supplies an optimum of weight \(1/4\). The Bell-decomposable result has an independent derivation \cite{AkhtarshenasJafarizadeh2003} and is also treated separately in Sec.~4.2 of Ref.~\cite{JafarizadehMirzaeeRezaee2005}. For that family \(K_i=1\) and \(R=C\), consistently with the present theorem. The counterexample does not invalidate this special case or establish that all other results of Ref.~\cite{JafarizadehMirzaeeRezaee2005} are incorrect.

The fixed Wootters tetrahedron is not the full separable set. For \(\rho_{\rm ex}\), each odd-parity Wootters projector has diagonal ratio \(4:1\), whereas \(Y_{\rm ex}\) has ratio \(1:4\). Even projectors cannot help in a positive combination because \(Y_{\rm ex}\) has zero even block. Thus the certified optimum lies outside the restricted family.

The proposed full-space extension in Sec.~4.1 of Ref.~\cite{JafarizadehMirzaeeRezaee2005} uses nonorthogonal-frame and dual-frame coefficients in Eqs.~(41)--(43). The pseudomixture identities (44)--(47) include cancellation of the off-diagonal coefficients; Eq.~(48) is then used to infer that such coefficients do not affect the optimum. Cancellation alone does not justify that inference: the diagonal coefficients appearing in the ratio can change when the admissible separable states change. Nor is deletion of off-diagonal terms in a nonorthogonal frame shown to preserve both normalization and separability. The following feasible descent directly tests the resulting optimality claim, without assuming such a deletion map.

An explicit feasible descent rules out even local optimality of the old candidate. Choose its minimizing Wootters pair \(2,3\), and let
\begin{equation}
 Y_0=\frac3{20}\left(\proj{\Phi^-}+\proj{\chi}\right),\qquad
 \ket{\Phi^-}=\frac{\ket{00}-\ket{11}}{\sqrt2},\qquad
 \ket{\chi}=\frac{2\ket{01}+\ket{10}}2 .
 \label{sm:eq:jmr-old-noise-vectors}
\end{equation}
Here \(\chi\) is intentionally unnormalized, with squared norm \(5/4\), as required by the Wootters-frame convention. Explicitly,
\begin{equation}
 Y_0=\frac1{80}
 \begin{pmatrix}
 6&0&0&-6\\0&12&6&0\\0&6&3&0\\-6&0&0&6
 \end{pmatrix},\qquad \Tr Y_0=\frac{27}{80}.
 \label{sm:eq:jmr-old-noise-matrix}
\end{equation}
Both \(Y_0\) and \(Y_0^\Gamma\) are positive semidefinite. The target is also PPT:
\begin{equation}
 (\rho_{\rm ex}+Y_0)^\Gamma\cong
 \begin{pmatrix}9/20&17/200\\17/200&9/20\end{pmatrix}
 \oplus
 \begin{pmatrix}7/20&7/40\\7/40&7/80\end{pmatrix}
 \succeq0.
 \label{sm:eq:jmr-old-noise-target}
\end{equation}
The entire path
\begin{equation}
 Y_\tau=(1-\tau)Y_0+\tau Y_{\rm ex},\qquad
 \Tr Y_\tau=\frac{27}{80}-\frac{7\tau}{80},
 \qquad 0\le\tau\le1,
 \label{sm:eq:jmr-feasible-descent}
\end{equation}
is standard-primal feasible by convexity of the separable cone. For every \(\tau>0\), however small, its cost is strictly lower. The normalized noise \(Y_\tau/\Tr Y_\tau\) and normalized separable target vary continuously at \(\tau=0\). Thus the restricted optimum is not a full-space local minimum; the proposed perturbative extension cannot exclude these nearby feasible noises.

The comparison of scope and conclusions is summarized below.
\begin{table}[ht]
\centering
\small
\renewcommand{\arraystretch}{1.18}
\begin{tabular}{p{0.23\linewidth}p{0.29\linewidth}p{0.40\linewidth}}
\hline\hline
Situation & Ref.~\cite{JafarizadehMirzaeeRezaee2005}, Sec.~4.1 & Present treatment \\
\hline
Full-rank generic case & Eq.~\eqref{sm:eq:jmr-reported-formula} is claimed & Exact state \eqref{sm:eq:jmr-counterexample-state} contradicts it \\
Rank deficient or \(\lambda_i=0\) & Displayed rescaling requires a limiting treatment & Direct conic proof includes all ranks \\
Degenerate \(\lambda_i\) & Frame may be nonunique & Counterexample has no degeneracy \\
Restricted versus full space & Eqs.~(41)--(48) propose an extension & A strictly improving feasible path disproves full-space optimality \\
Bell decomposable & \(K_i=1\), hence \(R=C\) & Recovered as a consistency check \\
\hline\hline
\end{tabular}
\caption{Domain and conclusion of the earlier Wootters-decomposition calculation.}
\label{sm:tab:jmr-scope-audit}
\end{table}

\subsection{From the teleportation SDP to standard robustness}

Verstraete and Verschelde optimize the deterministic teleportation singlet fraction as \cite{VerstraeteVerschelde2003}
\begin{equation}
 F^*(\rho)=\max_X\left\{\frac12-\Tr(X\rho^\Gamma):
 0\preceq X\preceq\id,
 -\frac{\id}{2}\preceq X^\Gamma\preceq\frac{\id}{2}\right\}.
 \label{sm:eq:vv-teleportation-primal}
\end{equation}
Their SDP dual is
\begin{equation}
 F^*(\rho)=\min_Z\left\{\frac12+\frac12\Tr Z:
 Z\succeq0, (\rho+Z)^\Gamma\succeq0\right\}.
 \label{sm:eq:vv-teleportation-dual}
\end{equation}
Since PPT is equivalent to separability in \(2\otimes2\), identifying \(Z=Y\) in \((P_{\rm g})\) gives the exact normalization
\begin{equation}
 \Rg(\rho)=2F^*(\rho)-1.
 \label{sm:eq:vv-rg-normalization}
\end{equation}
But \(Z\) in Eq.~\eqref{sm:eq:vv-teleportation-dual} is an arbitrary positive operator. Standard robustness imposes the additional condition \(Z^\Gamma\succeq0\), equivalently that the normalized noise be separable. Nothing in Eq.~\eqref{sm:eq:vv-teleportation-dual} shows that this additional constraint is cost-free.

The rank-one reduction in Ref.~\cite{VerstraeteVerschelde2003} gives an equivalent optimum of \((D_{\rm g})\). Explicitly, \(W=2X\) maps every point in Eq.~\eqref{sm:eq:vv-teleportation-primal} to a generalized-dual feasible point. Conversely, for a rank-one generalized-dual optimum \(W=w\proj b\), the bound \(w\alpha_1^2\le1\) gives \(w\le2\) and \(w\alpha_1\alpha_2\le1\). Thus \(X=W/2\) obeys \(X\preceq I\) and \(-I/2\preceq X^\Gamma\preceq I/2\), with the same rescaled objective. This establishes equality of these generalized optimal values without asserting equality of all feasible sets.

A generalized-dual feasible operator is automatically standard-dual feasible: if \(\id-W^\Gamma\succeq0\), then
\begin{equation}
 \id-W=(\id-W^\Gamma)^\Gamma
\end{equation}
is of the decomposable form required by \((D_{\rm s})\). Thus starting from the generalized SDP yields only \(\Rg\leq\Rs\), the easy direction already implied by inclusion of the primal noise sets. The nontrivial direction requires the converse at an optimum. Our proof starts with a \emph{standard}-dual optimizer, reduces it to rank one, and only then proves \(\id-W^\Gamma\succeq0\). This establishes \(\Rs\leq\Rg\), equality of the values, and---by finite-dimensional attainment---existence of a generalized-primal optimum whose noise is separable. It does not claim that every generalized optimizer is separable.
\subsection{Separation in different dimensions}
Regula, Lami, Ferrari, and Takagi exhibited an infinite-dimensional entangled state with finite generalized robustness but infinite standard robustness \cite{RegulaLamiFerrariTakagi2021}. The companion PRA develops the Hilbert-operator construction in detail \cite{LamiRegulaTakagiFerrari2021}. In their convention, one is added to both values. Finite truncations of the construction imply separation in sufficiently large finite dimensions. The following truncation argument makes this finite-dimensional consequence explicit.

For completeness, the finite-truncation deduction uses the same weights
\(d_j=[\sqrt j\ln(j+1)]^{-1}\), \(c_N=\sum_{j=1}^Nd_j^2\), and
\((H_N)_{jk}=1/(j-k)\) for \(j\ne k\), with zero diagonal. Since
\(\|H_N\|\le\pi\), the maximally correlated matrices
\[
 \rho_{\pm,N}=\frac1{c_N}\sum_{j,k=1}^N
 d_jd_k\left(\delta_{jk}\pm\frac{i}{\pi}(H_N)_{jk}\right)
 \ket{jj}\bra{kk}
\]
are states, and their sum is separable, so \(\Rg(\rho_{\pm,N})\le1\).
For any standard feasible noise of trace \(s\), the triangle inequality
for the partial-transpose trace norm gives
\(\|\rho^\Gamma\|_1\le1+2s\); hence
\[
 \Rs(\rho_{\pm,N})\ge\mathcal N(\rho_{\pm,N})
 =\frac1{\pi c_N}\sum_{1\le j<k\le N}\frac{d_jd_k}{k-j}
 \longrightarrow\infty.
\]
The divergence follows already by restricting to \(j<k\le2j\): the
corresponding sum for each \(j\) is at least
\(1/[\sqrt2\,j\ln(2j+1)]\), whereas \(c_N\) stays bounded.
Thus some finite truncation separates the measures, without locating
the minimal dimension. This deduction uses the construction and Hilbert
norm bound of Ref.~\cite{LamiRegulaTakagiFerrari2021}.

More directly, Lami and Regula gave the two-qutrit state \cite{LamiRegula2023}
\begin{equation}
 \omega_3=\frac{P_3-\Phi_3}{2},\qquad
 P_3=\sum_{j=1}^3\proj{jj},\qquad
 \Phi_3=\frac13\sum_{j,k=1}^3\ket{jj}\bra{kk}.
 \label{sm:eq:lr-qutrit-state}
\end{equation}
Their Supplemental Note VI A, Eqs.~(S138)--(S139), gives, in the present unshifted convention,
\begin{equation}
 \Rs(\omega_3)=\frac34,\qquad
 \Rg(\omega_3)=\frac12.
 \label{sm:eq:lr-qutrit-values}
\end{equation}
For example, \(\omega_3+\Phi_3/2=P_3/2\) is their optimal generalized decomposition. They also use this state to study entanglement irreversibility.

\begin{table}[ht]
\caption{Earlier results and the dimension boundary.}
\label{sm:tab:nearest-prior-work}
\small
\begin{tabular}{p{0.19\linewidth}p{0.36\linewidth}p{0.36\linewidth}}
\hline
\raggedright Work & \raggedright Established result & \raggedright Relation to this Letter\tabularnewline
\hline
\raggedright Verstraete--Verschelde (2003) & \raggedright Generalized SDP, rank-one/filter formulation, teleportation relation & \raggedright Their rank-one reduction provides a starting point for our standard-dual argument.\tabularnewline
\raggedright Regula \emph{et al.}, Lami \emph{et al.} (2021) & \raggedright Finite generalized/infinite standard robustness & \raggedright Finite truncations give separation in sufficiently large dimensions.\tabularnewline
\raggedright Lami--Regula (2023) & \raggedright Explicit \(3\otimes3\) separation with exact values \(3/4\) and \(1/2\); entanglement irreversibility & \raggedright We establish universal equality in \(2\otimes2\) and separation already in \(2\otimes3\).\tabularnewline
\raggedright This Letter & \raggedright All-state no-advantage theorem; rational \(2\otimes3\) certificate, stable at specified full-rank perturbations & \raggedright Local isometry invariance completes the finite-dimensional boundary, also for a fixed multipartite cut.\tabularnewline
\hline
\end{tabular}
\end{table}

An example in \(3\otimes3\) leaves open whether entangled replacement can help in the smallest systems, and whether PPT exactness guarantees equality. The all-state theorem and the \(2\otimes3\) certificate answer these questions differently: two-qubit replacement never requires entanglement at the optimum, whereas qubit--qutrit replacement can benefit despite PPT remaining exact. The product-vector step explains why the equality proof is special to two qubits; the certificate establishes failure of equality itself.

These statements concern exact, single-copy robustness. Tensor powers, smoothing, and asymptotic tasks require separate analysis.

\section{Two-product optimal noises and the BSA remainder of X states}
\label{sm:sec:x-two-product-bsa}

Section~\ref{sm:sec:x-values} gives the direct primal--dual value proof. Here we independently compress arbitrary feasible noise to two product directions, prove the minimum preparation size, and compare the BSA remainder. We use its canonical X-state notation and set
\begin{equation}
 \Pi=\sigma_z\otimes\sigma_z,
 \qquad
 \mathcal T(X)=\frac12\bigl(X+\Pi X\Pi\bigr).
 \label{sm:eq:x-parity-twirl}
\end{equation}
The map \(\mathcal T\) is a mixture of local unitaries. It therefore preserves positivity, trace, and the separable cone, and every X state obeys \(\mathcal T(\rho_X)=\rho_X\).

\begin{proposition}[Two-product reduction]
\label{sm:prop:x-two-product-reduction}
Suppose that \(\rho_X\) lies in the \(w\)-entangled branch, \(\lvert w\rvert^2>bc\). For every generalized-primal feasible operator \(N\),
\begin{equation}
 N\succeq0,
 \qquad
 \rho_X+N\in\SEP_+,
 \label{sm:eq:x-general-feasible-noise}
\end{equation}
there exist \(u,v\geq0\) such that
\begin{equation}
 N'=u\proj{01}+v\proj{10},
 \quad
 \Tr N'=\Tr N,
 \quad
 \rho_X+N'\in\SEP_+.
 \label{sm:eq:x-two-product-replacement-w}
\end{equation}
Consequently, at least one generalized optimum, and hence at least one standard optimum, is supported on \(\{\ket{01},\ket{10}\}\). In the \(z\)-entangled branch, \(\lvert z\rvert^2>ad\), the corresponding support is \(\{\ket{00},\ket{11}\}\).
\end{proposition}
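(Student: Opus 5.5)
The plan is to compress an arbitrary feasible noise \(N\) in two stages: first symmetrize it by the parity twirl \(\mathcal T\), then move all of its weight onto the two odd-parity product directions. Throughout we use the two-qubit PPT criterion \eqref{sm:eq:ppt-sep}, so that feasibility means \(\rho_X+N\succeq0\) together with \((\rho_X+N)^\Gamma\succeq0\).

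\emph{Twirling step.} Replace \(N\) by \(\mathcal T(N)\). Since \(\mathcal T\) is a mixture of local unitaries, \(\mathcal T(N)\succeq0\) and \(\Tr\mathcal T(N)=\Tr N\). Since \(\mathcal T(\rho_X)=\rho_X\) and \(\mathcal T\) preserves \(\SEP_+\), the sum \(\rho_X+\mathcal T(N)=\mathcal T(\rho_X+N)\) is still separable. We may therefore assume that \(N\) is X-shaped, with diagonal entries \(n_{00},n_{01},n_{10},n_{11}\ge0\), even coherence \(m=\bra{00}N\ket{11}\), and odd coherence \(k=\bra{01}N\ket{10}\). Positivity of \(N\) gives \(|m|^2\le n_{00}n_{11}\), and hence \(|m|\le(n_{00}+n_{11})/2\). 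The partial transpose exchanges the roles of the two coherences, so \(w+m\) becomes the off-diagonal entry of the odd block \(\{\ket{01},\ket{10}\}\) of \((\rho_X+N)^\Gamma\). PPT of \(\rho_X+N\) then yields
\begin{equation*}
(b+n_{01})(c+n_{10})\ge|w+m|^2\ge\bigl(\max\{|w|-|m|,0\}\bigr)^2 .
\end{equation*}

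\emph{Redistribution step.} Set \(u=n_{01}+\alpha\) and \(v=n_{10}+\beta\), where \(\alpha,\beta\ge0\), \(\alpha+\beta=n_{00}+n_{11}\), and both \(\alpha,\beta\ge|m|\). Such a split exists because \(n_{00}+n_{11}\ge2|m|\); the surplus beyond \(|m|\) can be placed arbitrarily. This choice gives \(\Tr N'=\Tr N\).

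We then check that \(\rho_X+N'\) is separable.
\begin{itemize}
\item \emph{Positivity.} \(\rho_X+N'\succeq0\) holds because \(N'\succeq0\) is added to \(\rho_X\succeq0\).
\item \emph{Even block of the partial transpose.} This block is the untouched \(\begin{pmatrix}a&z\\z&d\end{pmatrix}\). It is positive semidefinite because, in the \(w\)-branch, physicality \eqref{sm:eq:x-physicality} gives \(z^2\le bc<w^2\le ad\); this uses the fact that the two PPT violations cannot occur simultaneously.
\item \emph{Odd block of the partial transpose.} The key inequality is
\begin{equation*}
(b+u)(c+v)\ge(b+n_{01}+|m|)(c+n_{10}+|m|)\ge\bigl(\sqrt{(b+n_{01})(c+n_{10})}+|m|\bigr)^2\ge|w|^2 ,
\end{equation*}
where the middle step is AM--GM applied to the cross terms. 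If \(|m|\ge|w|\), the final inequality is immediate; otherwise it follows from the displayed bound \(\sqrt{(b+n_{01})(c+n_{10})}\ge|w|-|m|\).
\end{itemize}
Hence the odd block is positive semidefinite, \(\rho_X+N'\) is PPT, and so \(\rho_X+N'\in\SEP_+\) by \eqref{sm:eq:ppt-sep}.

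\emph{Consequences.} Apply the reduction to an attained generalized-primal optimum, which exists by Sec.~\ref{sm:sec:conic}. This produces a generalized optimum supported on \(\{\ket{01},\ket{10}\}\). That optimum is a mixture of product states, hence standard-feasible, and Theorem~\ref{sm:thm:equality} then makes it a standard optimum as well. For the \(z\)-branch, the same argument applies with the parity sectors exchanged: \(m\) is replaced by \(k\) and the target support is \(\{\ket{00},\ket{11}\}\), mirroring \eqref{sm:eq:x-branch-swap}.

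The main obstacle is the bookkeeping in the partial transpose: one must identify correctly that the even coherence \(m\) of the noise feeds the odd block containing \(w\), and that \(N\succeq0\) bounds \(|m|\) by exactly the diagonal weight that is being reallocated. The AM--GM step is what makes this reallocation lossless.
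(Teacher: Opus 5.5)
Your proof is correct and takes essentially the paper's route: parity twirl, a bound on the noise's even coherence $m$ from $N\succeq0$, the PPT condition on the odd block containing $w+m$, and reallocation of all diagonal weight onto $\ket{01},\ket{10}$. The only difference is the allocation: the paper fixes $u=n_{00}+n_{01}$, $v=n_{10}+n_{11}$ and uses the sharper bound $|m|\le\sqrt{n_{00}n_{11}}$ with the identity $(X+n_{00})(Y+n_{11})-(\sqrt{XY}+\sqrt{n_{00}n_{11}})^2=(\sqrt{X n_{11}}-\sqrt{Y n_{00}})^2$, whereas your weaker bound $|m|\le(n_{00}+n_{11})/2$ plus AM--GM works for any split giving each side at least $|m|$.
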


\begin{proof}
Apply \(\mathcal T\) to \(N\). Equation~\eqref{sm:eq:x-general-feasible-noise} remains valid and the trace is unchanged, because \(\rho_X+\mathcal T(N)=\mathcal T(\rho_X+N)\). We may therefore write
\begin{equation}
 N=
 \begin{pmatrix}
  \alpha&0&0&\xi\\
  0&\beta&\eta&0\\
  0&\eta^*&\gamma&0\\
  \xi^*&0&0&\delta
 \end{pmatrix},
 \qquad
 \alpha,\beta,\gamma,\delta\geq0.
 \label{sm:eq:x-general-twirled-noise}
\end{equation}
Positivity of the even-parity principal block gives \(\lvert\xi\rvert\leq\sqrt{\alpha\delta}\). Since \(\rho_X+N\) is separable, its partial transpose is positive semidefinite. The odd-parity block containing \(w+\xi\) therefore gives
\begin{equation}
 \lvert w+\xi\rvert\leq\sqrt{(b+\beta)(c+\gamma)}.
 \label{sm:eq:x-w-xi-bound}
\end{equation}
The triangle inequality implies
\begin{equation}
 \lvert w\rvert\leq\sqrt{(b+\beta)(c+\gamma)}+\sqrt{\alpha\delta}.
 \label{sm:eq:x-w-compression-bound}
\end{equation}
For nonnegative \(X,Y,\alpha,\delta\),
\begin{equation}
 (X+\alpha)(Y+\delta)
 -\bigl(\sqrt{XY}+\sqrt{\alpha\delta}\bigr)^2
 =\bigl(\sqrt{X\delta}-\sqrt{Y\alpha}\bigr)^2\geq0.
 \label{sm:eq:x-compression-identity}
\end{equation}
Using \(X=b+\beta\) and \(Y=c+\gamma\) yields
\begin{equation}
 (b+\alpha+\beta)(c+\gamma+\delta)\geq\lvert w\rvert^2.
 \label{sm:eq:x-compressed-ppt-condition}
\end{equation}
Set \(u=\alpha+\beta\) and \(v=\gamma+\delta\). Then \(N'=u\proj{01}+v\proj{10}\) is separable and has the same trace as \(N\). Equation~\eqref{sm:eq:x-compressed-ppt-condition} makes the odd-parity block of \((\rho_X+N')^\Gamma\) positive semidefinite. Its other block is positive because physicality and \(\lvert w\rvert^2>bc\) imply \(ad\geq\lvert w\rvert^2>bc\geq\lvert z\rvert^2\). Thus \(\rho_X+N'\succeq0\) and \((\rho_X+N')^\Gamma\succeq0\). The PPT criterion is sufficient for separability in \(2\otimes2\), proving \(\rho_X+N'\in\SEP_+\). Exchanging the even- and odd-parity sectors proves the \(z\)-branch statement.
\end{proof}

For the \(w\)-entangled branch, Proposition~\ref{sm:prop:x-two-product-reduction} reduces both robustness optimizations to
\begin{equation}
 \min_{u,v\geq0}(u+v)
 \quad\text{subject to}\quad
 (b+u)(c+v)\geq t^2,
 \qquad t:=\lvert w\rvert.
 \label{sm:eq:x-two-variable-program-w}
\end{equation}
The independent compression proof therefore recovers the value in Theorem~\ref{sm:thm:x-state}, with weights
\begin{equation}
 (u_*,v_*)=
 \begin{cases}
  (t-b,t-c),&t\geq\max\{b,c\},\\[1mm]
  (0,t^2/b-c),&b>t,\\[1mm]
  (t^2/c-b,0),&c>t.
 \end{cases}
 \label{sm:eq:x-optimal-two-product-weights-w}
\end{equation}
The constraint is active at an optimum. With \(x=b+u\) and \(y=c+v\), one minimizes \(x+y\) subject to \(xy=t^2\), \(x\geq b\), and \(y\geq c\). If \(t\geq\max\{b,c\}\), the arithmetic--geometric mean inequality is saturated by \(x=y=t\). If \(b>t\), the function \(x+t^2/x\) is increasing for \(x\geq b\), so \(x=b\); the case \(c>t\) is symmetric. The latter two cases cannot occur simultaneously because \(t^2>bc\). Hence
\begin{equation}
 \Rob(\rho_X)=u_*+v_*,
 \qquad
 \omega_*=\frac{u_*\proj{01}+v_*\proj{10}}{u_*+v_*}
 \label{sm:eq:x-optimal-normalized-noise-w}
\end{equation}
is an optimal separable noise state. The \(z\)-branch follows by
\begin{equation}
 (b,c,w;\ket{01},\ket{10})
 \longleftrightarrow
 (a,d,z;\ket{00},\ket{11}).
 \label{sm:eq:x-two-product-branch-exchange}
\end{equation}
The weights specify a representative optimum; the full-space minimum-size statement requires the following additional obstruction.

\begin{proposition}[Minimum pure-product preparation size for X states]
\label{sm:prop:x-minimum-products}
In the entangled \(w\)-branch, the minimum number of pure-product terms in an optimal separable noise is one when \(\sqrt{bc}<t\le\max\{b,c\}\), and two when \(t>\max\{b,c\}\), where \(t=|w|\). The \(z\)-branch follows by Eq.~\eqref{sm:eq:x-two-product-branch-exchange}.
\end{proposition}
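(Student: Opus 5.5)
The plan is to prove both directions separately: exhibit a one-term optimum in the first regime, and rule out any one-term optimum in the second by complementary slackness against an explicit generalized-dual optimum. Throughout I first apply local phase unitaries so that \(w=t>0\) is real, as in the canonical form \eqref{sm:eq:x-state}. Only the \(w\)-branch needs treatment; the \(z\)-branch then follows verbatim from the exchange \eqref{sm:eq:x-two-product-branch-exchange}.

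\emph{Achievability.} I will read this off from the weights \eqref{sm:eq:x-optimal-two-product-weights-w}. If \(b>t\) or \(c>t\), one of \(u_*,v_*\) vanishes. At the boundary \(t=\max\{b,c\}\), the first line also has one zero weight. Hence for \(\sqrt{bc}<t\le\max\{b,c\}\) the optimal noise \eqref{sm:eq:x-optimal-normalized-noise-w} is a single pure product state, and the minimum size is one. For \(t>\max\{b,c\}\), the same formula gives an optimum with exactly two product terms, so the minimum is at most two. The remaining task is to show it is not one.

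\emph{Obstruction for \(t>\max\{b,c\}\).} I will use the generalized-dual optimum from the proof of Theorem~\ref{sm:thm:x-state}, transported to the odd sector. Take \(\ket{\beta}=(\ket{01}-\ket{10})/\sqrt2\) and \(W_*=2\proj{\beta}\).
\begin{itemize}
\item \(W_*\) is generalized-dual feasible: \(\alpha_1^2=1/2\), so Lemma~\ref{sm:lem:rank-one-spectrum} gives \(\lambda_{\max}(W_*^\Gamma)=1\).
\item Its objective is \(-\Tr(W_*\rho_X^\Gamma)=2t-b-c\). This equals \(\Rob(\rho_X)\) by Theorem~\ref{sm:thm:x-state} with \((a,d,z)\mapsto(b,c,w)\), so \(W_*\) is optimal.
\end{itemize}
Now suppose an optimal noise is a single pure product, \(Y=\Rob\,\proj{e\otimes g}\). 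Its partial transpose is \(Y^\Gamma=\Rob\,\proj{p}\) with \(\ket p=\ket e\otimes\ket{\overline g}\), again a product vector. Lemma~\ref{sm:lem:complementary-slackness} gives \(W_*S=0\), hence
\[
(\rho_X^\Gamma+Y^\Gamma)\ket\beta=0 .
\]
On the odd block, \(\rho_X^\Gamma\) acts as \(\left(\begin{smallmatrix}b&t\\t&c\end{smallmatrix}\right)\) in the basis \(\ket{01},\ket{10}\). A direct computation gives
\[
-\rho_X^\Gamma\ket\beta=\bigl((t-b)\ket{01}-(t-c)\ket{10}\bigr)/\sqrt2 .
\]
This must equal \(\Rob\braket{p}{\beta}\ket p\), a multiple of a product vector. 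When \(t>\max\{b,c\}\), both coefficients are nonzero. The vector then has a rank-two coefficient matrix, i.e. it is entangled, which is a contradiction. The minimum is therefore exactly two.

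The main obstacle is the optimality of \(W_*\), because complementary slackness applies only to a genuinely optimal dual point. I expect this to be routine given the matching primal value \(2t-b-c\) from \eqref{sm:eq:x-optimal-two-product-weights-w}. I will also note the consistency check at the boundary \(t=\max\{b,c\}\): there one coefficient vanishes, the vector becomes product, and the argument correctly allows a one-term optimum.
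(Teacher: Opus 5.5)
Your proposal is correct and follows the paper's proof. Both arguments read the one-term and two-term optima off Eq.~\eqref{sm:eq:x-optimal-two-product-weights-w}, and both exclude a one-term optimum for $t>\max\{b,c\}$ via complementary slackness $(A+Y^\Gamma)\ket\beta=0$ against the optimal dual certificate $W=2\proj{\beta}$. The only difference is in the final step: the paper first uses $(I-W^\Gamma)Y=0$ to force $f=(-e_1,e_0)$ and then solves coordinate-wise, whereas you observe directly that $-A\ket\beta$ would have to be a multiple of the product vector $\ket{e\otimes\overline g}$ but has Schmidt rank two, which is slightly shorter.
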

\begin{proof}
Local phase rotations preserve pure-product preparation size, so set \(w=t>0\). Equation~\eqref{sm:eq:x-optimal-two-product-weights-w} supplies a one-term optimum in the first region, including its upper boundary, and a two-term optimum in the second. It remains to exclude any one-term optimum in the strict second region, even one not diagonal in this basis.

There \(R=2w-b-c>0\), and \(W=2\proj{\beta}\), with \(\ket\beta=(\ket{01}-\ket{10})/\sqrt2\), is an optimal generalized-dual certificate. Since
\[
 I-W^\Gamma=2\proj{\Phi^+},\qquad
 \ket{\Phi^+}=(\ket{00}+\ket{11})/\sqrt2,
\]
a hypothetical optimal noise \(Y=R\proj{e\otimes f}\) must obey
\(e_0f_0+e_1f_1=0\). For normalized local vectors, this means \(f=(-e_1,e_0)\) up to phase. Direct partial transposition gives
\begin{equation}
 Y^\Gamma\ket\beta=\frac R{\sqrt2}
 \begin{pmatrix}-e_0\overline e_1\\|e_0|^2\\-|e_1|^2\\e_1\overline e_0\end{pmatrix},
 \qquad
 A\ket\beta=\frac1{\sqrt2}
 \begin{pmatrix}0\\b-w\\w-c\\0\end{pmatrix}.
 \label{sm:eq:x-single-product-obstruction}
\end{equation}
The second complementary-slackness condition,
\((A+Y^\Gamma)\ket\beta=0\), forces \(e_0\overline e_1=0\). Thus the only possible product projectors are \(\proj{01}\) and \(\proj{10}\). The former requires \(w=c\), and the latter \(w=b\), contradicting \(w>\max\{b,c\}\). Two terms are therefore necessary and sufficient. This statement minimizes preparation size over all separable optimal noises; it does not assert uniqueness of their decompositions.
\end{proof}

\begin{figure}[t]
 \centering
 \includegraphics[width=0.94\linewidth]{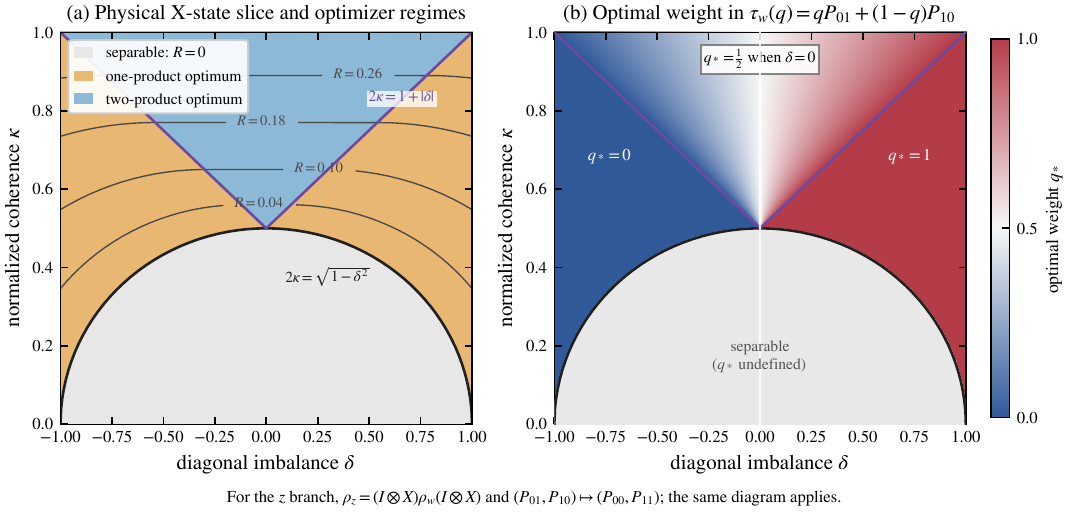}
 \caption{Minimum pure-product preparation size of an optimal separable noise. For the physical X-state slice \(a=d=1/3\), \(b=(1-\delta)/6\), \(c=(1+\delta)/6\), \(w=\kappa/3\), \(z=0\), with \(-1\le\delta\le1\), \(0\le\kappa\le1\), the PPT boundary is \(2\kappa=\sqrt{1-\delta^2}\). Within the entangled region, \(2\kappa=1+|\delta|\) separates one-term and two-term optimal preparation; the boundary admits one term. In panel (b), the color gives \(q_*=u_*/(u_*+v_*)\) for the representative noise in Eq.~\eqref{sm:eq:x-optimal-two-product-weights-w}. Here \(P_{ij}=\proj{ij}\) and \(\tau_w(q)=qP_{01}+(1-q)P_{10}\). Equal weights occur on \(\delta=0\) in the two-term region.}
 \label{sm:fig:xstate-phase-weight}
\end{figure}

Figure~\ref{sm:fig:xstate-phase-weight} displays the minimum-preparation regions and the representative weight \(q_*=u_*/(u_*+v_*)\). We next compare these weights with those for the pure BSA remainder.

Write the unique BSA of an entangled two-qubit state as
\begin{equation}
 \rho_X=S+\varepsilon\proj{\psi},
 \qquad S\in\SEP_+,
 \qquad\varepsilon>0,
 \label{sm:eq:x-bsa-decomposition}
\end{equation}
where \(\Tr S\) is maximal. A pure entangled remainder and uniqueness of the optimal two-qubit decomposition were established in Refs.~\cite{LewensteinSanpera1998,KarnasLewenstein2001}.
\begin{proposition}[Parity and robustness noise of the BSA remainder]
\label{sm:prop:x-bsa-remainder}
If \(\rho_X\) lies in the \(w\)-entangled branch, then
\begin{equation}
 \ket{\psi}=\alpha\ket{00}+e^{i\phi}\beta\ket{11},
 \qquad
 \alpha,\beta>0,
 \quad
 \alpha^2+\beta^2=1.
 \label{sm:eq:x-bsa-even-remainder}
\end{equation}
One optimal separable noise for \(\proj{\psi}\) is
\begin{equation}
 \omega_{\psi}=\frac12\proj{01}+\frac12\proj{10},
 \qquad
 \Rs(\proj{\psi})=\Rg(\proj{\psi})=2\alpha\beta.
 \label{sm:eq:x-bsa-equal-noise}
\end{equation}
In the \(z\)-entangled branch, the remainder has odd parity and the product pair is replaced by \(\{\ket{00},\ket{11}\}\).
\end{proposition}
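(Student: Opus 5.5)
The plan has three steps: pin down the parity of the remainder, compute the pure-state robustness, and then verify the product-noise pair.

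\emph{Parity of the remainder.} Uniqueness of the two-qubit BSA gives the parity. The parity twirl \(\mathcal T\) of Eq.~\eqref{sm:eq:x-parity-twirl} is a mixture of the local unitaries \(I\) and \(\Pi\), and it fixes \(\rho_X\). Applying \(\Pi\) to \(\rho_X=S+\varepsilon\proj{\psi}\) gives another decomposition with the same separable weight \(\Tr S\). By uniqueness \cite{LewensteinSanpera1998,KarnasLewenstein2001}, \(\Pi S\Pi=S\) and \(\Pi\proj{\psi}\Pi=\proj{\psi}\). Hence \(\ket\psi\) is a parity eigenvector, supported either on \(\{\ket{00},\ket{11}\}\) or on \(\{\ket{01},\ket{10}\}\).

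To choose the sector, I would argue as follows. An odd remainder would leave the even block of \(\rho_X\), and hence the entry \(w\), entirely inside the separable part \(S\). Then \(S\) would itself be an X-type separable operator, since it is \(\Pi\)-invariant. PPT of \(S\) gives \(|w|^2\le S_{01,01}S_{10,10}\). The diagonal entries of \(S\) in the odd sector are at most \(b\) and \(c\), because \(\varepsilon\proj\psi\) is positive there. This yields \(|w|^2\le bc\), contradicting the \(w\)-branch. So \(\ket\psi\) is even, i.e.\ \(\alpha\ket{00}+e^{i\phi}\beta\ket{11}\).

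Both \(\alpha,\beta\) must be positive because the remainder is entangled; otherwise \(\varepsilon\proj\psi\) could be absorbed into \(S\), contradicting maximality of \(\Tr S\). The main subtlety is making the "odd sector of \(S\) is dominated by \(b,c\)" step rigorous. Compressing both \(\rho_X\) and \(S\) to the odd subspace handles this, since \(\proj\psi\) compresses to zero there.

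\emph{Value.} For the pure state, the common value \(2\alpha\beta\) follows from Theorem~\ref{sm:thm:x-state}. Indeed, \(\proj\psi\) after a local phase is an X state with \(a=\alpha^2\), \(d=\beta^2\), \(w=\alpha\beta\), and \(b=c=z=0\). This is the \(w\)-branch in the regime \(t=\alpha\beta>\max\{b,c\}=0\), so \(R=2t-b-c=2\alpha\beta\). Alternatively, one can use the known pure-state value \((\alpha+\beta)^2-1=2\alpha\beta\) \cite{VidalTarrach1999}.

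\emph{Optimal noise.} I would check directly that \(\omega_\psi\) is feasible. Let \(\proj\psi+\alpha\beta(\proj{01}+\proj{10})\) be the unnormalized sum; its partial transpose has the odd block \(\begin{pmatrix}\alpha\beta&\alpha\beta e^{i\phi}\\ \alpha\beta e^{-i\phi}&\alpha\beta\end{pmatrix}\succeq0\). The even block is diagonal with entries \(\alpha^2,\beta^2\). Therefore the sum is PPT, hence separable, and the cost \(2\alpha\beta\) is optimal. This agrees with the weights \((t-b,t-c)=(\alpha\beta,\alpha\beta)\) in Eq.~\eqref{sm:eq:x-optimal-two-product-weights-w}, which explains the equal weights in \(\omega_\psi\).

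\emph{\(z\)-branch.} The \(z\)-branch statement follows by the sector exchange of Eq.~\eqref{sm:eq:x-two-product-branch-exchange}.
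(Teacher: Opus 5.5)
Your proposal is correct and follows the paper's proof. Parity comes from $\Pi$-conjugation plus BSA uniqueness; the even sector is forced because an odd remainder would give $|w|^2\le S_{01,01}S_{10,10}\le bc$ (the paper phrases this as positivity of the odd block of $S^\Gamma+\varepsilon(\proj{\psi})^\Gamma$); and $\omega_\psi$ is checked by the same PPT block computation. The only difference is the lower bound $2\alpha\beta$: you take it from Theorem~\ref{sm:thm:x-state} (or Vidal--Tarrach), whereas the paper inserts the maximally entangled negative eigenvector $(\ket{01}-e^{i\phi}\ket{10})/\sqrt2$ of $(\proj{\psi})^\Gamma$ directly into Eq.~\eqref{sm:eq:single-vector-formula}.

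One small slip: under the odd-remainder hypothesis it is the \emph{even} compression of $\proj{\psi}$ that vanishes, which gives $S_{00,11}=w$. On the odd subspace, $\varepsilon\proj{\psi}\succeq0$ is what bounds the diagonal of $S$. That bound is immediate from $\rho_X-S\succeq0$, so the step you flagged is not a real subtlety.
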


\begin{proof}
Conjugating Eq.~\eqref{sm:eq:x-bsa-decomposition} by \(\Pi\) produces another BSA with the same separable weight. Uniqueness therefore gives \(\Pi\proj{\psi}\Pi=\proj{\psi}\), so \(\ket{\psi}\) has definite parity. In the \(w\)-entangled branch, the odd-parity principal block of \(\rho_X^\Gamma\) is not positive semidefinite. If \(\ket{\psi}\) had odd parity, the same block of \((\proj{\psi})^\Gamma\) would be diagonal and positive semidefinite. The corresponding block of \(S^\Gamma\) is also positive semidefinite because \(S\) is separable. Their sum could not equal the nonpositive odd-parity block of \(\rho_X^\Gamma\). Hence the remainder has the even-parity form in Eq.~\eqref{sm:eq:x-bsa-even-remainder}; exchanging parity sectors gives the \(z\)-branch.

For the state in Eq.~\eqref{sm:eq:x-bsa-even-remainder}, add the unnormalized noise
\begin{equation}
 2\alpha\beta\,\omega_{\psi}
 =\alpha\beta\proj{01}+\alpha\beta\proj{10}.
\end{equation}
The only nontrivial block of the partial transpose of the resulting operator is
\begin{equation}
 \alpha\beta
 \begin{pmatrix}
  1&e^{-i\phi}\\
  e^{i\phi}&1
 \end{pmatrix}
 \succeq0.
\end{equation}
The mixture is therefore PPT and separable. Conversely, \((\proj{\psi})^\Gamma\) has eigenvalue \(-\alpha\beta\) with maximally entangled eigenvector
\begin{equation}
 \frac{\ket{01}-e^{i\phi}\ket{10}}{\sqrt2}.
\end{equation}
The single-vector lower bound in Eq.~\eqref{sm:eq:single-vector-formula} forces every physical noise to have weight at least \(2\alpha\beta\). This proves Eq.~\eqref{sm:eq:x-bsa-equal-noise}.
\end{proof}

The BSA fixes the parity of its pure remainder and hence the relevant cross-Schmidt product pair, but it does not fix the mixed-state robustness weights. In the \(w\)-branch, both the pure BSA remainder and the original mixed X state admit optimal noises supported on \(\{\ket{01},\ket{10}\}\). The pure remainder uses equal weights, whereas the mixed state uses
\begin{equation}
 q_*:=\frac{u_*}{u_*+v_*},
 \qquad
 \omega_*=q_*\proj{01}+(1-q_*)\proj{10},
 \label{sm:eq:x-bsa-versus-mixed-weight}
\end{equation}
which is generally asymmetric and may reduce to a single product state. The \(z\)-branch uses \(\{\ket{00},\ket{11}\}\).

\begin{figure}[t]
 \centering
 \includegraphics[width=0.94\linewidth]{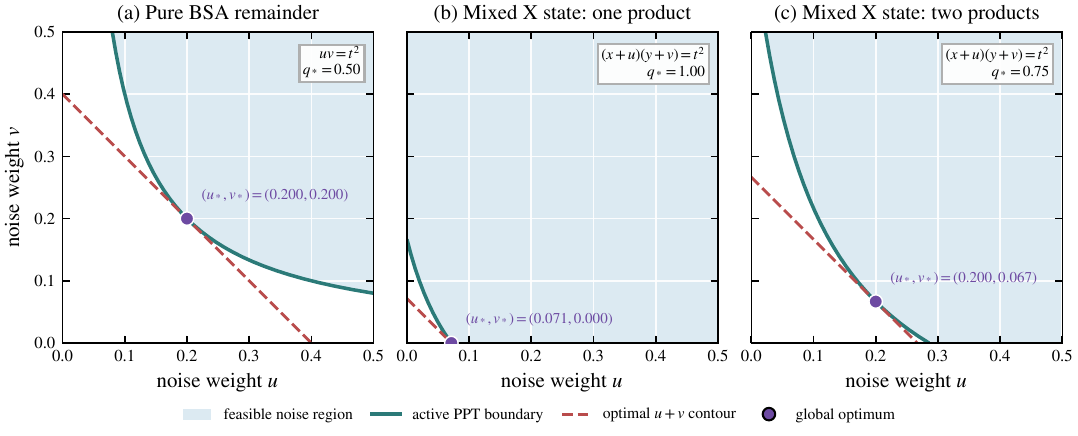}
 \caption{Representative crossed-product optimal noises for separate inputs: (a) a pure BSA remainder with \(\alpha\beta=1/5\); (b,c) mixed X states with \(a=d=1/3\), \(b=1/10\), \(c=7/30\), \(z=0\), and \(w=1/5,3/10\), respectively. The weights multiply \(\proj{01}\) and \(\proj{10}\). In the plots, \((x,y,t)=(0,0,\alpha\beta)\) for (a) and \((b,c,w)\) for (b,c). The shifted constraint \((x+u)(y+v)\ge t^2\) produces equal, one-sided, or unequal weights. Panel (a) is not the BSA of either mixed input.}
 \label{sm:fig:bsa-xstate-geometry}
\end{figure}

\section{Proof of the negativity bounds}\label{sm:sec:bounds}

For a two-qubit state, use the negativity convention
\begin{equation}
 \Neg(\rho):=\frac{\|\rho^\Gamma\|_1-1}{2}.
 \label{sm:eq:negativity-definition}
\end{equation}
If \(\rho\) is entangled, let \(\ket{v_-}\) be the normalized eigenvector of \(\rho^\Gamma\) with eigenvalue \(-\Neg(\rho)\), and let \(C_-\) be its pure-state concurrence. Denote Wootters' mixed-state concurrence by \(C(\rho)\).

\begin{proposition}[Negativity bounds and equality condition]
\label{sm:prop:negativity-bounds}
For every entangled two-qubit state,
\begin{equation}
 \frac{2\Neg(\rho)}{1+\sqrt{1-C_-^2}}
 \le \Rob(\rho)\le2\Neg(\rho)\le C(\rho).
 \label{sm:eq:negativity-robustness-bounds}
\end{equation}
Moreover, \(\Neg(\rho)<\Rob(\rho)\), and
\begin{equation}
 \Rob(\rho)=2\Neg(\rho)
 \quad\Longleftrightarrow\quad
 \ket{v_-}\text{ is maximally entangled}.
 \label{sm:eq:upper-bound-equality}
\end{equation}
\end{proposition}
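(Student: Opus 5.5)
The plan is to work entirely with \(A=\rho^\Gamma\) and its single negative eigenvalue. For an entangled two-qubit state \(A\) has inertia \((3,1,0)\) (as used in Proposition~\ref{sm:prop:inverse-pt-reduction}). Hence \(\|A\|_1=1+2\Neg(\rho)\) and \(A\ket{v_-}=-\Neg(\rho)\ket{v_-}\). I write \(\alpha_1\ge\alpha_2\) for the Schmidt coefficients of \(\ket{v_-}\). Then \(C_-=2\alpha_1\alpha_2\) and \(\alpha_1^2=\bigl(1+\sqrt{1-C_-^2}\bigr)/2\).

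\emph{Lower bound and strictness.} I insert \(\ket b=\ket{v_-}\) into the single-vector formula of Theorem~\ref{sm:thm:single-vector-formula}. This gives
\[
 \Rob(\rho)\ge\frac{\Neg(\rho)}{\alpha_1^2}=\frac{2\Neg(\rho)}{1+\sqrt{1-C_-^2}},
\]
which is the left inequality. For strictness, note that \(\ket{v_-}\) cannot be a product vector, because Lemma~\ref{sm:lem:A-block-positive} makes \(\bra pA\ket p\ge0\) for every product \(\ket p\). Hence \(\alpha_2>0\) and \(\alpha_1^2<1\). Since \(\Neg(\rho)>0\), this gives \(\Rob(\rho)>\Neg(\rho)\).

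\emph{Upper bound by an explicit separable noise.} I work in the Schmidt basis of \(\ket{v_-}\), where \(\ket{v_-}=\alpha_1\ket{00}+\alpha_2\ket{11}\) after local unitaries, which conjugate \(\Gamma\) harmlessly. There I set
\[
 K=\proj{v_-}+\alpha_1\alpha_2\bigl(\proj{01}+\proj{10}\bigr).
\]
Clearly \(K\succeq\proj{v_-}\), so \(A+\Neg(\rho)K\succeq A+\Neg(\rho)\proj{v_-}\succeq0\). Moreover, \(K^\Gamma\) is the direct sum of the diagonal entries \(\alpha_1^2,\alpha_2^2\) and the odd block \(\alpha_1\alpha_2\left(\begin{smallmatrix}1&1\\1&1\end{smallmatrix}\right)\), both positive semidefinite. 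Thus \(\Neg(\rho)K\) is feasible for \((P_{\mathrm s})\), and
\[
 \Rob(\rho)\le\Neg(\rho)\,\Tr K=\Neg(\rho)(1+C_-)\le2\Neg(\rho).
\]

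\emph{Equality condition.} If \(\ket{v_-}\) is maximally entangled, then \(C_-=1\). The lower bound becomes \(2\Neg(\rho)\), so equality holds. If \(\ket{v_-}\) is not maximally entangled, then \(C_-<1\), and the refined upper bound \(\Neg(\rho)(1+C_-)\) is strictly below \(2\Neg(\rho)\). This proves \eqref{sm:eq:upper-bound-equality}.

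\emph{The final inequality.} The remaining bound \(2\Neg(\rho)\le C(\rho)\) is the known two-qubit negativity–concurrence inequality; I would cite it rather than reprove it. It is the only step not derived from the earlier SDP machinery. If a self-contained proof were required, it would be the main obstacle. The route I would try first is to take an optimal Wootters decomposition \(\rho=\sum_i p_i\proj{\psi_i}\). Convexity of the trace norm then gives \(\|\rho^\Gamma\|_1\le\sum_ip_i\|(\proj{\psi_i})^\Gamma\|_1\). By Lemma~\ref{sm:lem:rank-one-spectrum} this sum equals \(\sum_ip_i\bigl(1+C(\psi_i)\bigr)=1+C(\rho)\), because Wootters' optimal decomposition can be chosen with all \(C(\psi_i)\) equal to \(C(\rho)\).
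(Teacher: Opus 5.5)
Your proof is correct. It matches the paper on the lower bound and on strictness, but it takes a genuinely different route for the upper bound and for the ``only if'' half of the equality condition.

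\textbf{The paper's route.} The paper stays inside the single-vector formula. It bounds the numerator by $-\bra b\rho^\Gamma\ket b\le\Neg(\rho)$ using the spectral decomposition, and the denominator by $\alpha_1^2(b)\ge 1/2$. It then proves the converse of \eqref{sm:eq:upper-bound-equality} by showing that attaining $2\Neg(\rho)$ forces every maximizer $b_*$ to satisfy $\alpha_1^2(b_*)=1/2$ and $|\braket{v_-}{b_*}|=1$. That upper direction uses the ``$\le$'' half of Theorem~\ref{sm:thm:single-vector-formula}, and hence the rank-one reduction behind Theorem~\ref{sm:thm:equality}.

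\textbf{Your route.} You exhibit an explicit $(P_{\mathrm s})$-feasible point $\Neg(\rho)K$. Its feasibility checks out: $K\succeq\proj{v_-}$, $A$ has a single negative eigenvalue, and $K^\Gamma$ is block-diagonal and positive in the product basis $\{u_i\otimes\overline{f_j}\}$. This gives the sharper sandwich
\[
 \frac{\Neg(\rho)}{\alpha_1^2}\le\Rob(\rho)\le\Neg(\rho)\,(1+C_-).
\]
This buys several things:
\begin{itemize}
\item The converse of the equality condition becomes immediate.
\item It supplies an explicit separable noise.
\item Your lower bound needs only generalized weak duality with $W_{v_-}=\proj{v_-}/\alpha_1^2$. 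So both bounds hold for $\Rs$ and $\Rg$ separately, without Theorem~\ref{sm:thm:equality}. In the maximally entangled case this independently gives $\Rs=\Rg=2\Neg(\rho)$.
\item It would tighten the upper curve in panel (a) of Fig.~\ref{sm:fig:robustness-negativity-concurrence} from $1$ to $(1+C_-)/2$.
\end{itemize}
What the paper's route gives in exchange is structural: in the equality case, every maximizer of the single-vector formula coincides with $\ket{v_-}$ up to phase.

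\textbf{The last inequality.} Your sketch of $2\Neg(\rho)\le C(\rho)$ is the paper's own convexity argument. You only need an optimal pure-state ensemble whose average concurrence equals $C(\rho)$; you do not need all members to have equal concurrence.
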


\begin{proof}
The vector \(\ket{v_-}\) cannot be a product vector, because Eq.~\eqref{sm:eq:A-product} makes every product expectation of \(\rho^\Gamma\) nonnegative. Write its Schmidt coefficients as \(\alpha_1\ge\alpha_2>0\). Choosing \(\ket b=\ket{v_-}\) in Eq.~\eqref{sm:eq:single-vector-formula} gives
\begin{equation}
 \Rob(\rho)\ge\frac{\Neg(\rho)}{\alpha_1^2}
 =\frac{2\Neg(\rho)}{1+\sqrt{1-C_-^2}},
 \label{sm:eq:negative-vector-lower}
\end{equation}
where \(C_-=2\alpha_1\alpha_2\). Because \(\alpha_1<1\), this lower bound is strictly larger than \(\Neg(\rho)\).

Choose three normalized eigenvectors \(\ket{v_1},\ket{v_2},\ket{v_3}\) so that, together with \(\ket{v_-}\), they form an orthonormal eigenbasis of \(\rho^\Gamma\). Write their nonnegative eigenvalues as \(\eta_1,\eta_2,\eta_3\). For every normalized \(\ket b\),
\begin{align}
 -\bra b\rho^\Gamma\ket b
 &=\Neg(\rho)|\braket{v_-}{b}|^2
   -\sum_{j=1}^3\eta_j|\braket{v_j}{b}|^2\notag\\
 &\le\Neg(\rho).
 \label{sm:eq:numerator-negativity-bound}
\end{align}
Since every two-qubit vector has \(\alpha_1^2(b)\ge1/2\), Eq.~\eqref{sm:eq:single-vector-formula} proves \(\Rob(\rho)\le2\Neg(\rho)\). The established comparison between doubled negativity and concurrence gives \(2\Neg(\rho)\le C(\rho)\) \cite{VerstraeteEtAl2001}. This also follows from convexity of negativity: every pure two-qubit state satisfies \(2\Neg=C\), and applying convexity to a concurrence-minimizing pure-state ensemble \cite{Wootters1998} gives the mixed-state inequality.

If \(\ket{v_-}\) is maximally entangled, choosing it in Eq.~\eqref{sm:eq:negative-vector-lower} attains \(2\Neg(\rho)\). Conversely, suppose the upper bound is attained and let \(b_*\) maximize Eq.~\eqref{sm:eq:single-vector-formula}. Equality in
\begin{equation}
 \frac{-\bra{b_*}\rho^\Gamma\ket{b_*}}
      {\alpha_1^2(b_*)}
 \le\frac{\Neg(\rho)}{\alpha_1^2(b_*)}
 \le2\Neg(\rho)
\end{equation}
requires \(\alpha_1^2(b_*)=1/2\) and equality in Eq.~\eqref{sm:eq:numerator-negativity-bound}. The latter forces \(|\braket{v_-}{b_*}|=1\), so \(v_-\) itself is maximally entangled.
\end{proof}

The eigenvector lower bound is not a pointwise tight lower frontier except at \(C_-=1\). Indeed, write
\(\ket{v_-}=\cos\theta\ket{u_1f_1}+\sin\theta\ket{u_2f_2}\)
with \(0<\theta<\pi/4\), and increase \(\theta\) while fixing the Schmidt bases. Since \(v_-\) is an eigenvector of \(A\), the derivative of \(-\bra vA\ket v\) vanishes there, whereas
\(d(\cos^2\theta)/d\theta=-\sin(2\theta)<0\). The variational quotient therefore increases to first order, proving strict inequality in Eq.~\eqref{sm:eq:negative-vector-lower} for \(0<C_-<1\). At \(C_-=1\), the lower and upper bounds coincide. No claim of global attainability of either plotted lower curve at every parameter value is made.

These lower bounds are the negativity-based teleportation bounds of
Ref.~\cite{VerstraeteVerschelde2003}, expressed in the convention
\(\Neg=(\|\rho^\Gamma\|_1-1)/2\) and now applying also to \(\Rs\).
That reference uses doubled negativity and establishes
\(C_-\ge2\Neg(\rho)/C(\rho)\). Since the lower bound increases with \(C_-\),
\begin{equation}
 R(\rho)\ge
 \frac{2\Neg(\rho)}
 {1+\sqrt{1-[2\Neg(\rho)/C(\rho)]^2}},
 \label{sm:eq:mixed-state-NC-lower}
\end{equation}
which gives panel (b) of Fig.~\ref{sm:fig:robustness-negativity-concurrence}.
The equality condition in Eq.~\eqref{sm:eq:upper-bound-equality} follows from
the two simultaneous equalities in the proof above. Pure states and
Bell-diagonal states recover \(R=2\Neg=C\), consistently with their
established robustness values \cite{VidalTarrach1999,AkhtarshenasJafarizadeh2003}.

\begin{figure}[t]
 \centering
 \includegraphics[width=0.96\linewidth]{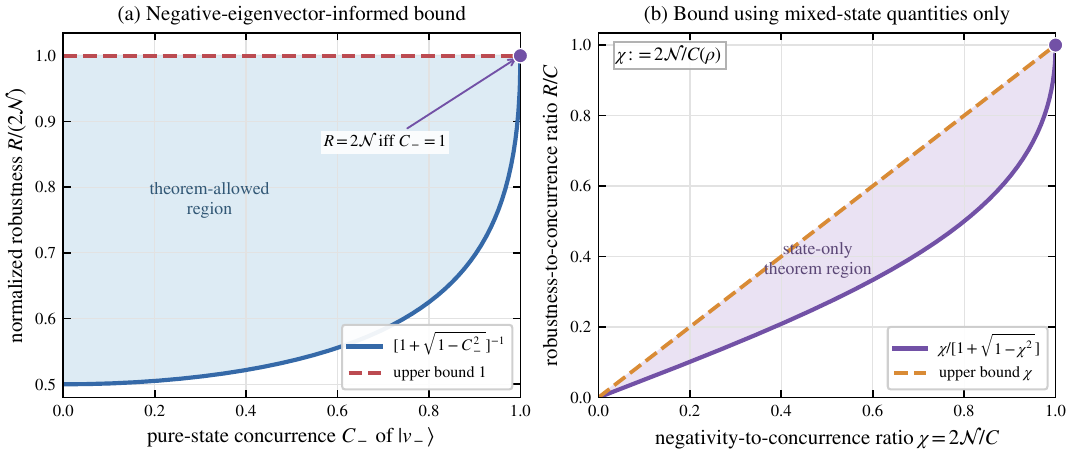}
 \caption{Bounds on the common two-qubit robustness \(\Rob=\Rs=\Rg\). (a) Retaining the pure-state concurrence \(C_-\) of the negative eigenvector of \(\rho^\Gamma\) gives \([1+\sqrt{1-C_-^2}]^{-1}\leq\Rob/(2\Neg)\leq1\). (b) Writing \(\chi=2\Neg/C(\rho)\) gives \(\chi/[1+\sqrt{1-\chi^2}]\leq\Rob/C(\rho)\leq\chi\). Shading represents the proved inequalities, not an attainability claim for every interior point.}
 \label{sm:fig:robustness-negativity-concurrence}
\end{figure}

Figure~\ref{sm:fig:robustness-negativity-concurrence} compares the eigenvector-sensitive lower bound with its coarser state-only consequence.

\section{Operational corollaries and task normalization}\label{sm:sec:operations}
These corollaries use the operational theorems of
Ref.~\cite{TakagiRegula2019}.
The free inputs are separable states on the original bipartition, while
the channel ensembles and output measurements in these tasks are
unrestricted quantum operations and measurements. No additional
entangled reference input is supplied.

For a finite channel ensemble \(\mathcal E=\{p_i,\Lambda_i\}\) and output
POVM \(M=\{M_i\}\), write
\begin{equation}
 p_{\rm succ}(\mathcal E,M,\rho)
 =\sum_i p_i\Tr[M_i\Lambda_i(\rho)].
\end{equation}
Theorem~1 of Ref.~\cite{TakagiRegula2019} gives
\begin{equation}
 \mathcal A_{\rm gen}(\rho):=
 \sup_{\mathcal E,M}
 \frac{p_{\rm succ}(\mathcal E,M,\rho)}
 {\max_{\sigma\in\SEP}p_{\rm succ}(\mathcal E,M,\sigma)}
 =1+\Rg(\rho).
 \label{sm:eq:operational-general}
\end{equation}
The measurement is held fixed between numerator and denominator before
the outer optimization. For equiprobable binary channels, instead define
the \emph{optimized gain above random guessing}
\begin{equation}
 g(\Lambda_0,\Lambda_1;\rho)
 :=\max_M p_{\rm succ}(\{1/2,\Lambda_i\}_{i=0}^1,M,\rho)-\frac12.
\end{equation}
Theorem~7 of the same reference yields
\begin{equation}
 \mathcal A_{\rm bin}(\rho):=
 \sup_{\Lambda_0,\Lambda_1}
 \frac{g(\Lambda_0,\Lambda_1;\rho)}
 {\max_{\sigma\in\SEP}g(\Lambda_0,\Lambda_1;\sigma)}
 =1+2\Rs(\rho).
 \label{sm:eq:operational-binary}
\end{equation}
Only positive denominators are included. Since separable states span the
Hermitian space, zero denominator implies identical channels and zero
gain for every input. Thus \(\mathcal A_{\rm bin}\) is a ratio of
\emph{gains}, not of success probabilities. Theorem~\ref{sm:thm:equality}
gives \(\mathcal A_{\rm bin}=2\mathcal A_{\rm gen}-1\) for two qubits,
without identifying the two discrimination tasks.

For optimal deterministic two-qubit teleportation,
Refs.~\cite{VerstraeteVerschelde2003,HorodeckiTeleportation1999} give
\(f_{\rm tel}^*=(2F_{\rm TP\mbox{-}LOCC}^*+1)/3\).
Here \(F_{\rm TP\mbox{-}LOCC}^*\) is the optimized singlet fraction under
trace-preserving local operations and classical communication (TP-LOCC),
and \(f_{\rm tel}^*\) is the corresponding mean teleportation fidelity.
Together with
Eq.~\eqref{sm:eq:vv-rg-normalization} and Theorem~\ref{sm:thm:equality},
\begin{equation}
 R=2F_{\rm TP\mbox{-}LOCC}^*-1=3f_{\rm tel}^*-2.
 \label{sm:eq:teleportation-common-R}
\end{equation}
Here \(R=\Rs(\rho)=\Rg(\rho)\); the equality identifies the teleportation expression with the standard separable-noise cost.

\section{Exact qubit--qutrit separation}\label{sm:sec:dimension}
\subsection{Counterexample and strict gap}
The proof is specific to \(2\otimes2\).  Its rank reduction relies on the fact that every subspace of dimension at least two contains a product vector.  In \(2\otimes3\), PPT and separability still coincide, but two-dimensional completely entangled subspaces exist, so the reduction can stop at rank two and Proposition~\ref{sm:prop:rank-one-bridge} cannot be invoked.  The equality itself can fail already in \(2\otimes3\), as the following example shows.

Here is an exact counterexample.  In the ordered basis \(\{\ket{00},\ket{01},\ket{02},\ket{10},\ket{11},\ket{12}\}\), let
\begin{equation}
 \rho_*=\frac1{40}
 \begin{pmatrix}
 2&0&0&0&0&0\\
 0&1&0&-4&0&0\\
 0&0&13&0&-5&0\\
 0&-4&0&17&0&0\\
 0&0&-5&0&2&0\\
 0&0&0&0&0&5
 \end{pmatrix}.
 \label{sm:eq:2x3-counterexample}
\end{equation}
Throughout this section, \(\Gamma\) denotes partial transpose on the qutrit factor. The trace of \(\rho_*\) is one. Up to a simultaneous permutation of rows and columns, \(40\rho_*\) has scalar blocks \(2,5\) and the two blocks
\begin{equation}
 \begin{pmatrix}1&-4\\-4&17\end{pmatrix},
 \qquad
 \begin{pmatrix}13&-5\\-5&2\end{pmatrix}.
 \label{sm:eq:2x3-positive-blocks}
\end{equation}
Both have positive diagonal entries and determinant one. Thus \(\rho_*\succ0\): the counterexample is full rank without a perturbation.

Because PPT is equivalent to separability in \(2\otimes3\) \cite{Horodecki1996}, the same primal and dual cone formulations as above, now with \(6\times6\) variables, are exact.  Here \(R_{\mathrm g}^{\mathrm{ent}}\) and \(R_{\mathrm s}^{\mathrm{ent}}\) denote the definitions \eqref{sm:eq:def-rg} and \eqref{sm:eq:def-rs}, respectively, with separability taken across the qubit--qutrit bipartition. The explicit noise and witness below give
\begin{equation}
 R_{\mathrm g}^{\mathrm{ent}}(\rho_*)\le\frac{1}{8}
 <\frac{2}{15}\le R_{\mathrm s}^{\mathrm{ent}}(\rho_*).
 \label{sm:eq:2x3-bounds}
\end{equation}
Thus
\begin{equation}
 R_{\mathrm s}^{\mathrm{ent}}(\rho_*)
 -R_{\mathrm g}^{\mathrm{ent}}(\rho_*)
 \ge\frac{1}{120}>0.
 \label{sm:eq:2x3-strict-gap}
\end{equation}
These are certified bounds, not asserted exact optimal values. Subsection~\ref{sm:app:2x3-certificate} proves feasibility analytically, including the witness bound for arbitrary complex product vectors.

For bipartite systems with local dimensions \(m,n\ge2\) and \(mn>6\), PPT is not sufficient for separability in general.  Consequently, the PPT semidefinite programs used here do not, without further constraints, represent the two entanglement robustnesses in those dimensions.  Explicit PPT-entangled states already occur in \(3\otimes3\) and \(2\otimes4\) \cite{HorodeckiPPT1997}.

\subsection{Exact rational certificate}\label{sm:app:2x3-certificate}
Define the unnormalized vector and weighted noise
\begin{equation}
 \ket{u}=2\ket{01}+\ket{10},\qquad
 Y_g=\frac{\proj{u}}{40},\qquad S=\rho_*+Y_g.
 \label{sm:eq:2x3-noise}
\end{equation}
Then \(Y_g\succeq0\), \(\Tr Y_g=1/8\), and \(S\succ0\). In the reordered basis \(00,11;01,12;02;10\), the partially transposed target is
\begin{equation}
 40S^\Gamma\simeq
 \begin{pmatrix}2&-2\\-2&2\end{pmatrix}
 \oplus\begin{pmatrix}5&-5\\-5&5\end{pmatrix}
 \oplus(13)\oplus(18)\succeq0,
 \label{sm:eq:2x3-target-blocks}
\end{equation}
where \(\simeq\) denotes a basis permutation. The normalized target \(S/(9/8)\) is therefore separable, proving \(\Rg(\rho_*)\le1/8\). The normalized noise \(\proj{u}/5\) is an entangled pure state; its optimality is not assumed.

For the standard lower bound, define unnormalized vectors and Hermitian operators
\begin{equation}
 \begin{gathered}
 \ket{v_1}=\ket{00}+\ket{11},\qquad
 \ket{v_2}=\ket{01}+\ket{12},\\
 W=\tfrac23(\proj{v_1}+\proj{v_2}),\qquad Z=W^\Gamma .
 \end{gathered}
 \label{sm:eq:2x3-witness}
\end{equation}
Let \(a=(a_0,a_1)^{\mathsf T}\) and \(b=(b_0,b_1,b_2)^{\mathsf T}\) be arbitrary unit complex vectors. With
\begin{equation}
 T_a=\begin{pmatrix}a_0&a_1&0\\0&a_0&a_1\end{pmatrix},
 \qquad
 T_aT_a^\dagger=
 \begin{pmatrix}1&a_1\overline{a_0}\\a_0\overline{a_1}&1\end{pmatrix},
 \label{sm:eq:2x3-product-map}
\end{equation}
the induced Euclidean operator norm satisfies \(\|T_a\|^2=1+|a_0a_1|\le3/2\), and hence
\begin{equation}
 0\le\bra{a\otimes b}W\ket{a\otimes b}
 =\tfrac23\|T_ab\|^2\le1.
 \label{sm:eq:2x3-product-bound}
\end{equation}
Partial transpose replaces \(b\) by \(\overline b\) in this expectation, so the same bound holds for \(Z\). By convexity, \(0\le\Tr(Z\sigma)\le1\) for every normalized separable \(\sigma\). For any standard-feasible identity \(\rho_*+t\tau=(1+t)\sigma\), with \(\sigma,\tau\) separable, this implies
\begin{equation}
 -\Tr(Z\rho_*)=t\Tr(Z\tau)-(1+t)\Tr(Z\sigma)\le t.
 \label{sm:eq:2x3-weak-duality}
\end{equation}
Direct multiplication gives \(WS^\Gamma=0\) and
\begin{equation}
 -\Tr(Z\rho_*)=\Tr(ZY_g)
 =\frac{\bra{u}Z\ket{u}}{40}=\frac{2}{15}.
 \label{sm:eq:2x3-witness-value}
\end{equation}
This proves the lower bound and the strict gap in \eqref{sm:eq:2x3-bounds}--\eqref{sm:eq:2x3-strict-gap}, with no floating-point positivity test or assumption on the reality of product vectors.

The range of \(W\) is the completely entangled subspace
\(E=\operatorname{span}\{v_1,v_2\}\) used in the main text: its coefficient matrix
\(\left(\begin{smallmatrix}x&y&0\\0&x&y\end{smallmatrix}\right)\)
has minors \(x^2,y^2\), so a nonzero vector in \(E\) cannot be a product vector. Although \(Z\) is at most one on separable states,
\begin{equation}
 \lambda_{\max}(Z)=\frac{1+\sqrt5}{3}>1,\qquad
 \frac{\bra{u}Z\ket{u}}{\braket{u}{u}}=\frac{16}{15}>1.
 \label{sm:eq:2x3-witness-excess}
\end{equation}
It therefore violates the generalized-dual constraint \(Z\preceq I_6\). A difference between dual feasible sets alone would not establish separation for a physical input. Here \(WS^\Gamma=0\) and \(\rho_*=S-Y_g\succ0\) close that argument:
\begin{equation}
 -\Tr(Z\rho_*)-\Tr Y_g=\Tr[(Z-I_6)Y_g]=\frac{1}{120}.
 \label{sm:eq:2x3-gap-mechanism}
\end{equation}

The construction belongs to a parameter family. Let \(M\) be real symmetric with diagonal \((a,b,c,d,a,b)\), off-diagonal entries \(M_{01,10}=M_{10,01}=-a\) and \(M_{02,11}=M_{11,02}=-b\), and all other entries zero. For real parameters satisfying
\begin{equation}
 a>0,\qquad b>4,\qquad ac>b^2,\qquad
 (b-4)(d-1)>(a+2)^2,
 \label{sm:eq:2x3-family-conditions}
\end{equation}
the matrix \(M-\proj{u}\) is positive definite: its nontrivial blocks are
\(\left(\begin{smallmatrix}b-4&-a-2\\-a-2&d-1\end{smallmatrix}\right)\)
and \(\left(\begin{smallmatrix}c&-b\\-b&a\end{smallmatrix}\right)\), and its scalar blocks are \(a,b\).
Put \(T=\Tr(M-\proj{u})=2a+2b+c+d-5>0\) and
\(\rho(a,b,c,d)=(M-\proj{u})/T\).
The target \(M/T\) is positive and PPT, since \(M^\Gamma\) has blocks
\(\left(\begin{smallmatrix}a&-a\\-a&a\end{smallmatrix}\right)\),
\(\left(\begin{smallmatrix}b&-b\\-b&b\end{smallmatrix}\right)\), \(c,d\).
Using the noise \(\proj{u}/T\) and the same \(Z\), with \(WM^\Gamma=0\), gives
\begin{equation}
 \Rg(\rho(a,b,c,d))\le\frac5T,\qquad
 \Rs(\rho(a,b,c,d))\ge\frac{16}{3T},\qquad
 \Rs(\rho(a,b,c,d))-\Rg(\rho(a,b,c,d))\ge\frac1{3T}.
 \label{sm:eq:2x3-family-bounds}
\end{equation}
The choice \((a,b,c,d)=(2,5,13,18)\) yields \(\rho_*\) and \(T=40\).

\subsection{Full-rank stability}
The state \(\rho_*\) is already full rank. To quantify stability under additional white noise, put \(U=1/8\), \(L=2/15\), and \(g=L-U=1/120\). For \(0\le\epsilon\le1\), define
\begin{equation}
 \rho_\epsilon=(1-\epsilon)\rho_*+\epsilon I_6/6,\qquad
 Y_\epsilon=(1-\epsilon)Y_g .
 \label{sm:eq:2x3-white-noise-family}
\end{equation}
Both \(Y_\epsilon\succeq0\) and
\((\rho_\epsilon+Y_\epsilon)^\Gamma
=(1-\epsilon)S^\Gamma+\epsilon I_6/6\succeq0\).
The target is separable by the \(2\otimes3\) PPT criterion, so
\(\Rg(\rho_\epsilon)\le(1-\epsilon)U\).
The same witness \(Z\) remains feasible independently of the input. Since
\begin{equation}
 \Tr Z=\Tr W=\frac83,
 \label{sm:eq:2x3-witness-trace}
\end{equation}
weak duality gives
\begin{equation}
 \Rs(\rho_\epsilon)-\Rg(\rho_\epsilon)
 \ge (1-\epsilon)g-\frac{\epsilon}{6}\Tr Z
 =\frac{3-163\epsilon}{360}.
 \label{sm:eq:2x3-stability-bound}
\end{equation}
This is positive for \(0\le\epsilon<3/163\). In particular, \(\rho_{1/200}\succeq I_6/1200\) is full rank and
\begin{equation}
 \Rs(\rho_{1/200})-\Rg(\rho_{1/200})
 \ge\frac{437}{72000}>0.00606.
 \label{sm:eq:2x3-fullrank-gap}
\end{equation}
This is an exact lower bound, not a numerical estimate of the optimal gap.

\subsection{Local isometry invariance and the complete dimension boundary}
\begin{proposition}[Local isometry invariance]
\label{sm:prop:isometry-invariance}
Let \(V_A:\mathbb C^{d_A}\to\mathbb C^{D_A}\) and
\(V_B:\mathbb C^{d_B}\to\mathbb C^{D_B}\) be isometries, and put
\(V=V_A\otimes V_B\). For either robustness defined with the actual separable cone,
\begin{equation}
 R_\nu(V\rho V^\dagger)=R_\nu(\rho),\qquad \nu\in\{\mathrm s,\mathrm g\}.
 \label{sm:eq:isometry-invariance}
\end{equation}
\end{proposition}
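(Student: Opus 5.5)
The plan is to prove the two inequalities \(R_\nu(V\rho V^\dagger)\le R_\nu(\rho)\) and \(R_\nu(\rho)\le R_\nu(V\rho V^\dagger)\) separately. Each follows from a local map that sends feasible decompositions of one problem to feasible decompositions of the other without increasing the noise trace. We work with the unnormalized formulations \eqref{eq:def-rg}--\eqref{eq:def-rs}, interpreted with the true separable cone \(\SEP_+\) in the relevant dimensions rather than the PPT surrogate.

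\emph{Embedding direction.} Suppose \(Y\) is feasible for \(\rho\), so that \(Y\succeq0\) (or \(Y\in\SEP_+\)) and \(\rho+Y\in\SEP_+\). Set \(\widetilde Y=VYV^\dagger\). Conjugation by \(V=V_A\otimes V_B\) maps positive operators to positive operators. It also maps each product term \(A_r\otimes B_r\) to \(V_AA_rV_A^\dagger\otimes V_BB_rV_B^\dagger\), so it preserves \(\SEP_+\). Because \(V^\dagger V=I\), it preserves the trace. Hence \(\widetilde Y\) lies in the same admissible noise set, \(V\rho V^\dagger+\widetilde Y=V(\rho+Y)V^\dagger\in\SEP_+\), and \(\Tr\widetilde Y=\Tr Y\). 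Taking the optimal \(Y\), whose existence is guaranteed by attainment, or simply an infimizing sequence, gives \(R_\nu(V\rho V^\dagger)\le R_\nu(\rho)\).

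\emph{Retraction direction.} We build a local trace-preserving retraction. Let \(\Pi_A=I-V_AV_A^\dagger\), fix a state \(\sigma_A\) on \(\mathbb C^{d_A}\), and define
\[
\mathcal E_A(X)=V_A^\dagger XV_A+\Tr(\Pi_AX)\,\sigma_A .
\]
This map is completely positive and trace preserving, and it satisfies \(\mathcal E_A(V_AXV_A^\dagger)=X\). Define \(\mathcal E_B\) in the same way and set \(\mathcal E=\mathcal E_A\otimes\mathcal E_B\). A tensor product of positive maps sends product positive operators to product positive operators, so \(\mathcal E\) preserves \(\SEP_+\). Being completely positive, it also preserves positivity, and it preserves the trace. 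Now let \(\widetilde Y\) be feasible for \(V\rho V^\dagger\) and put \(Y=\mathcal E(\widetilde Y)\). Then \(Y\) lies in the same noise class, \(\rho+Y=\mathcal E(V\rho V^\dagger+\widetilde Y)\in\SEP_+\), and \(\Tr Y=\Tr\widetilde Y\). This gives \(R_\nu(\rho)\le R_\nu(V\rho V^\dagger)\), and combining the two directions proves \eqref{sm:eq:isometry-invariance}.

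The only delicate point is the retraction step. The naive compression \(X\mapsto V^\dagger XV\) is not trace preserving: it can lower the trace of the noise and so would not give the correct inequality. The padding term \(\Tr(\Pi_AX)\sigma_A\) repairs this. We must still check that the padded map remains locally a product of positive maps, so that separability is preserved. This holds because each \(\mathcal E_A\) acts on a single factor, and a tensor product of such maps preserves \(\SEP_+\). No PPT characterization is needed anywhere, so the argument is valid for the actual separable cone in all finite dimensions, including those where PPT fails.
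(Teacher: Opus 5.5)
Your proposal is correct and follows the paper's proof essentially verbatim. Conjugation by $V$ gives $R_\nu(V\rho V^\dagger)\le R_\nu(\rho)$, and the padded local retraction $\mathcal R_A\otimes\mathcal R_B$ (compression plus a measure-and-prepare branch) maps any feasible embedded noise and target back with the same trace, giving the reverse inequality. One small correction to your closing remark: the plain compression $X\mapsto V^\dagger XV$ would also have worked, since it maps $\SEP_+$ into $\SEP_+$, sends $V\rho V^\dagger+\widetilde Y$ to $\rho+V^\dagger\widetilde Y V$, and can only lower the noise trace, which strengthens rather than spoils the bound $R_\nu(\rho)\le R_\nu(V\rho V^\dagger)$; the trace-preserving padding is a convenience, not a necessity.
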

\begin{proof}
Embedding any feasible noise and target by \(V\) preserves positivity, separability, and trace, so the left-hand side cannot exceed the right-hand side. For the reverse inequality, choose arbitrary local density operators \(\tau_A,\tau_B\). Each map
\begin{equation}
 \mathcal R_j(X)=V_j^\dagger X V_j+
 \Tr[(I-V_jV_j^\dagger)X]\tau_j,\qquad j=A,B,
 \label{sm:eq:local-retraction}
\end{equation}
is completely positive and trace preserving. Its two terms are a compression and a measure-and-prepare map, and
\(\mathcal R_j(V_jXV_j^\dagger)=X\).
Applying \(\mathcal R_A\otimes\mathcal R_B\) to any feasible embedded noise and target preserves their traces and the required cones, and recovers \(\rho\) as the input. It therefore gives a feasible original noise with the same cost, proving the other inequality. No PPT relaxation in the larger dimensions is used.
\end{proof}

For finite integers \(m,n\ge2\), Theorem~\ref{sm:thm:equality},
Eq.~\eqref{sm:eq:2x3-strict-gap}, and Proposition~\ref{sm:prop:isometry-invariance} imply
\begin{equation}
 \left[\Rs(\rho)=\Rg(\rho)\ 
 \text{for every }\rho\in\mathcal D(\mathbb C^m\otimes\mathbb C^n)\right]
 \quad\Longleftrightarrow\quad (m,n)=(2,2).
 \label{sm:eq:all-dimensions-classification}
\end{equation}
Indeed, every other such pair contains either \(2\otimes3\) or \(3\otimes2\); exchanging subsystems leaves both robustnesses unchanged. The embedded \(\rho_*\) retains its strictly positive gap. This is an existential separation in every larger nontrivial bipartite dimension, not an assertion of separation for every state.

\subsection{Any fixed bipartition of a multipartite system}
\begin{corollary}[Complete finite-dimensional classification across a fixed cut]
\label{sm:cor:fixed-bipartition}
Let \(\mathcal H=\bigotimes_{i=1}^k\mathbb C^{d_i}\), with finite integers \(d_i\ge1\), and fix a nonempty proper subset \(S\) of the parties. Put
\begin{equation}
 d_S=\prod_{i\in S}d_i,\qquad d_{\bar S}=\prod_{i\notin S}d_i.
 \label{sm:eq:effective-cut-dimensions}
\end{equation}
Define \(R_{\mathrm s}^{S|\bar S}\) and \(R_{\mathrm g}^{S|\bar S}\) using separability across this fixed cut. Then
\begin{equation}
 \left[R_{\mathrm s}^{S|\bar S}(\rho)=R_{\mathrm g}^{S|\bar S}(\rho)
       \ \text{for every state }\rho\text{ on }\mathcal H\right]
 \quad\Longleftrightarrow\quad
 \left[\min\{d_S,d_{\bar S}\}=1\ \text{or }(d_S,d_{\bar S})=(2,2)\right].
 \label{sm:eq:fixed-cut-classification}
\end{equation}
\end{corollary}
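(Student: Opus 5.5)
The plan is to reduce the cut problem to an ordinary bipartite problem on \(\mathbb C^{d_S}\otimes\mathbb C^{d_{\bar S}}\). First I would observe that the grouping isomorphism
\[
\bigotimes_{i\in S}\mathbb C^{d_i}\otimes\bigotimes_{i\notin S}\mathbb C^{d_i}\cong\mathbb C^{d_S}\otimes\mathbb C^{d_{\bar S}}
\]
is a unitary reordering of tensor factors. It maps the cone of operators separable across \(S|\bar S\) exactly onto \(\SEP_+\) of the bipartite space, because a local positive operator on a grouped side is simply an arbitrary positive operator on \(\mathbb C^{d_S}\) or on \(\mathbb C^{d_{\bar S}}\). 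This unitary also preserves trace and positivity. Hence the definitions \eqref{sm:eq:def-rs}--\eqref{sm:eq:def-rg} transfer verbatim, and \(R_\nu^{S|\bar S}(\rho)=R_\nu(U\rho U^\dagger)\) for \(\nu\in\{\mathrm s,\mathrm g\}\), where \(U\) is the grouping unitary. As \(\rho\) ranges over all states on \(\mathcal H\), \(U\rho U^\dagger\) ranges over all states on \(\mathbb C^{d_S}\otimes\mathbb C^{d_{\bar S}}\). The corollary is therefore equivalent to the bipartite classification for the pair \((d_S,d_{\bar S})\).

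Next I would handle the three cases. If \(\min\{d_S,d_{\bar S}\}=1\), every operator \(X\succeq0\) is of the form \(1\otimes X\) with both factors positive, so it is separable. Every state is then separable, and both robustnesses vanish, taking \(s=0\) in both definitions. If \((d_S,d_{\bar S})=(2,2)\), Theorem~\ref{sm:thm:equality} gives equality for every state. If both \(d_S,d_{\bar S}\ge2\) and \((d_S,d_{\bar S})\ne(2,2)\), then one side has dimension at least \(3\) and the other at least \(2\). I would choose local isometries embedding \(\mathbb C^2\otimes\mathbb C^3\), or \(\mathbb C^3\otimes\mathbb C^2\) after exchanging subsystems, into the grouped space. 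Proposition~\ref{sm:prop:isometry-invariance} then shows that the embedded \(\rho_*\) keeps \(\Rs-\Rg\ge1/120\) by \eqref{sm:eq:2x3-strict-gap}. This gives the forward implication by contraposition, which is exactly \eqref{sm:eq:all-dimensions-classification} applied to \((m,n)=(d_S,d_{\bar S})\).

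I expect the main obstacle to be justifying that the grouped cut-separable cone coincides with the bipartite \(\SEP_+\) and that the isometry argument is legitimate there. In particular, Proposition~\ref{sm:prop:isometry-invariance} was stated with the actual separable cone and must not rely on any PPT exactness in the larger space. The point to check is that the retraction maps \(\mathcal R_j\) act on the whole grouped side and stay local with respect to the cut, so they preserve cut-separability. The isometry into a grouped side may mix the individual parties, but that is harmless because only the cut matters. Exchanging subsystems is likewise a unitary swap that preserves both the separable cone and positivity, so the \(3\otimes2\) case reduces to \(2\otimes3\).
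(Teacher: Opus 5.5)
Your proposal is correct and follows essentially the same route as the paper. You identify the cut-separable cone with the bipartite $\SEP_+$ on $\mathbb C^{d_S}\otimes\mathbb C^{d_{\bar S}}$ by grouping, dispose of the case $\min\{d_S,d_{\bar S}\}=1$, and then apply the two-qubit equality theorem together with the locally embedded $2\otimes3$ counterexample (via Proposition~\ref{sm:prop:isometry-invariance}), exactly as Eq.~\eqref{sm:eq:all-dimensions-classification} is invoked there.
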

\begin{proof}
Grouping the factors in \(S\) and \(\bar S\) identifies \(\mathcal H\) with
\(\mathbb C^{d_S}\otimes\mathbb C^{d_{\bar S}}\). Under this identification, the free cone is exactly
\[
 \SEP_+^{S|\bar S}
 =\left\{\sum_r X_r^{S}\otimes X_r^{\bar S}:X_r^S,X_r^{\bar S}\succeq0\right\}.
\]
Thus both optimizations become the bipartite definitions with these effective dimensions. If either dimension is one, every state is separable and both robustnesses vanish. Otherwise Eq.~\eqref{sm:eq:all-dimensions-classification} applies. In particular, in each larger nontrivial effective dimension, the local embeddings of Proposition~\ref{sm:prop:isometry-invariance} place a separating state in \(\mathcal H\).
\end{proof}

This completes the dimension classification of universal equality for finite-dimensional bipartite systems, and applies to any fixed bipartition of a multipartite system. Here each grouped side may contain internal entanglement. Separability across a cut is not full separability among all parties, nor a union or mixture of different cuts. The corollary is applied to each chosen cut with its own free cone; it makes no assertion that arbitrary cuts satisfy universal equality. It also says nothing about tensor powers, smoothing, or asymptotic operational tasks.

\section{Full separability: multipartite classification and GHZ certificates}
\label{sm:sec:fs-multipartite}
This section concerns a different free set from the fixed-cut cone of
Corollary~\ref{sm:cor:fixed-bipartition}.  Let
\(
 \mathcal H_N=\bigotimes_{j=1}^N\mathbb C^{d_j}
\)
with finite \(d_j\ge1\), and define
\begin{equation}
 \FS_N=\operatorname{conv}\left\{
  \bigotimes_{j=1}^N\proj{x_j}:\ket{x_j}\in\mathbb C^{d_j},\
  \braket{x_j}{x_j}=1
 \right\},
 \qquad
 K_N=\{t\sigma:t\ge0,\ \sigma\in\FS_N\}.
 \label{sm:eq:fs-set}
\end{equation}
For a density operator \(\rho\) on \(\mathcal H_N\), set
\begin{align}
 R_{\mathrm s}^{\FS_N}(\rho)
 &=\inf\left\{t\ge0:
   \frac{\rho+t\tau}{1+t}\in\FS_N,\ \tau\in\FS_N\right\},
 \label{sm:eq:fs-rs}
 \\
 R_{\mathrm g}^{\FS_N}(\rho)
 &=\inf\left\{t\ge0:
   \frac{\rho+t\tau}{1+t}\in\FS_N,\ \tau\in\mathcal D(\mathcal H_N)\right\}.
 \label{sm:eq:fs-rg}
\end{align}
Here \(\mathcal D(\mathcal H_N)\) denotes the set of density operators.  The
variable \(t\) is a noise-to-signal ratio, so its associated noise
probability is \(t/(1+t)\).  The notation \(\FS_N\) is used only in this
section and in the main-text classification.

We use the following weak-duality bounds. If a Hermitian \(Z\) obeys
\(0\le\Tr(Z\sigma)\le1\) for every \(\sigma\in\FS_N\), then
\(R_{\mathrm s}^{\FS_N}(\rho)\ge-\Tr(Z\rho)\).
If instead \(Z\) is nonnegative on \(\FS_N\) and \(Z\preceq I\), the
same bound holds for \(R_{\mathrm g}^{\FS_N}\).
Indeed, every feasible identity \(\rho+t\tau=(1+t)\sigma\) gives
\(-\Tr(Z\rho)=t\Tr(Z\tau)-(1+t)\Tr(Z\sigma)\le t\).

\begin{lemma}[Free maps, spectators, and local embeddings]
\label{sm:lem:fs-free-maps}
If a completely positive trace-preserving map \(\Phi\) sends \(\FS_N\) into
\(\FS_M\), then
\(R_\nu^{\FS_M}(\Phi(\rho))\le R_\nu^{\FS_N}(\rho)\) for
\(\nu\in\{\mathrm s,\mathrm g\}\).  Equality holds when \(\Phi\) has a
free CPTP left inverse (a CPTP left inverse that sends \(\FS_M\) into
\(\FS_N\)).  In particular, for any \(\eta\in\FS_L\),
\begin{equation}
 R_\nu^{\FS_{N+L}}(\rho\otimes\eta)=R_\nu^{\FS_N}(\rho),
 \qquad \nu\in\{\mathrm s,\mathrm g\},
 \label{sm:eq:fs-spectators}
\end{equation}
and local isometries preserve both values.
\end{lemma}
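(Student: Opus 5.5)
The plan is to prove the monotonicity statement first by pushing feasible decompositions forward through \(\Phi\). Then equality follows by applying monotonicity to both \(\Phi\) and its free left inverse. Finally, the spectator and embedding claims are obtained by exhibiting explicit free maps with free left inverses.

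\emph{Monotonicity.} Take any feasible identity \(\rho+t\tau=(1+t)\sigma\) with \(\sigma\in\FS_N\). For \(\nu=\mathrm s\) we also have \(\tau\in\FS_N\); for \(\nu=\mathrm g\) we only have \(\tau\in\mathcal D(\mathcal H_N)\). Applying the linear map \(\Phi\) gives
\[
\Phi(\rho)+t\Phi(\tau)=(1+t)\Phi(\sigma).
\]
Here \(\Phi(\sigma)\in\FS_M\) by hypothesis. Trace preservation and positivity make \(\Phi(\tau)\) a state, and it lies in \(\FS_M\) in the standard case. So every feasible \(t\) for \(\rho\) is feasible for \(\Phi(\rho)\), and taking infima gives the inequality. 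If \(\Psi\) is a free CPTP left inverse, the same argument applied to \(\Psi\) gives \(R_\nu^{\FS_N}(\rho)=R_\nu^{\FS_N}(\Psi\Phi(\rho))\le R_\nu^{\FS_M}(\Phi(\rho))\), which yields equality.

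\emph{Spectators.} Let \(\Phi(X)=X\otimes\eta\), which is CPTP. It maps \(\FS_N\) into \(\FS_{N+L}\): a product \(\bigotimes_j\proj{x_j}\) tensored with a convex combination of product projectors is a convex combination of \((N+L)\)-fold product projectors. The left inverse is the partial trace over the last \(L\) parties. It is CPTP, and it sends each \((N+L)\)-fold pure product projector to an \(N\)-fold one. By linearity and convexity it therefore maps \(\FS_{N+L}\) into \(\FS_N\). Equation~\eqref{sm:eq:fs-spectators} then follows from the equality clause.

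\emph{Local isometries.} For \(V=\bigotimes_jV_j\), the map \(X\mapsto VXV^\dagger\) sends \(\bigotimes_j\proj{x_j}\) to \(\bigotimes_j\proj{V_jx_j}\), with each \(V_jx_j\) normalized. Hence it maps \(\FS_N\) into the corresponding full-separability set. For the left inverse I would take \(\bigotimes_j\mathcal R_j\), built from the single-party retractions \eqref{sm:eq:local-retraction} of Proposition~\ref{sm:prop:isometry-invariance}. This map is CPTP and satisfies \(\mathcal R_j(V_jXV_j^\dagger)=X\).

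The one point needing care, and the main potential obstacle, is that \(\bigotimes_j\mathcal R_j\) must preserve full separability. Each \(\mathcal R_j\) is positive, so it sends \(\proj{y_j}\) to a positive operator with trace one, i.e.\ a density operator on \(\mathbb C^{d_j}\). Such a density operator is a convex combination of pure states. A tensor product of such local mixtures expands into a convex combination of pure product projectors, so \(\bigotimes_j\mathcal R_j\) maps \(\FS_M\) into \(\FS_N\). Everything else is routine linearity. The only subtlety is keeping the two cases \(\nu=\mathrm s,\mathrm g\) parallel, which works because \(\Phi\) preserves states in general and preserves full separability by hypothesis.
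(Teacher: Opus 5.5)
Your proposal is correct and follows essentially the same route as the paper: you push a feasible identity \(\rho+t\tau=(1+t)\sigma\) through \(\Phi\), using trace preservation for the generalized noise and the free-map hypothesis for the standard noise and the target, and you get the reverse inequality from the free left inverse. You then instantiate this with \(X\mapsto X\otimes\eta\) and the partial trace for spectators, and with \(\bigotimes_j\mathcal R_j\) for local isometries; your explicit check that \(\bigotimes_j\mathcal R_j\) sends pure product projectors to tensor products of local density operators, hence into the fully separable set, spells out what the paper asserts in one line.
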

\begin{proof}
Applying \(\Phi\) to a feasible identity
\(\rho+t\tau=(1+t)\sigma\) preserves the free target and the
admissible noise set, proving monotonicity.  A free left inverse gives the
reverse inequality.  For \eqref{sm:eq:fs-spectators}, use
\(X\mapsto X\otimes\eta\) and the partial trace over the added parties.
For local isometries \(V_j\), the reverse map is the tensor product of
the trace-preserving maps
\begin{equation}
 \mathcal R_j(X)=V_j^\dagger X V_j+
 \Tr[(I-V_jV_j^\dagger)X]\,\tau_j,
 \label{sm:eq:fs-retraction}
\end{equation}
where \(\tau_j\) is any local state.  Each \(\mathcal R_j\) is the sum of a
compression and a measure-and-prepare branch and therefore is
completely positive; it recovers the embedded input and preserves full
separability.
\end{proof}

\subsection{\texorpdfstring{All-\(n\) GHZ values}{All-n GHZ values}}
Contreras--Tejada, Palazuelos, and de Vicente proved in Supplemental
Material, Sec.~II, Lemmas~1 and~2, that the three-qubit GHZ and \(W\)
states both have standard robustness two with respect to full
separability \cite{Contreras2019}.  In our notation,
\begin{equation}
 R_{\mathrm s}^{\FS}(G_3)=R_{\mathrm s}^{\FS}(W_3)=2,
 \qquad
 \ket{w_3}=\frac{\ket{001}+\ket{010}+\ket{100}}{\sqrt3},
 \quad W_3=\proj{w_3}.
 \label{sm:eq:contreras-three-qubit-standard-values}
\end{equation}
Their Lemma~1 gives a fully separable GHZ-symmetric decomposition and a
matching witness in Eqs.~(12)--(18); Lemma~2 gives a fully separable
decomposition and a matching witness for \(W_3\) in Eqs.~(19)--(38).
After defining generalized robustness in Eq.~(39), the same source
records the known values
\begin{equation}
 R_{\mathrm g}^{\FS}(G_3)=1,\qquad
 R_{\mathrm g}^{\FS}(W_3)=\frac54,
 \label{sm:eq:contreras-three-qubit-generalized-values}
\end{equation}
so both states exhibit strict separation.  The generalized GHZ value
for arbitrary \(n\) also follows from the stabilizer-state result of
Ref.~\cite{Hayashi2008}, Sec.~III~A, Eqs.~(26)--(31), together with the
maximum squared product overlap \(1/2\).  The construction below supplies
explicit primal and dual certificates for both GHZ robustnesses at every
\(n\).

For \(n\ge2\), set \(N=n\), \(d_j=2\) for every party, and put \(D=2^n\).
\begin{align}
 \ket{g_n^\pm}&=\frac{\ket{0^n}\pm\ket{1^n}}{\sqrt2},
 &G_n^\pm&=\proj{g_n^\pm},
 \nonumber\\
 E_n&=\proj{0^n}+\proj{1^n},
 &\Pi_n&=I_D-E_n,
 \nonumber\\
 C_n&=\ket{0^n}\bra{1^n}+\ket{1^n}\bra{0^n},
 &\sigma_\pm&=\frac{I_D\pm C_n}{D}.
 \label{sm:eq:fs-ghz-operators}
\end{align}

\begin{lemma}[Explicit fully separable phase mixtures]
\label{sm:lem:fs-phase-mixtures}
Both \(\sigma_+\) and \(\sigma_-\) belong to \(\FS_N\) for the
\(n\)-qubit partition.
\end{lemma}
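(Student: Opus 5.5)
The plan is to write \(\sigma_\pm\) explicitly as a uniform phase average of pure product states. For a phase vector \(\theta=(\theta_1,\ldots,\theta_n)\in[0,2\pi)^n\), let
\begin{equation*}
 \ket{x_j(\theta_j)}=\frac{\ket0+e^{i\theta_j}\ket1}{\sqrt2},
 \qquad
 P(\theta)=\bigotimes_{j=1}^n\proj{x_j(\theta_j)}.
\end{equation*}
Each \(P(\theta)\) is a normalized fully product state. Expanding the tensor product gives
\begin{equation*}
 P(\theta)=\frac1D\sum_{s,t\in\{0,1\}^n}e^{i\sum_j\theta_j(s_j-t_j)}\ket{s}\bra{t}.
\end{equation*}

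The key step is to average over a probability measure on \(\theta\) that kills every off-diagonal term except \(\ket{0^n}\bra{1^n}\) and its adjoint. This requires \(\mathbb E[e^{i\theta\cdot(s-t)}]=0\) for all \(s\neq t\) except \(s-t=\pm(1,\ldots,1)\). We also need \(\mathbb E[e^{\mp i\sum_j\theta_j}]=\pm1\) for \(\sigma_\pm\).

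For the measure, I would take \(\theta_1,\ldots,\theta_{n-1}\) independent and uniform on \([0,2\pi)\), and set \(\theta_n=\phi-\sum_{j<n}\theta_j\), with \(\phi=0\) for \(\sigma_+\) and \(\phi=\pi\) for \(\sigma_-\). For a difference vector \(k=s-t\in\{-1,0,1\}^n\), the exponent is
\begin{equation*}
 \sum_{j<n}(k_j-k_n)\theta_j+k_n\phi.
\end{equation*}
Its average vanishes unless \(k_j=k_n\) for all \(j<n\). So only \(k=0\), which gives the diagonal \(I_D/D\), and \(k=\pm(1,\ldots,1)\) survive. The latter have coefficient \(e^{\pm i\phi}/D=\pm1/D\), which yields \(\pm C_n/D\).

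Hence \(\mathbb E_\theta P(\theta)=(I_D\pm C_n)/D=\sigma_\pm\). Finally I would note that the average can be replaced by a finite convex combination: taking \(\theta_j\) uniform over the \(m\)-th roots of unity with \(m\ge3\) gives the same vanishing of characters for the nonzero integer frequencies \(|k_j-k_n|\le2\). This exhibits \(\sigma_\pm\) explicitly as elements of \(\FS_N\), matching the convex-hull definition, and avoids relying on closedness of the convex hull.

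The main obstacle is choosing the phase correlation so that exactly the GHZ coherence survives. The constraint \(\sum_j\theta_j=\phi\) is the natural choice, and the bookkeeping of which frequencies cancel is the step to check carefully, including the \(n=2\) case and the need for \(m\ge3\).
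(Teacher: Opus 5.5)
Your proposal is correct and is essentially the paper's proof: the same correlated phases $\theta_n=\chi-\sum_{j<n}\theta_j$ with $\chi\in\{0,\pi\}$, and the same character-cancellation argument over the frequencies $k_j-k_n\in\{-2,\ldots,2\}$. The only difference is that the paper works directly with the finite $m=3$ average (cube roots of unity), without passing through the continuous one. One small notational point: both surviving coherences carry the same factor $e^{i\phi}=e^{-i\phi}=\pm1$, because $\phi\in\{0,\pi\}$; the $\pm$ in your ``$e^{\pm i\phi}/D=\pm1/D$'' should not be read as matched signs.
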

\begin{proof}
Let \(\ket{+_\theta}=(\ket0+e^{i\theta}\ket1)/\sqrt2\).  For
\(\chi\in\{0,\pi\}\), choose \(k_1,\ldots,k_{n-1}\in\{0,1,2\}\), set
\(\theta_j=2\pi k_j/3\) for \(j<n\),
\(\theta_n=\chi-\sum_{j<n}\theta_j\), and average the product projectors:
\begin{equation}
 \Omega_\chi=\frac1{3^{n-1}}\sum_{k_1,\ldots,k_{n-1}=0}^{2}
 \bigotimes_{j=1}^n\proj{+_{\theta_j}}.
 \label{sm:eq:fs-phase-twirl}
\end{equation}
For computational strings \(x,y\), the phase average contains the
factor
\(\prod_{j<n}\exp[2\pi i k_j((x_j-y_j)-(x_n-y_n))/3]\).
The integer in each exponent lies in \(\{-2,-1,0,1,2\}\), so the
average vanishes unless it is zero.  Only diagonal entries and the two
all-zero/all-one coherences survive.  Hence
\(\Omega_0=\sigma_+\) and \(\Omega_\pi=\sigma_-\), each as a finite
convex mixture of pure product projectors.
\end{proof}

\begin{lemma}[GHZ witnesses]
\label{sm:lem:fs-ghz-witnesses}
Let \(\beta_n=D/4=2^{n-2}\) and
\(\alpha_n=D/[2(D-2)]\).  The Hermitian operators
\begin{equation}
 Z_{\mathrm s}=\alpha_n\Pi_n-\beta_n C_n,
 \qquad Z_{\mathrm g}=Z_{\mathrm s}/\beta_n
 \label{sm:eq:fs-ghz-witnesses}
\end{equation}
obey \(0\le\Tr(Z_{\mathrm s}\sigma)\le1\) for all
\(\sigma\in\FS_N\), while \(Z_{\mathrm g}\) is nonnegative on
\(\FS_N\) and satisfies \(Z_{\mathrm g}\preceq I_D\).
\end{lemma}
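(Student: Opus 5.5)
The plan is to reduce both claims about \(\FS_N\) to pure product states and then use two elementary inequalities for the moduli of the local amplitudes. By convexity of \(\sigma\mapsto\Tr(Z\sigma)\), it suffices to check \(0\le\bra{x}Z_{\mathrm s}\ket{x}\le1\) for normalized product vectors \(\ket x=\bigotimes_j\ket{x_j}\), where \(\ket{x_j}=(x_{j0},x_{j1})\). Put \(p_j=|x_{j0}|^2\), \(q_j=1-p_j\), \(P=\prod_jp_j\), \(Q=\prod_jq_j\) and \(s=\sqrt{PQ}\). Then \(\bra x\Pi_n\ket x=1-P-Q\). Also \(\bra xC_n\ket x=2\operatorname{Re}\prod_j\overline{x_{j0}}x_{j1}\), whose modulus is at most \(2s\). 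Since \(2\beta_n=D/2\), this gives
\[
 \alpha_n(1-P-Q)-\tfrac D2 s\;\le\;\bra xZ_{\mathrm s}\ket x\;\le\;\alpha_n(1-P-Q)+\tfrac D2 s .
\]

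\emph{Lower bound.} Expand \(1=\prod_j(p_j+q_j)\) as a sum of \(D\) monomials \(m_y=\prod_j r_j^{(y_j)}\), one for each bit string \(y\), with \(r^{(0)}_j=p_j\) and \(r^{(1)}_j=q_j\). The strings \(0^n\) and \(1^n\) give \(P\) and \(Q\). The remaining \(D-2\) strings pair off with their bitwise complements \(\bar y\), and \(m_y m_{\bar y}=PQ\). By AM--GM, \(m_y+m_{\bar y}\ge2s\) for each pair, so
\[
 1-P-Q\ \ge\ (D-2)s .
\]
Hence \(\alpha_n(1-P-Q)\ge \tfrac{D}{2(D-2)}(D-2)s=\tfrac D2 s\), and the lower bound \(\bra xZ_{\mathrm s}\ket x\ge0\) follows. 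This is the key step. The choice of \(\alpha_n\) is exactly the one that makes the uniform point \(p_j=\tfrac12\) tight.

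\emph{Upper bound.} Use \(P+Q\ge2s\) and \(s\le\prod_j\tfrac12=1/D\). These give
\[
 \bra xZ_{\mathrm s}\ket x\le\alpha_n(1-2s)+\tfrac D2 s=\alpha_n+s\Bigl(\tfrac D2-2\alpha_n\Bigr).
\]
For \(D\ge4\), the coefficient \(\tfrac D2-2\alpha_n=\tfrac D2-\tfrac D{D-2}\) is nonnegative, so the right-hand side is maximized at \(s=1/D\). Its value there is \(\alpha_n(1-2/D)+\tfrac12=\tfrac12+\tfrac12=1\).

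\emph{Properties of \(Z_{\mathrm g}\).} Nonnegativity of \(Z_{\mathrm g}=Z_{\mathrm s}/\beta_n\) on \(\FS_N\) is inherited from that of \(Z_{\mathrm s}\). For \(Z_{\mathrm g}\preceq I_D\), note that \(\Pi_n\) and \(C_n\) act on orthogonal subspaces. On \(\operatorname{ran}\Pi_n\), \(Z_{\mathrm g}\) equals \(\alpha_n/\beta_n=2/(D-2)\le1\) for \(D\ge4\). On \(\operatorname{span}\{\ket{0^n},\ket{1^n}\}\), it equals \(-C_n\), whose eigenvalues are \(\pm1\). So \(Z_{\mathrm g}\preceq I_D\).

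I expect the main obstacle to be the lower bound, specifically finding the pairing argument for \(1-P-Q\ge(D-2)\sqrt{PQ}\). The coarser bound \(\sqrt P+\sqrt Q\le1\) is too weak there: it only gives \(1-P-Q\ge2s\). The rest is routine algebra.
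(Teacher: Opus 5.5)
Your proposal is correct and follows essentially the same route as the paper: reduce to pure product vectors by convexity, pair each non-constant bit string with its complement to get \(1-P-Q\ge(D-2)\sqrt{PQ}\) for the lower bound, and use \(P+Q\ge2\sqrt{PQ}\) and \(\sqrt{PQ}\le1/D\) for the upper bound. You also read \(Z_{\mathrm g}\preceq I_D\) off its block spectrum as the paper does, and you make explicit the sign condition \(D/2-2\alpha_n\ge0\) (i.e.\ \(\beta_n\ge\alpha_n\) for \(D\ge4\)) that the paper leaves implicit.
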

\begin{proof}
For \(\ket v=\bigotimes_j(a_j\ket0+b_j\ket1)\), with
\(|a_j|^2+|b_j|^2=1\) for every \(j\), define
\(A=\prod_j|a_j|^2\), \(B=\prod_j|b_j|^2\),
\(h=\prod_j a_j^*b_j\), and \(r=|h|=\sqrt{AB}\).  If
\(p_x=|\braket{x}{v}|^2\), then \(p_xp_{\bar x}=r^2\), where
\(\bar x\) is the bitwise complement of \(x\).  Pairing all
\(D-2\) strings other than \(0^n,1^n\) gives
\(1-A-B\ge(D-2)r\), while \(A+B\ge2r\) and \(r\le1/D\).  Therefore
\begin{align}
 \bra vZ_{\mathrm s}\ket v
 &\ge[\alpha_n(D-2)-2\beta_n]r=0,\nonumber\\
 \bra vZ_{\mathrm s}\ket v
 &\le\alpha_n(1-2r)+2\beta_nr
 \le\alpha_n+\frac{2(\beta_n-\alpha_n)}D=1.
 \label{sm:eq:fs-ghz-product-bounds}
\end{align}
Convexity extends these bounds to \(\FS_N\).  The spectrum of
\(Z_{\mathrm g}\) is
\begin{equation}
 \operatorname{spec}(Z_{\mathrm g})=
 \{-1,1,\underbrace{2/(D-2),\ldots,2/(D-2)}_{D-2\ \mathrm{times}}\},
 \label{sm:eq:fs-ghz-spectrum}
\end{equation}
so \(Z_{\mathrm g}\preceq I_D\) and the generalized witness condition
follows.
\end{proof}

\begin{theorem}[Exact GHZ robustness]
\label{sm:thm:fs-ghz}
For every \(n\ge2\),
\begin{equation}
 \boxed{R_{\mathrm g}^{\FS_N}(G_n^+)=1,\qquad
 R_{\mathrm s}^{\FS_N}(G_n^+)=2^{n-2}.}
 \label{sm:eq:fs-ghz-values}
\end{equation}
\end{theorem}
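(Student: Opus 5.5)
The plan is to sandwich each value between a weak-duality lower bound, using the witnesses of Lemma~\ref{sm:lem:fs-ghz-witnesses}, and an explicit fully separable decomposition built from the phase mixtures of Lemma~\ref{sm:lem:fs-phase-mixtures}.

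\emph{Lower bounds.} We have \(\Tr(\Pi_nG_n^+)=0\) and \(\Tr(C_nG_n^+)=1\). Hence \(-\Tr(Z_{\mathrm s}G_n^+)=\beta_n=2^{n-2}\), and \(-\Tr(Z_{\mathrm g}G_n^+)=1\). The witness properties in Lemma~\ref{sm:lem:fs-ghz-witnesses} and the weak-duality bounds stated at the start of this section then give
\[
R_{\mathrm s}^{\FS_N}(G_n^+)\ge2^{n-2},\qquad R_{\mathrm g}^{\FS_N}(G_n^+)\ge1 .
\]

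\emph{Generalized upper bound.} Add the entangled noise \(G_n^-\) with weight one. This gives \(G_n^++G_n^-=E_n=\proj{0^n}+\proj{1^n}\), which is twice a fully separable state. So \(t=1\) is feasible and \(R_{\mathrm g}^{\FS_N}(G_n^+)\le1\).

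\emph{Standard upper bound.} The idea is to write \(G_n^+\) as a combination of \(\sigma_+\) and diagonal product states. Since \(G_n^+=(E_n+C_n)/2\) and \(C_n=D\sigma_+-I_D\), we get
\[
G_n^++\tfrac12\Pi_n=\tfrac{D}{2}\sigma_+ .
\]
The noise \(\tfrac12\Pi_n\) is a sum of computational-basis product projectors, so it lies in \(K_N\), with trace \((D-2)/2\). By Lemma~\ref{sm:lem:fs-phase-mixtures} the target is in \(K_N\), with trace \(D/2\). This gives the bound \((D-2)/2\), which exceeds \(2^{n-2}=D/4\) once \(D>4\), so it is not tight; it must be improved.

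A better route uses the observation that \(\sigma_+-\sigma_-=2C_n/D\). Write
\[
G_n^+=\tfrac12E_n+\tfrac{D}{4}(\sigma_+-\sigma_-),
\]
and take the noise \(Y=\tfrac D4\sigma_-\), which is fully separable with trace \(D/4=2^{n-2}\). Then
\[
G_n^++Y=\tfrac12E_n+\tfrac D4\sigma_+ ,
\]
a positive combination of fully separable operators. It therefore lies in \(K_N\), with trace \(1+D/4\). This gives \(R_{\mathrm s}^{\FS_N}(G_n^+)\le2^{n-2}\), which matches the lower bound.

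The only real obstacle is this standard upper bound: choosing \(\sigma_-\) itself as the noise, so that its coherence \(-C_n/D\) exactly cancels the excess and no weight is wasted on \(\Pi_n\). Lemma~\ref{sm:lem:fs-phase-mixtures} supplies the needed full separability of both \(\sigma_\pm\). What remains is to check positivity and the normalizations \(\Tr\sigma_\pm=1\), which are routine. For \(n=2\) the result reduces to both values equal to one, consistent with the two-qubit equality.
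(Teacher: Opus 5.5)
Your proposal is correct and follows the paper's proof essentially verbatim. The lower bounds come from the same witnesses $Z_{\mathrm s}$ and $Z_{\mathrm g}$, and the matching upper bounds come from the same two identities, $G_n^++G_n^-=E_n$ and $G_n^++\beta_n\sigma_-=\beta_n\sigma_++\tfrac12E_n$. Your initial $\tfrac12\Pi_n$ attempt is a harmless detour, and you were right to discard it.
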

\begin{proof}
The identities
\begin{equation}
 G_n^++G_n^-=E_n,
 \qquad
 G_n^++\beta_n\sigma_-
 =\beta_n\sigma_++\frac12E_n
 \label{sm:eq:fs-ghz-primal-identities}
\end{equation}
give generalized and standard feasible costs \(1\) and \(\beta_n\),
respectively: \(E_n\in K_N\) and Lemma~\ref{sm:lem:fs-phase-mixtures}
supplies the two phase mixtures.  Lemma~\ref{sm:lem:fs-ghz-witnesses} gives
the matching lower bounds \(-\Tr(Z_{\mathrm g}G_n^+)=1\) and
\(-\Tr(Z_{\mathrm s}G_n^+)=\beta_n\).
\end{proof}

The full-separability threshold in the following family is known
\cite{DurCiracTarrach1999}, Sec.~3.3, Eq.~(9). With that paper's GHZ
weight \(x=1-p\), its condition \(x\le1/(1+2^{n-1})\) is
\(p\ge D/(D+2)\). We give matching robustness certificates throughout this family.

\begin{theorem}[Full-rank white-noise GHZ family]
\label{sm:thm:fs-noisy-ghz}
For \(0\le p\le1\), let
\begin{equation}
 \rho_{n,p}=(1-p)G_n^++pI_D/D,
 \qquad p_*={D}/{(D+2)}.
 \label{sm:eq:fs-noisy-ghz}
\end{equation}
Then
\begin{align}
 R_{\mathrm g}^{\FS_N}(\rho_{n,p})
 &=\max\{0,1-p-2p/D\},\nonumber\\
 R_{\mathrm s}^{\FS_N}(\rho_{n,p})
 &=\max\{0,D(1-p)/4-p/2\}
 =2^{n-2}R_{\mathrm g}^{\FS_N}(\rho_{n,p}),
 \label{sm:eq:fs-noisy-ghz-values}
\end{align}
and \(\rho_{n,p}\in\FS_N\) exactly when \(p\ge p_*\).
\end{theorem}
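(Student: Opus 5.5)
The plan is to give explicit primal decompositions and dual certificates that are affine in \(p\) and match on \(0\le p\le p_*\). The regime \(p\ge p_*\) is handled by showing directly that \(\rho_{n,p}\) is fully separable. All fully separable building blocks come from Lemma~\ref{sm:lem:fs-phase-mixtures}, which gives \(\sigma_\pm\in\FS_N\), together with the diagonal operator \(E_n\in K_N\). The lower bounds come from the witnesses of Lemma~\ref{sm:lem:fs-ghz-witnesses} and the weak-duality bounds stated at the start of this section.

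First I will record the basic decomposition
\[
\rho_{n,p}=\tfrac{1-p}{2}E_n+\tfrac{1-p}{2}C_n+p\,I_D/D .
\]
Since \(p\,\sigma_+=pI_D/D+pC_n/D\), the only obstruction to full separability is the excess coherence \((1-p)/2-p/D\) on \(C_n\).

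\emph{Membership.} If \(p\ge p_*\), equivalently \((1-p)/2\le p/D\), I will rewrite the state as
\[
\rho_{n,p}=\tfrac{1-p}{2}E_n+c\,\sigma_++(p-c)I_D/D,\qquad c=\tfrac{D(1-p)}{2}\le p .
\]
This is a nonnegative combination of elements of \(K_N\), and it has trace one, so \(\rho_{n,p}\in\FS_N\) and both robustnesses vanish. The converse, that \(\rho_{n,p}\notin\FS_N\) for \(p<p_*\), will follow from the positive witness value obtained below.

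\emph{Upper bounds for \(p<p_*\).*} For the generalized robustness I add \(tG_n^-\) with \(t=1-p-2p/D\ge0\). The added noise cancels coherence, and
\[
\rho_{n,p}+tG_n^-=\tfrac{1-p+t}{2}E_n+p\,\sigma_+\in K_N ,
\]
which gives a feasible cost of \(t\). For the standard robustness I add \(s\sigma_-\), which is free. This gives
\[
\rho_{n,p}+s\sigma_-=\tfrac{1-p}{2}E_n+\Bigl(\tfrac{1-p}{2}-\tfrac sD\Bigr)C_n+(p+s)I_D/D .
\]
Choosing \(s=D(1-p)/4-p/2\) makes the coherence equal to \((p+s)/D\), so the sum equals \(\tfrac{1-p}{2}E_n+(p+s)\sigma_+\in K_N\). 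In both cases trace bookkeeping confirms the normalization \(1+t\), respectively \(1+s\).

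\emph{Matching lower bounds.} I will evaluate the witnesses of Lemma~\ref{sm:lem:fs-ghz-witnesses} on the state. Using \(\Tr(Z_{\mathrm s}G_n^+)=-\beta_n\) and \(\Tr(Z_{\mathrm s}I_D)/D=\alpha_n(D-2)/D=1/2\), I obtain
\[
-\Tr(Z_{\mathrm s}\rho_{n,p})=(1-p)D/4-p/2 .
\]
Dividing by \(\beta_n\) gives \(-\Tr(Z_{\mathrm g}\rho_{n,p})=1-p-2p/D\). Weak duality then matches both upper bounds, and together with \(R\ge0\) this yields the stated \(\max\) formulas. The two values are related by the factor \(\beta_n=2^{n-2}\). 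Finally, positivity of the standard value for \(p<p_*\) gives non-membership there.

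The main obstacle is only bookkeeping: verifying that each decomposition consists of nonnegative multiples of \(E_n\), \(\sigma_\pm\) and \(I_D\) exactly in the stated \(p\)-range, and that the witness traces are computed correctly, in particular \(\alpha_n(D-2)=D/2\). No new inequality beyond Lemma~\ref{sm:lem:fs-ghz-witnesses} is needed.
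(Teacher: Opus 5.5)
Your proposal is correct and follows the paper's proof essentially step for step: the same noises \(G_n^-\) and \(\sigma_-\) with the same weights, the same free targets \(p\sigma_+ + (1-p-p/D)E_n\) and \((p+s)\sigma_+ + \frac{1-p}{2}E_n\), the same witnesses \(Z_{\mathrm s}\) and \(Z_{\mathrm g}\), and a membership decomposition for \(p\ge p_*\) that coincides with the paper's \(\kappa\sigma_+ + (p/D)E_n + [p/D-(1-p)/2]\Pi_n\) once you substitute \(I_D=E_n+\Pi_n\). Your observation that non-membership for \(p<p_*\) follows from the strictly positive standard value only makes explicit a step the paper leaves implicit.
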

\begin{proof}
For \(p\le p_*\), put
\(s=1-p-2p/D\) and \(t=D(1-p)/4-p/2\).  Direct expansion gives
\begin{align}
 \rho_{n,p}+sG_n^-&=p\sigma_++(1-p-p/D)E_n,\nonumber\\
 \rho_{n,p}+t\sigma_-&=(p+t)\sigma_++(1-p)E_n/2.
 \label{sm:eq:fs-noisy-primal}
\end{align}
All coefficients are nonnegative in this interval.  Since
\(\Tr(Z_{\mathrm s}I_D/D)=1/2\) and
\(\Tr(Z_{\mathrm g}I_D/D)=2/D\), the witnesses in
Lemma~\ref{sm:lem:fs-ghz-witnesses} give the same lower bounds.
For \(p\ge p_*\), write \(\kappa=D(1-p)/2\) and use
\begin{equation}
 \rho_{n,p}=\kappa\sigma_++(p/D)E_n+
 \left[p/D-(1-p)/2\right]\Pi_n.
 \label{sm:eq:fs-noisy-free}
\end{equation}
Every term is fully separable with a nonnegative coefficient, proving
the final assertion and the zero branch.  For \(p>0\),
\(\rho_{n,p}\succeq(p/D)I_D\), so the strict gap for \(n\ge3\) and
\(0<p<p_*\) occurs on full-rank states.
\end{proof}

\subsection{Complete finite-dimensional FS classification}
\begin{theorem}[Universal equality for the fully separable free set]
\label{sm:thm:fs-classification}
Delete all one-dimensional parties and let \(k\) be the number of remaining
parties.  For finite local dimensions,
\begin{equation}
 \left[R_{\mathrm s}^{\FS_N}(\rho)=R_{\mathrm g}^{\FS_N}(\rho)
 \ \text{for every }\rho\right]
 \Longleftrightarrow
 \left[k\le1\ \text{or}\ \bigl(k=2\ \text{and both dimensions are }2\bigr)\right].
 \label{sm:eq:fs-classification}
\end{equation}
Every other finite system contains a strict-separation state.
\end{theorem}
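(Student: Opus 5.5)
The proof splits along the two directions of Eq.~\eqref{sm:eq:fs-classification}. First, use Lemma~\ref{sm:lem:fs-free-maps} to remove one-dimensional parties. A party with \(d_j=1\) is a trivial tensor factor, and the identification \(\mathbb C^{d}\otimes\mathbb C\cong\mathbb C^{d}\) is a free isometry with a free inverse. Therefore both \(R_\nu^{\FS_N}\) are unchanged, and we may assume all \(k\) remaining parties have \(d_j\ge2\).

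\emph{Sufficiency.} If \(k=0\), the space is one-dimensional and the only state is free. If \(k=1\), every state is a convex mixture of pure states on a single party, so \(\FS=\mathcal D\) and both robustnesses vanish. If \(k=2\) with \(d_1=d_2=2\), full separability coincides with bipartite separability. Theorem~\ref{sm:thm:equality} then gives \(R_{\mathrm s}=R_{\mathrm g}\) for every state.

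\emph{Necessity.} Assume we are in neither case; then either \(k=2\) with \((d_1,d_2)\ne(2,2)\), or \(k\ge3\).
\begin{itemize}
\item For \(k=2\): the two-party fully separable set is exactly \(\SEP\). Proposition~\ref{sm:prop:isometry-invariance} and the classification \eqref{sm:eq:all-dimensions-classification} supply an embedded copy of \(\rho_*\) with a strictly positive gap.
\item For \(k\ge3\): take the first three parties. By local isometries \(\mathbb C^2\hookrightarrow\mathbb C^{d_j}\), embed \(G_3\), which is separating by Theorem~\ref{sm:thm:fs-ghz} (values \(1\) versus \(2\)), or equivalently by Ref.~\cite{Contreras2019}. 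Then tensor with a fixed product state \(\eta\) on the remaining \(k-3\) parties. Lemma~\ref{sm:lem:fs-free-maps} gives equality of both robustnesses under local isometries and under spectators. Hence \(R_{\mathrm s}^{\FS}=2>1=R_{\mathrm g}^{\FS}\) for the embedded state, which proves the final sentence as well.
\end{itemize}

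\emph{Main obstacle.} The delicate step is the invariance lemma: I must check that the retraction maps \(\mathcal R_j\) and the partial trace over spectators send \(\FS\) into \(\FS\) and send arbitrary states to states. They do, because both are tensor products of local CPTP maps. The reverse inequality for \(R_{\mathrm g}\) also requires that the left inverse maps generalized noise to generalized noise, which holds since it is CPTP.

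A second point needs a short remark. For \(k=2\) the fully separable free set literally equals the bipartite separable set, so the bipartite result applies verbatim. The strict gap in \(2\otimes3\) is certified on \(\rho_*\) itself, so no PPT relaxation is required in the larger dimensions.
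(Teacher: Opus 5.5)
Your proposal is correct and follows essentially the same route as the paper: the trivial cases for $k\le1$, Theorem~\ref{sm:thm:equality} for two qubits, an isometrically embedded copy of $\rho_*$ when $k=2$ and one local dimension is at least three, and an embedded three-qubit GHZ state with pure-product spectators when $k\ge3$, with the gap transported by Lemma~\ref{sm:lem:fs-free-maps}. The differences are cosmetic: you make the removal of one-dimensional parties explicit, and you allow Theorem~\ref{sm:thm:fs-ghz} as an alternative source for the three-qubit gap, whereas the paper relies on Ref.~\cite{Contreras2019} so that the classification does not require the all-$n$ formula.
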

\begin{proof}
For \(k\le1\) every state is fully separable.  For two qubits use
Theorem~\ref{sm:thm:equality}.  If \(k=2\) and one local dimension is at
least three, embed the certified \(2\otimes3\) state of
Sec.~\ref{sm:sec:dimension}; Lemma~\ref{sm:lem:fs-free-maps} preserves its
strict gap. If \(k\ge3\), embed the known three-qubit GHZ separating
state \cite{Contreras2019} into three parties and add pure-product
spectators. Lemma~\ref{sm:lem:fs-free-maps} again preserves the gap.
This proves both directions without requiring the all-\(n\) formula.

Theorem~\ref{sm:thm:fs-ghz} strengthens the last construction: embedding
\(G_k^+\) into a two-dimensional subspace of every nontrivial party
gives values \(1\) and \(2^{k-2}\). This pure state is genuinely
\(k\)-partite entangled, having Schmidt rank two across every
nontrivial bipartition. The white-noise constructions in
Sec.~\ref{sm:sec:dimension} and Theorem~\ref{sm:thm:fs-noisy-ghz} give
full-rank separating states on \(2\otimes3\) and on \(k\) qubits,
respectively. Their isometric embeddings into larger spaces need not
be full rank.
\end{proof}

The classification is existential: it does not say that every state in a
larger system separates the two measures.  The classification applies specifically
to full separability among the displayed parties; fixed-bipartition
separability, biseparable free sets, tensor powers, smoothing, and asymptotic
regularization are separate questions.  For \(n\ge3\), the ratio of the two
GHZ costs is \(2^{n-2}\).  The exponentially growing quantity is the
optimized noise-to-signal ratio, not the noise probability.

\clearpage
\twocolumngrid

\end{document}